\documentclass[10pt,conference]{IEEEtran}

\usepackage{numprint}
\npdecimalsign{\ensuremath{.}}
\newcommand{\np}[1]{\numprint{#1}}
\usepackage[scientific-notation=true]{siunitx}
\sisetup{
round-mode = places,
round-precision = 2
}
\sisetup{round-integer-to-decimal}
\sisetup{retain-zero-exponent}
\newcommand{\sn}[1]{\num[output-exponent-marker = \ensuremath{\mathrm{E}},retain-zero-exponent = true]{#1}}
\newcommand{\snt}[1]{\newcolumntype{1}{{Decimal-centred}}{\sn{#1}}}
\newcommand{\worst}[1]{\bfseries{#1}}
\newcommand{\best}[1]{{\bfseries#1}}

\newcommand{\param}[1]{k_{#1}}
\newcommand{\paramrev}[1]{\overline{k}_{#1}}

\usepackage{amsmath}
\usepackage{amsthm}
\usepackage{amsfonts}
\usepackage{amssymb}
\usepackage{times}
\usepackage{dsfont}
\usepackage{balance}

\usepackage{hyperref}

\usepackage{microtype}
\usepackage{extarrows}
\usepackage{algorithm}
\usepackage{clrscode3e}
\usepackage{subfig}

\usepackage{booktabs}
\usepackage{todonotes}

\usepackage{tikz}
\usetikzlibrary{arrows,decorations.pathmorphing} 

\usepackage{multirow}
\usepackage{colortbl}
\usepackage[inline]{enumitem}

\newtheorem{lemma}{Lemma}
\newtheorem{proposition}{Proposition}
\newtheorem{theorem}{Theorem}
\newtheorem{remark}{Remark}
\newtheorem{corollary}{Corollary}
\newtheorem{definition}{Definition}
\newtheorem{examplenum}{Example}

\newcommand{\cf}{\textit{cf.}}

\newcommand{\RE}{\mathbb{R}}
\newcommand{\dom}{\mathit{dom}}
\newcommand{\mon}{\mathbf{M}}
\newcommand{\lin}{\mathbf{L}}
\newcommand{\poly}{\mathbf{P}}

\newcommand{\calF}{\mathcal{F}}

\newcommand{\R}{R}
\newcommand{\Q}{Q}

\newcommand{\X}{{X}}

\newcommand{\supp}{\mathit{supp}}
\newcommand{\gfp}[1]{\mathtt{gfp}(#1)}
\newcommand{\jacdec}{\proc{JacDec}}
\newcommand{\onthefly}{\proc{FindBDB}} 

\newcommand{\refnode}[3][]{
  \tikz[remember picture, baseline, anchor=base #1]
  \node[inner sep=0ex, fill=lightgray!50!white,  #1] (#3) {#2}; %
}




\usepackage{listings}
\lstset{escapeinside={@}{@},numbers=left,basicstyle={\scriptsize\ttfamily}, commentstyle=\color[rgb]{0.5,0.5,0.5}, float=t, captionpos=b,frame=none,  
,morecomment=[l]{---},morecomment=[l]{***},morecomment=[l]{//},morecomment=[l]{/*},morecomment=[l]{*},breaklines=true,keywordstyle=\bf,
rulecolor=\color{black},rulesepcolor=\color{black},backgroundcolor=\color{white}}


\synctex=1

\newcommand{\ch}[1]{{{#1}}}


\bibliographystyle{IEEEtran}
\pagestyle{plain}


\begin{document}
\newcolumntype{H}{>{\setbox0=\hbox\bgroup}c<{\egroup}@{}}
\title{\ch{Efficient} Local Computation of Differential Bisimulations via Coupling and Up-to Methods}

\author{\IEEEauthorblockN{Giorgio Bacci\IEEEauthorrefmark{1},
Giovanni Bacci\IEEEauthorrefmark{1},
Kim G. Larsen\IEEEauthorrefmark{1},
Mirco Tribastone\IEEEauthorrefmark{2},
Max Tschaikowski\IEEEauthorrefmark{1}, and
Andrea Vandin\IEEEauthorrefmark{3}}
\IEEEauthorblockA{\IEEEauthorrefmark{1}Department of Computer Science, Aalborg University, Denmark}
\IEEEauthorblockA{\IEEEauthorrefmark{2}SysMA Research Unit, IMT Lucca, Italy}
\IEEEauthorblockA{\IEEEauthorrefmark{3}Institute of Economics, Scuola Superiore Sant'Anna, Italy}}

\IEEEoverridecommandlockouts
\IEEEpubid{\makebox[\columnwidth]{978-1-6654-4895-6/21/\$31.00~
\copyright2021 IEEE \hfill} \hspace{\columnsep}\makebox[\columnwidth]{   }}

\maketitle

\begin{abstract}
We introduce polynomial couplings, a generalization of probabilistic couplings, to develop an algorithm for the computation of equivalence relations which can be interpreted as a lifting of probabilistic bisimulation to polynomial differential equations, a ubiquitous model of dynamical systems across science and engineering. The algorithm enjoys polynomial time complexity and complements classical partition-refinement approaches because:
\begin{enumerate*}[label=(\alph*)]
\item it implements a local exploration of the system, possibly yielding equivalences \ch{that do not necessarily involve the inspection of the whole system of differential equations};
\item it can be enhanced by up-to techniques; and
\item it allows the specification of pairs which ought not be included in the output.
\end{enumerate*}
\ch{Using a prototype, these advantages are demonstrated on case studies from systems biology for applications to model reduction and comparison.
Notably, we report four orders of magnitude smaller runtimes than  partition-refinement approaches when disproving equivalences between Markov chains.}
\end{abstract}

\section{Introduction} \label{sec:intro}

Ordinary differential equations (ODEs) are a fundamental tool for modelling systems with continuous-time dynamics across science and engineering. In computer science, ODEs are central for the quantitative analysis of systems. For instance, in a continuous-time Markov chain (CTMC) a system of linear ODEs gives the forward equations of motion of the probability distribution~\cite{DBLP:conf/concur/BortolussiH12}; ODEs also serve as the underlying semantics of formal languages based on process algebra, with application to the analysis of distributed computing systems~\cite{Hillston_QEST05} or computational systems biology~\cite{biopepa}. 

Analogously to the classical non-deterministic setting based on labeled transition systems, bisimulations for ODEs have been proposed for the related purposes of model \emph{comparison} and model \emph{reduction}~\cite{lics16,popl16,DBLP:conf/hybrid/Boreale18,DBLP:journals/scp/Boreale20}, with applications that have transcended computer science (e.g.,~\cite{DBLP:journals/tac/PappasLS00,bisimulation_lin_sys_Schaft}). ``Lumping'' refers to a class of methods to reduce a system of ODEs onto lower-dimensional space such that each variable in the reduced ODE system represents an appropriate mapping of the set of original variables~\cite{LiRabitz1997,okino1998}.

An equivalence relation over the variables of an ODE system can be seen as a specific type of lumping. Indeed, this is well-known for CTMCs, \ch{where \emph{backward} and \emph{forward bisimulations}} (e.g., \cite{Larsen19911,1703385,BuchholzOrdinaryExact})  can be computed using lumping algorithms based on partition refinement~\cite{DBLP:journals/ipl/DerisaviHS03,DBLP:conf/tacas/ValmariF10}. More recently, partition refinement algorithms have been provided for a class of nonlinear ODEs, by means of symbolic approaches based on satisfiability modulo theories~\cite{popl16}, and polynomial ODEs,  generalizing bisimulation relations and related  lumping algorithms for CTMCs~\cite{pnas17}.

Partition refinement is efficient in finding the largest equivalence, i.e., the coarsest aggregation of an ODE system. However, \ch{it performs a \emph{global} exploration of the state space, i.e., it requires the availability of the whole system of ODEs}. Moreover, there are certain applications for which  alternative approaches may be more desirable. For example, when comparing two models, it would suffice to know whether there exists \emph{some} bisimulation relating the pair of initial states. For large-scale models in particular, it would be useful to prove (or disprove) this by computing smaller relations that do not necessarily involve exploring the whole state space.

Another question where partition refinement may not be appropriate regards the quest for a bisimulation that \emph{does not} contain some pairs. This is motivated by the fact that relating two variables may impose certain pre-conditions on the quotient model that domain-specific knowledge must exclude. For instance, in a backward bisimulation for CTMCs (related to the notion of exact lumpability~\cite{BuchholzOrdinaryExact}), related states must be initialized with the same initial probability; however, this may not always be meaningful from a modeling viewpoint. For example, variables that represent distinct discrete states of some model components (e.g., genes which can be active/inactive) must start from independent initial conditions.
\ch{In applications to dynamic models of regulatory networks, interesting research findings regard
genes 
that respond identically to different external stimuli, a.k.a.\ \emph{gene co-expression networks} (e.g.,~\cite{tlgl,ZhangH05}). To formally prove such a fact in the model, one would wish to find a bisimulation relation that relates the variables representing the genes of interest, but not those encoding the external stimuli.}
The problem with such kinds of \emph{negative constraints} is intrinsic to the refinement technique. Indeed, it works by splitting candidate equivalence classes into finer partitions; crucially, however, the modeler is left with the choice of the initial guess among the possibly exponentially many satisfying given negative constraints.

In this paper we propose a new approach that aims to tackle the above issues by presenting an algorithm for the construction of bisimulations \ch{based on a local exploration of the model}, 
starting with a candidate relation containing the pairs of ODE variables to be proved equivalent as well as a set of constraints containing pairs that are not allowed in the output.
We consider ODE systems with polynomial derivatives. This is an important class of ODEs on its own---e.g., it subsumes models with the well-known mass-action kinetics arising in many natural sciences~\cite{Voit:2013aa}. Additionally, ODEs with other  forms of nonlinearity (e.g., trigonometric functions, exponentials, rational expressions) can be algorithmically translated into polynomial ODEs~\cite{DBLP:conf/fm/0009ZZZ15}.

We study bisimulations for two equivalences, and related partition-refinement algorithms, proposed for polynomial ODEs: \emph{backward differential equivalence} (BDE) and \emph{forward differential equivalence} (FDE)~\cite{popl16,pnas17}. The former relates ODE variables whose solutions are equal when started from identical initial conditions. The latter, instead, guarantees that the solution of the induced aggregated ODE corresponds to the exact sum of the ODE variables in each equivalence class. Notably, when restricted to a CTMC, BDE and FDE correspond to  CTMC backward and forward bisimulation, respectively. \ch{Therefore, our results for polynomial ODEs yield a local algorithm for CTMC bisimulations as a special case.}

Our approach for the comparison of polynomials fundamentally exploits (a variant of) the \emph{coupling method}, a standard tool in probability theory~\cite{LindvallBook,Thorisson95}. In our setting, we adapt the notion of probabilistic couplings to that of \emph{linear} and \emph{monomial couplings}. Intuitively, a coupling is a pairing of equivalent variables that respects the sum of the coefficients in the polynomials. Proving the existence of such a coupling is equivalent to proving the equivalence of the polynomials.

The benefit in using them is that couplings can be computed efficiently by solving a transportation problem~\cite{Dantzig51,FordF56} in strongly polynomial time (e.g., by employing Orlin's algorithm~\cite{Olrin88}). This allows
us to design a local algorithm that runs in polynomial time in the number of monomials present in the
polynomial ODE system.

For further computational improvement, we enhance the local algorithm by employing \emph{coinduction up-to techniques}. Since their introduction~\cite{milnerBook}, coinduction up-to techniques were proved useful in numerous proofs about concurrent systems (see~\cite{PousS12} for a list of references), found applications in abstract interpretation~\cite{BonchiGGP18}, and more recently, to improve standard automata algorithms~\cite{BonchiP13,Bonchi0K17}.

Using a prototype, we apply our algorithm to a number of case studies from different domains (chemical reaction networks, gene regulatory networks, CTMCs, and epidemiological models) to show:
\begin{enumerate*}[label=(\roman*)]
\item how our algorithm allows the natural encoding of domain-relevant queries that would otherwise require tedious trial-and-error with partition refinement;
\item how our algorithm allows for a local 
analysis of the polynomial ODE system possibly avoiding a full exploration of the state space; and
\item how the enhancement of up-to-techniques leads to faster runtimes.
\end{enumerate*}
When applied to the problem of checking equivalence between CTMCs, motivated by fundamental research questions in evolutionary biology (e.g.,~\cite{CardelliCRN}), our local algorithm outperforms the partition-refinement algorithms for BDE and FDE by up to four orders of magnitude.

\paragraph*{Further Related Work}
\ch{The closest line of research to this paper is by Boreale~\cite{DBLP:journals/lmcs/Boreale19,DBLP:journals/scp/Boreale20}, who has introduced the notion $\mathcal{L}$-bisimulation for polynomial ODEs 
and an algorithm for computing it using up-to context techniques. $\mathcal{L}$-bisimulation generalizes BDE because it can prove more general invariants. However, it is not comparable to FDE~\cite{DBLP:journals/lmcs/Boreale19}. 
In addition, the procedure for computing $\mathcal{L}$-bisimulations exploits Buchberger's algorithm~\cite{Buchberger76}, which has doubly exponential time complexity.}

Polynomial invariants of dynamical systems have been also studied in the field of verification where model minimization is not sought per se. For instance,~\cite{DBLP:conf/tacas/GhorbalP14} computes these by maximizing the dimension of the kernel of a symbolic matrix. As further works we mention~\cite{DBLP:conf/popl/SankaranarayananSM04,DBLP:conf/lics/Platzer12} and refer to~\cite{DBLP:journals/lmcs/Boreale19} for a detailed discussion.

Probabilistic couplings have been used to characterize~\cite{DBLP:conf/lics/JonssonL91} and provide efficient methods~\cite{DBLP:conf/cav/Baier96} to test probabilistic bisimu\-lation for discrete-time Markov chains~\cite{Larsen19911}. More recently, they were instrumental for devising efficient methods for computing bisimilarity distances~\cite{DBLP:conf/fossacs/ChenBW12,Bacci:2013}.
Coupling methods, are successfully employed for proving invariants in probabilistic programs~\cite{BartheEGHSS15,BartheGHS17}, with  applications in formal program verification~\cite{AguirreBBBG18,BartheEGHS18}, security~\cite{BartheFGGHS16,BalleBG18}, and 
randomized algorithms~\cite{BartheEGHS17}. To our knowledge, this is the first time that coupling methods are employed in the analysis of ODEs.

\paragraph*{Synopsis}
Section~\ref{sec:prelim} recalls the concepts of BDE and FDE and fixes notation. Section~\ref{sec:couplingmethod} introduces linear and monomial couplings and shows how these are used to compare polynomials; Section~\ref{sec:BDBandFBD} provides fixed point characterizations of BDE and FDE using coupling methods. Section~\ref{sec:onthefly} presents the local algorithm for computing differential bisimulations and discusses how to further enhance its performance by integrating up-to techniques. In Section~\ref{sec:applications} we report experimental results on a number of applications highlighting the flexibility of our method as opposed to partition refinement. 


\section{Preliminaries} \label{sec:prelim}

%

\paragraph*{Notation} Fix a finite set of variables $\X$ which will appear in polynomial ODE systems. For a binary relation $\R \subseteq \X \times \X$, we denote by $r(\R)$, $s(\R)$, $t(\R)$, and $e(\R)$ respectively the reflexive, symmetric, transitive, and equivalence closure of $\R$.

We denote by $\mon[\X]$ the set of monomials over $\X$, ranged over by $m, n, \ldots \in \mon[\X]$; by $\lin[\X]$ the set of linear combinations over $X$, ranged over by $g,h, \dots \in \lin[\X]$; and by $\poly[\X]$ the set of polynomials over $\X$, ranged over by $p, q, \ldots \in \poly[\X]$. Throughout the paper, we will assume that every polynomial is expressed as a linear combination of monomials, i.e., $p = \sum_{m_i \in I} \alpha_{m_i} m_i$ where $\alpha_{m_i} \in \RE$.\footnote{This assumption is not only for convenience but will allow for an unambiguous definition of the size of a polynomial vector field.} 

A linear combination $g \in \lin[\X]$ can be decomposed into its \emph{positive} and \emph{negative} parts so that $g = g^{+} - g^{-}$, where $g^{+}  = \sum_{x_i \in I} \alpha_{x_i} {x_i}$ and $g^{-}  = \sum_{x_j \in J} \alpha_{x_j} {x_j}$ for positive coefficients $\alpha_{x_l} > 0$. This decomposition is unique once we require $I$ and $J$ to be disjoint. For a monomial $m = \prod_{x_i \in I} x_i^{a_i}$ we write $m(x_i)$ for the exponent $a_i \in \mathbb{N}$ associated with the variable $x_i$, whereas for a linear combination $g = \sum_{x_i \in I} \alpha_{x_i} {x_i}$ we write $g(x_i)$ for the coefficient $\alpha_{x_i}$ associated with the variable $x_i$. For a polynomial $p = \sum_{m_i \in I} \alpha_{m_i} m_i$, instead, $p(m_i)$ denotes the coefficient $\alpha_{m_i}$ associated with the monomial $m_i$. When $\X$ is clear from the context, we will write $\mon$, $\lin$, and $\poly$ respectively for $\mon[\X]$, $\lin[\X]$, and $\poly[\X]$.

\paragraph*{Polynomial vector fields} A \emph{vector field} over $X$ is a map $f \colon \RE^\X \to \RE^\X$ that is totally differentiable. It is called \emph{polynomial} when, for all $x \in \X$, $f_x$ is a polynomial over $\X$. Given a polynomial vector field $f$, we write $f(v)$ for the evaluation of $f$ at $v \in \RE^\X$. For an initial condition $v(0) \in \mathbb{R}^\X$, Picard-Lindel\"{o}f's theorem ensures that the ODE system $\partial_t{v}(t) = f(v(t))$ induced by $f$ has a unique solution $v \colon \dom(v) \to \RE^\X$, $t \mapsto v(t)$, where $\partial_t$ denotes derivative with respect to time.

Next we introduce an example of a polynomial vector field arising from a simple application from systems biology which will be used throughout the paper for illustration.
\begin{examplenum}\label{example:001}
Consider a chemical reaction network (CRN) where an enzyme $B$ can bind forming a complex with a substrate $A$ through two independent binding sites according to the following reversible reactions R1, $\ldots$, R4:
\begin{align*}
\text{R1}\colon\quad & A_{00} + B  \underset{3}{\stackrel{2}{\rightleftharpoons}} A_{10} &
\text{R2}\colon\quad & A_{00} + B  \underset{3}{\stackrel{2}{\rightleftharpoons}} A_{01} \\
\text{R3}\colon\quad & A_{10} + B \underset{3}{\stackrel{1}{\rightleftharpoons}} A_{11} &
\text{R4}\colon\quad &A_{01} + B  \underset{3}{\stackrel{1}{\rightleftharpoons}} A_{11}
\end{align*}
The subscripts $i,j$ in chemical species $A_{ij}$ denote the availability of either binding site in the substrate $A$. Reactions R1 and R2 model reversible binding at either site. Reactions R3 and R4 model the case when only one site is available for binding. The value on each arrow indicates the \emph{kinetic rate parameter} for the reaction. By mass-action kinetics~\cite{Voit:2013aa}, 
the above CRN gives rise to the ODE system
\begin{align*}
\partial_t v_{A_{00}}(t) & = f_{A_{00}}(v(t)) , &
\partial_t v_{A_{01}}(t) & = f_{A_{01}}(v(t)) , \\
\partial_t v_{A_{10}}(t) & = f_{A_{10}}(v(t)) , &
\partial_t v_{A_{11}}(t) & = f_{A_{11}}(v(t)) , \\
\partial_t v_{B}(t) & = f_{B}(v(t))
\end{align*}
induced by the following polynomial vector field
\begin{align}\label{eq:running:example}
f_{A_{00}} & = -4 {A_{00}} B + 3 {A_{10}} + 3 {A_{01}} \nonumber \\
f_{A_{01}} & = 2 {A_{00}} B - 3 {A_{01}} - {A_{01}} B + 3 {A_{11}} \nonumber \\
f_{A_{10}} & = 2 {A_{00}} B - 3 {A_{10}} - {A_{10}} B + 3 {A_{11}} \\
f_{A_{11}} & = {A_{10}} B + {A_{01}} B - 6 {A_{11}} \nonumber \\
f_B & = -4 {A_{00}} B + 3 {A_{10}} + 3 {A_{01}} - {A_{10}} B - {A_{01}} B + 6 {A_{11}} \nonumber
\end{align}
with variables $\X = \{A_{00},A_{01},A_{10},A_{11},B\}$.
\end{examplenum}

\paragraph*{Backward differential equivalence} We recall the notion of backward differential equivalence (BDE) from~\cite{popl16}.
\begin{definition}[Backward differential equivalence]\label{def:BDE}
Let $f$ be a vector field over $\X$. An equivalence relation $\R \subseteq \X \times \X$ is a BDE for $f$ if the implication
\begin{equation*}
\Big( \bigwedge_{(x,y) \in \R} v_x = v_y \Big) \Rightarrow \Big( \bigwedge_{(x,y) \in \R} f_x(v) = f_y(v) \Big)
\end{equation*}
is true for all $v \in \RE^\X$.
\end{definition}


\begin{examplenum}\label{example:002}
Consider the vector field $f$ from Example~\ref{example:001}. Then, the equivalence relation
\[
\R = \mathtt{id} \cup \{(A_{01},A_{10}),(A_{10},A_{01})\} \, ,
\]
where $\mathtt{id} = \{(x,x) \mid x \in \X\}$ denotes the identity relation, is a BDE for $f$ given in~(\ref{eq:running:example}).
\end{examplenum}

A BDE relates variables with identical ODE solutions when initialized equally~\cite[Theorem~3]{popl16}.
This property allows one to reason about the solutions of the ODE system induced by $f$ by looking at the smaller ODE system induced by the vector field $\hat{f} \colon  \RE^{\X/\R} \to \RE^{\X/\R}$
obtained by a change variable $H = x$, for each $H \in \X/\R$ and $x \in H$, mapping each variable to its equivalence class.


\begin{examplenum}\label{example:0022}
The equivalence classes of $\R$ from Example~\ref{example:002} are
\begin{equation*}
H_1 = \{ A_{00} \} \,,
H_2 = \{ A_{01}, A_{10} \} \,,
H_3 = \{ A_{11} \} \,,
H_4 = \{ B \} \,.
\end{equation*}
Then, the BDE-reduced vector field of $f$ given in~\eqref{eq:running:example} is
\begin{align*}
\hat{f}_{H_1} & = -4 {H_1} {H_4} + 6 {H_2} \\
\hat{f}_{H_2} & = 2 {H_1} {H_4} - 3 {H_2} - {H_2} {H_4} + 3 {H_3} \\
\hat{f}_{H_3} & = {H_2} {H_4} + {H_2} {H_4} - 6 {H_3} \\
\hat{f}_{H_4} & = -4 {H_1} {H_4} + 6 {H_2} - {H_2} {H_4} - {H_2} {H_4} + 6 {H_3}
\end{align*}
Let $v$ and $\hat{v}$ denote the solutions of the ODEs induced respectively by $f$ and $\hat{f}$. Since $\R$ is a BDE for $f$, for all $t > 0$, we have that
\begin{align*}
\hat{v}_{H_1}(t) & = v_{A_{00}}(t) &
\hat{v}_{H_2}(t) & = v_{A_{10}}(t) = v_{A_{01}}(t) \\
\hat{v}_{H_3}(t) & = v_{A_{11}}(t) &
\hat{v}_{H_4}(t) & = v_{B}(t) ,
\end{align*}
provided that the above identities are satisfied at $t = 0$.
\end{examplenum}

\paragraph*{Forward equivalence} We recall the definition of forward differential equivalence (FDE) from~\cite{popl16}.

\begin{definition}[Forward differential equivalence]\label{def:FDE}
Let $f$ be a vector field over $\X$. An equivalence relation $\R \subseteq \X \times \X$ is an FDE for $f$ if
$(x, y) \in \R$ implies
\begin{align*}
\sum_{z \in H} f_{z}(v) = \sum_{z \in H} f_{z}[x / \lambda (x + y) , y / (1 - \lambda) (x + y)](v)
\end{align*}
for all $H \in \X/\R$ and $v \in \RE^{\X \uplus \{\lambda\}}$, where $f[x / y]$ denotes the term arising when $x$ is replaced with $y$ in $f$.
%
\end{definition}

An FDE yields a self-consistent reduced ODE system that gives the dynamics of the sum of the original variables for each equivalence class~\cite[Theorem~1]{popl16}. Given an FDE $\R$ for the vector field $f$, one can define the corresponding $\R$-quotient vector field $\hat{f} \colon \RE^{\X/\R} \to \RE^{\X/\R}$ obtained from $f$ by a change of variable $H = \sum_{x \in H} x$, for each $H \in \X/\R$.
%


\begin{examplenum}\label{example:002fe}
Let $\R$ and $\X/\R = \{H_1,\ldots,H_4\}$ be as in Example~\ref{example:0022}.
The relation $\R$ can be also shown to be an FDE for $f$ in~\eqref{eq:running:example}.
The FDE-reduced vector field of $f$ w.r.t. $\R$ is
\begin{align*}
\hat{f}_{H_1} & = -4 {H_1} {H_4} + 3 {H_2} \\
\hat{f}_{H_2} & = 4 {H_1} {H_4} - 3 {H_2} - {H_2} {H_4} + 6 {H_3} \\
\hat{f}_{H_3} & = {H_2} {H_4} - 6 {H_3} \\
\hat{f}_{H_4} & = -4 {H_1} {H_4} + 3 {H_2} - {H_2} {H_4} + 6 {H_3}
\end{align*}
Let $v$ and $\hat{v}$ denote the solutions of the ODEs induced respectively by $f$ and $\hat{f}$.
Since $\R$ is an FDE for $f$, for all $t > 0$, we have that
\begin{align*}
\hat{v}_{H_1}(t) & = v_{A_{00}}(t) &
\hat{v}_{H_2}(t) & = v_{A_{10}}(t) + v_{A_{01}}(t) \\
\hat{v}_{H_3}(t) & = v_{A_{11}}(t) &
\hat{v}_{H_4}(t) & = v_{B}(t) ,
\end{align*}
provided that the above identities are satisfied at $t = 0$.
\end{examplenum}

In contrast to BDE, whose relation with the ODEs solution are conditional on the initial value $v(0)$, FDE does not make any such assumptions. At the same time, FDE preserves only the sums of the solutions of the original variables, while BDE preserves the solutions of the original variables in full. Although in the previous example the same relation is both an FDE and an BDE, the two notions are not comparable in general~\cite{concur15}. Indeed, this derives also from the fact that forward and backward bisimulations for Markov chains are not comparable.

For polynomial vector fields, the largest BDE and FDE exist and can be computed by a partition refinement algorithm with polynomial complexity in the size of the vector field~\cite{pnas17}. If applied to~(\ref{eq:running:example}), for instance, the algorithm confirms that $\R$ from Example~\ref{example:002} (reps.\ Example~\ref{example:002fe}) is the largest BDE (resp.\ FDE).

\section{Coupling method for polynomials} \label{sec:couplingmethod}

Our approach for reasoning about equivalences for polynomials is based on the \emph{coupling method} and on the celebrated proof technique by Strassen~\cite{Strassen} for checking stochastic dominance of random variables. Here we extend the concept of coupling to linear combinations and monomials and prove
Strassen-like theorems for these expressions. This will provide us with an
efficient method for checking equivalence of polynomials based on the existence
of suitable couplings.

\begin{definition}[Linear coupling] \label{def:linearcoupling}
Let $g,h$ be two linear combinations over $X$. A map $\omega \colon X \times X \to \RE_{\geq 0}$
is a \emph{linear coupling} for $(g,h)$ if the following conditions hold
\begin{enumerate}[label=(\roman*)]
 \item $\forall x \in X. \; \sum_{y \in X} \omega(x,y) = (g^{+} + h^{-})(x)$;
 \item  $\forall y \in X. \; \sum_{x \in X} \omega(x,y) = (h^{+} + g^{-})(y)$;
\end{enumerate}
where $g = g^{+} - g^{-}$ and $h = h^{+} - h^{-}$.

We denote by $\Gamma_\lin(g,h)$ the set of linear couplings for $(g,h)$.
Note that $\Gamma_\lin(g,h) \neq \emptyset$ iff $\sum_{x \in X} (g(x) - h(x)) = 0$.
\end{definition}


\begin{examplenum} \label{ex:linearcoupling}
Consider the linear combinations
\begin{align*}
g = \overbrace{2 x + 3 y}^{g^{+}}
- \big( \overbrace{3 z + z' }^{g^{-}} \big) \, , &&
h = \overbrace{2 x + 3 y}^{h^{+}}
- \big( \overbrace{3 w + w'}^{h^{-}} \big) \,.
\end{align*}
A linear coupling for $(g,h)$ is
\begin{align*}
\omega(y , y) = 2\,, &&
\omega(y , z) = 1\,, &&
\omega(w , y) = 1\,, \\
\omega(w , z) = 2\,, &&
\omega(x, x) = 2\,, &&
\omega(w', z') = 1\,,
\end{align*}
and $\omega(\cdot,\cdot) = 0$ otherwise.
Figure~\ref{fig:lin:coupling} provides a tabular visualization of the coupling $\omega$.
Another coupling for $(g,h)$ is
\begin{equation*}
\hat{\omega}(x, x) = 2\,, \quad
\hat{\omega}(y , y) = 3\,, \quad
\hat{\omega}(w', z') = 1\,, \quad
\hat{\omega}(w , z) = 3 \,,
\end{equation*}
and $\hat{\omega}(\cdot,\cdot) = 0$ otherwise.
\end{examplenum}

\begin{figure}[tp]
\begin{center}
\small
  \begin{tabular}{ c | c | c | c | c | c }
    $\omega$   & $x$ & $y$ & $z$ & $z'$  \\ \hline
    $x$ & 2 & 0 & 0 & 0 & $g^+(x)$ \\ \hline
    $y$ & 0 & 2 & 1 & 0 & $g^+(y)$ \\ \hline
    $w$ & 0 & 1 & 2 & 0 & $h^-(w)$ \\ \hline
    $w'$ & 0 & 0 & 0 & 1 & $h^-(w')$ \\ \hline
    & $h^{+}(x)$ & $h^{+}(y)$ & $g^{-}(z)$ & $g^{-}(z')$
  \end{tabular}
\end{center}
\caption{Tabular representation of the linear coupling $\omega$ for $(g,h)$ from Example~\ref{ex:linearcoupling}. As for Definition~\ref{def:linearcoupling}, the sum of each row
equals the corresponding coefficient in $g^+ + h^-$, while the sum of each column equals that of
$h^+ + g^-$.
}
\label{fig:lin:coupling}
\end{figure}

The following theorem establishes an equivalence between an existential property. \footnote{All proofs are available in the Appendix.}
\begin{theorem}\label{thm:linStrassen}
Let $\R \subseteq X \times X$ be an equivalence relation and $g,h$ two linear combinations over $X$.
Then, the following are equivalent:
\begin{enumerate}
\item \label{itm:linStrass1} There exists $\omega \in \Gamma_\lin(g,h)$ such that $\supp(\omega) \subseteq \R$.
\item  \label{itm:linStrass2} For all $v \in \RE^X$, $\bigwedge_{(x,y) \in \R} v_x \leq v_y$ implies $g(v) \leq h(v)$.
\item  \label{itm:linStrass3} For all $v \in \RE^X$, $\bigwedge_{(x,y) \in \R} v_x = v_y$ implies $g(v) = h(v)$.
\end{enumerate}
Moreover, $\eqref{itm:monStrass1} \Rightarrow \eqref{itm:monStrass2} \wedge \eqref{itm:monStrass3}$ holds for arbitrary relations $\R$.
\end{theorem}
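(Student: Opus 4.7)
The plan is to prove the three-way equivalence by showing $(\ref{itm:linStrass1}) \Rightarrow (\ref{itm:linStrass2}) \Rightarrow (\ref{itm:linStrass3}) \Rightarrow (\ref{itm:linStrass1})$, since then the fact that $(\ref{itm:linStrass1})$ implies both $(\ref{itm:linStrass2})$ and $(\ref{itm:linStrass3})$ for arbitrary $\R$ will drop out of the first step, which does not use any property of $\R$.

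The first implication rests on a single algebraic identity. Given any $\omega \in \Gamma_\lin(g,h)$, I would compute
\[
\sum_{x,y \in X} \omega(x,y)\,(v_y - v_x) = \sum_{y}v_y \sum_x \omega(x,y) - \sum_x v_x \sum_y \omega(x,y) = h(v) - g(v),
\]
using conditions (i) and (ii) in Definition~\ref{def:linearcoupling} to rewrite the marginals as $(h^+ + g^-)(y)$ and $(g^+ + h^-)(x)$ respectively, and then recombining the positive and negative parts. Now if $\supp(\omega) \subseteq \R$ and $v_x \leq v_y$ for every $(x,y)\in\R$, each summand is non-negative, yielding $g(v) \leq h(v)$; the equality version under $v_x = v_y$ gives the statement for $(\ref{itm:linStrass3})$ analogously. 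Crucially, no symmetry or transitivity of $\R$ is used, which gives the ``Moreover'' clause for free.

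The step $(\ref{itm:linStrass2}) \Rightarrow (\ref{itm:linStrass3})$ is immediate: symmetry of the equivalence $\R$ lets me apply $(\ref{itm:linStrass2})$ once to $(g,h)$ and once to $(h,g)$ (both direction hypotheses $v_x \leq v_y$ and $v_y \leq v_x$ hold whenever $v_x = v_y$), obtaining $g(v) \leq h(v)$ and $h(v) \leq g(v)$.

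The main obstacle is $(\ref{itm:linStrass3}) \Rightarrow (\ref{itm:linStrass1})$, where I have to actually build a coupling. I would reduce it to a family of independent transportation problems indexed by the equivalence classes of $\R$. Specialising the hypothesis in $(\ref{itm:linStrass3})$ to vectors $v$ that are constant on each class $H \in X/\R$ forces $\sum_{x \in H}(g-h)(x) = 0$, equivalently
\[
\sum_{x \in H}(g^+(x)+h^-(x)) = \sum_{y \in H}(h^+(y)+g^-(y)),
\]
so that within each class the prescribed row-marginals and column-marginals of Definition~\ref{def:linearcoupling} have equal total mass. On each class $H$ I would then construct a non-negative table $\omega_H\colon H \times H \to \RE_{\geq 0}$ realising these marginals by any standard transportation argument (e.g.\ a northwest-corner/greedy allocation), using reflexivity of $\R$ to ensure $H \times H \subseteq \R$. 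Setting $\omega(x,y) := \omega_{[x]}(x,y)$ when $[x]=[y]$ and $0$ otherwise yields a linear coupling with $\supp(\omega) \subseteq \R$, completing the cycle.
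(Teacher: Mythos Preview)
Your argument for $(\ref{itm:linStrass1}) \Rightarrow (\ref{itm:linStrass2}),(\ref{itm:linStrass3})$ is correct and matches the paper's. Your construction for $(\ref{itm:linStrass3}) \Rightarrow (\ref{itm:linStrass1})$ is also correct, and it is genuinely different from the paper's: the paper proves $(\ref{itm:linStrass2}) \Rightarrow (\ref{itm:linStrass1})$ via LP strong duality, recasting the right-hand side as a transportation problem whose optimal value is forced to be zero. Your route is more elementary---you read off the class-wise balance $\sum_{x\in H}(g-h)(x)=0$ from $(\ref{itm:linStrass3})$ by plugging in class indicators, and then invoke plain feasibility of the transportation polytope on each block $H\times H$ (northwest-corner suffices). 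This avoids duality entirely, at the cost of being slightly less informative: the paper's dual formulation is what later justifies computing couplings by solving min-cost flow problems in the algorithm, whereas your argument only certifies existence.

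There is, however, a gap in your step $(\ref{itm:linStrass2}) \Rightarrow (\ref{itm:linStrass3})$. You write that you ``apply $(\ref{itm:linStrass2})$ once to $(g,h)$ and once to $(h,g)$'', but $(\ref{itm:linStrass2})$ is a hypothesis about the specific ordered pair $(g,h)$; nothing you have assumed gives you the analogous statement for $(h,g)$. The repair is easy: since $\R$ is symmetric, the premise $\bigwedge_{(x,y)\in\R} v_x \leq v_y$ is equivalent to $\bigwedge_{(x,y)\in\R} v_x = v_y$, hence is preserved under $v \mapsto -v$. Applying $(\ref{itm:linStrass2})$ to $-v$ and using linearity of $g,h$ yields $h(v)\leq g(v)$, which together with $g(v)\leq h(v)$ gives $(\ref{itm:linStrass3})$. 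Alternatively, note that your transportation construction only needs $\sum_{x\in H}(g-h)(x)=0$, and this already follows from $(\ref{itm:linStrass2})$ alone by testing on $\pm\mathbb{1}_H$; so you could bypass $(\ref{itm:linStrass3})$ and close the cycle as $(\ref{itm:linStrass2})\Rightarrow(\ref{itm:linStrass1})$ directly.
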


\begin{remark}
The requirement of $\R$ being an equivalence cannot be relaxed as $g = 3x$, $h = x+y+z$ and
$\R = \{ (x,y), (y,z)\}$ is a counterexample.
\end{remark}

Next we provide an analogous  theorem for monomials. To this end we introduce the notion of monomial coupling.
\begin{definition}[Monomial coupling]\label{def:mon:coup}
Let $m,n$ be two monomials over $X$. A map $\rho \colon X \times X \to \RE_{\geq 0}$ is a \emph{monomial coupling} for $(m,n)$ if the following conditions hold:
\begin{enumerate}[label=(\roman*)]
\item $\forall x \in X. \; \sum_{y \in X} \rho(x,y) = m(x)$;
\item $\forall y \in X. \; \sum_{x \in X} \rho(x,y) = n(y)$.
\end{enumerate}
We denote by $\Gamma_\mon(m,n)$ the set of monomial couplings for $(m,n)$. Note that $\Gamma_\mon(m,n) \neq \emptyset$ iff $\sum_{x \in X} (m(x) - n(x)) = 0$.
\end{definition}


\begin{examplenum} \label{ex:moncoupling}
Consider the monomials ${A_{01}} {B}$ and ${A_{10}} {B}$ in~(\ref{eq:running:example}). A monomial coupling $\rho$ for pair the $({A_{01}} {B},{A_{10}} {B})$ is
\begin{align*}
\rho(A_{01},A_{10}) = 1, \ \
\rho(B,B) = 1 \ \ \text{and} \ \
\rho(\cdot,\cdot) = 0 \ \ \text{otherwise} .
\end{align*}
Again, there may be several couplings for the same pair, e.g.,
\begin{align*}
\hat{\rho}(A_{01},B) = 1, \ \
\hat{\rho}(B,A_{10}) = 1 \ \ \text{and} \ \
\hat{\rho}(\cdot,\cdot) = 0 \ \ \text{otherwise}
\end{align*}
is also a coupling for $({A_{01}} {B},{A_{10}} {B})$.
\end{examplenum}

The following is the variant of Strassen's theorem for monomials.
\begin{theorem}\label{thm:monStrassen}
Let $\R \subseteq X \times X$ be an equivalence relation and $m,n$ two monomials over $X$.
Then, the following are equivalent
\begin{enumerate}
\item  \label{itm:monStrass1} There exists $\rho \in \Gamma_\mon(m,n)$ such that $\supp(\rho) \subseteq \R$;
\item  \label{itm:monStrass2} For all $v \in \RE_{>0}^X$, $\bigwedge_{(x,y) \in \R} v_x \leq v_y$ implies $m(v) \leq n(v)$;
\item  \label{itm:monStrass3} For all $v \in \RE^X$, $\bigwedge_{(x,y) \in \R} v_x = v_y$ implies $m(v) = n(v)$.
\end{enumerate}
Moreover, $\eqref{itm:monStrass1} \Rightarrow \eqref{itm:monStrass2} \wedge \eqref{itm:monStrass3}$ holds for arbitrary relations $\R$.
\end{theorem}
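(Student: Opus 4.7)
The plan is to mirror the structure of Theorem~\ref{thm:linStrassen}: prove $(1)\Rightarrow(2)$ and $(1)\Rightarrow(3)$ directly by unfolding the marginal identities for $\rho$, and derive $(2)\Rightarrow(1)$ and $(3)\Rightarrow(1)$ from the linear Strassen theorem via a logarithmic change of variables. In this way, the existence of a coupling in the monomial case is extracted essentially for free from the already-proved linear case.

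For the forward implications, the marginal conditions on $\rho$ let one rewrite $m(v) = \prod_x v_x^{\sum_y \rho(x,y)} = \prod_{x,y} v_x^{\rho(x,y)}$ and symmetrically $n(v) = \prod_{x,y} v_y^{\rho(x,y)}$. For $(1)\Rightarrow(2)$, whenever $\rho(x,y)>0$ we have $(x,y)\in\R$ and hence $v_x\le v_y$; positivity of $v$ together with $\rho(x,y)\ge 0$ gives $v_x^{\rho(x,y)}\le v_y^{\rho(x,y)}$, and the product inequality yields $m(v)\le n(v)$. For $(1)\Rightarrow(3)$ over arbitrary real $v$, I would avoid real exponents of possibly zero or negative bases as follows: the hypothesis forces $v$ to be constant, say equal to $c_C$, on each class $C$ of the equivalence closure $e(\R)$, so decomposing $\rho=\sum_C \rho_C$ with $\rho_C$ supported on $C\times C$ and summing its marginals yields the integer identity $\sum_{x\in C} m(x)=\sum_{y\in C} n(y)$; then $m(v)=\prod_C c_C^{\sum_{x\in C} m(x)}=\prod_C c_C^{\sum_{y\in C} n(y)}=n(v)$. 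Neither argument uses that $\R$ is an equivalence, which settles the ``moreover'' clause.

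For the converses I would pass through the bijection $u_x=\ln v_x$ between $\RE_{>0}^X$ and $\RE^X$, setting $\tilde m=\sum_x m(x)\,x$ and $\tilde n=\sum_x n(x)\,x$ in $\lin[X]$. Since $m(v)=e^{\tilde m(u)}$ and $n(v)=e^{\tilde n(u)}$, monotonicity and injectivity of $\exp$ translate $v_x\le v_y$ into $u_x\le u_y$ and $m(v)\le n(v)$ into $\tilde m(u)\le \tilde n(u)$, and analogously for equalities. Hence monomial condition $(2)$ translates verbatim into linear condition $(2)$ for $\tilde m,\tilde n$, and monomial condition $(3)$ restricted to positive $v$ becomes linear condition $(3)$, so Theorem~\ref{thm:linStrassen} produces $\omega \in \Gamma_\lin(\tilde m,\tilde n)$ with $\supp(\omega)\subseteq\R$. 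Crucially, $\tilde m$ and $\tilde n$ have only non-negative coefficients, so $\tilde m^-=\tilde n^-=0$ and the marginal constraints defining $\Gamma_\lin(\tilde m,\tilde n)$ collapse to precisely those defining $\Gamma_\mon(m,n)$; thus $\omega$ is the required monomial coupling.

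The main obstacle is conceptual rather than technical: noticing that logarithms turn products of real exponents into linear sums, and that for linear combinations with purely non-negative coefficients the linear and monomial coupling polytopes coincide. The only calculational subtlety is $(1)\Rightarrow(3)$ when some entries of $v$ are zero or negative, which the class decomposition over $e(\R)$ resolves cleanly by keeping only integer exponents in the final product.
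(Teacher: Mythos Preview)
Your proof is correct, and it shares with the paper the key idea of passing to logarithms to linearize the monomial problem. The execution differs in two respects. First, for the converse direction the paper redoes the LP duality argument in the monomial setting (with the change of variable $z_x=\log v_x$), whereas you reduce the problem to Theorem~\ref{thm:linStrassen} itself via the observation that when $\tilde m,\tilde n$ have non-negative coefficients one has $\tilde m^-=\tilde n^-=0$ and hence $\Gamma_\lin(\tilde m,\tilde n)=\Gamma_\mon(m,n)$; this is a genuinely more modular argument and avoids repeating the duality computation. Second, for $(1)\Rightarrow(3)$ the paper simply writes ``follows similarly'', which glosses over the issue that $v_x^{\rho(x,y)}$ need not be defined for non-positive $v_x$ and non-integer $\rho(x,y)$; your class-decomposition argument over $e(\R)$, keeping only integer exponents $\sum_{x\in C} m(x)$, handles this cleanly. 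The paper also closes the cycle via $(3)\Rightarrow(2)$ using symmetry of $\R$, while you go $(3)\Rightarrow(1)$ directly through Theorem~\ref{thm:linStrassen}; both routes are fine.
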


\begin{remark}
The requirement of $\R$ being an equivalence cannot be relaxed as $m = x^3$, $n = xyz$, and
$\R = \{ (x,y), (y,z)\}$ is a counterexample.
\end{remark}

Theorems~\ref{thm:linStrassen} and~\ref{thm:monStrassen} provide us with a convenient method to prove the equivalence of polynomials by using the coupling method. Before stating this result, we introduce the following notation which will be useful in the reminder of the paper.

\begin{definition}[Liftings]
For a relation $\R \subseteq X \times X$, define
\begin{align*}
\lin[\R] &= \{ (g,h) \in  \lin \times \lin \mid \exists \omega \in \Gamma_\lin(g,h) \,. \supp(\omega) \subseteq \R \}
\, , \\
\mon[\R] &= \{ (m,n) \in \mon \times \mon \mid \exists \rho \in \Gamma_\mon(m,n) \,. \supp(\rho) \subseteq \R \}   \, .
\end{align*}
We call $\lin[\R]$ the lifting of $\R$ over linear combinations, and $\mon[\R]$ the lifting of $\R$ over monomials.
\end{definition}

Since we assume polynomials to always be expressed as linear combinations of monomials, we define the lifting of $\R$ over polynomials as $\poly[\R] = \lin[\mon[\R]]$. With this in place, we state our desired result.
\begin{corollary}\label{cor:couplingEquiv}
Let $\R \subseteq X \times X$ and $p,q$ two polynomials over $X$. Then, for all $v \in \RE^X$,
\begin{align*}
(p,q) \in \poly[\R]  && \text{implies} &&
\Big( \bigwedge_{(x,y) \in \R} v_x = v_y \Big) \Rightarrow p(v) = q(v) \, .
\end{align*}
If $\R$ is an equivalence, also the converse implication holds.
\end{corollary}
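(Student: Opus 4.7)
The plan is to prove both directions by composing the two Strassen-type equivalences just proved, viewing $p,q$ as linear combinations over the set $\mon$ of monomials. For the forward direction (valid for arbitrary $\R$), I would assume $(p,q) \in \lin[\mon[\R]]$ with a witnessing coupling $\omega \colon \mon \times \mon \to \RE_{\geq 0}$ satisfying $\supp(\omega) \subseteq \mon[\R]$. Given any $v \in \RE^\X$ with $v_x = v_y$ on $\R$, Theorem~\ref{thm:monStrassen} in its $(1) \Rightarrow (3)$ form (which holds for arbitrary relations) applied to each witnessing monomial coupling yields $m(v) = n(v)$ for every $(m,n) \in \supp(\omega)$. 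Setting $u_m := m(v)$ produces an assignment in $\RE^\mon$ satisfying $u_m = u_n$ on $\supp(\omega)$; Theorem~\ref{thm:linStrassen} in the same $(1) \Rightarrow (3)$ form applied to $\omega$ itself then gives $p(u) = q(u)$, which is precisely $p(v) = q(v)$.

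For the converse, which assumes $\R$ is an equivalence, I would first characterize $\mon[\R]$ combinatorially. Writing $\pi \colon \X \to \X/\R$ for the canonical quotient and extending it to monomials via $\pi(m) = \prod_{H \in \X/\R} H^{\sum_{x \in H} m(x)}$, the claim is that $(m,n) \in \mon[\R]$ iff $\pi(m) = \pi(n)$. The ``only if'' follows because, since $\R$ is an equivalence, any monomial coupling with support in $\R$ keeps all mass within each $\R$-class, forcing the class-wise row and column marginals to agree; the ``if'' is a trivial transportation problem inside each class. In particular $\mon[\R]$ is itself an equivalence on $\mon$, whose classes are indexed bijectively by monomials over $\X/\R$.

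Now writing $p = \sum_m \alpha^p_m m$ and $q = \sum_m \alpha^q_m m$, I would substitute any $v = \bar v \circ \pi$ (which is automatically constant on $\R$-classes), so that the hypothesis reads
\[
\sum_{M} \bigg(\sum_{\pi(m)=M}\alpha^p_m\bigg) M(\bar v) \;=\; \sum_{M} \bigg(\sum_{\pi(m)=M}\alpha^q_m\bigg) M(\bar v)
\]
for every $\bar v \in \RE^{\X/\R}$. Linear independence of distinct monomials as polynomial functions then yields $\sum_{m \in C} \alpha^p_m = \sum_{m \in C} \alpha^q_m$ for every $\mon[\R]$-class $C$. Since $\max(\alpha,0) - \max(-\alpha,0) = \alpha$, this equal-sum condition is precisely what makes the total row marginal of $p^+ + q^-$ restricted to $C$ equal the total column marginal of $q^+ + p^-$ restricted to $C$; a standard transportation argument then supplies a non-negative matrix $\omega_C$ on $C \times C$ with these marginals, and the block-diagonal assembly $\omega = \bigsqcup_C \omega_C$ is a linear coupling of $p$ and $q$ with $\supp(\omega) \subseteq \bigcup_C (C \times C) \subseteq \mon[\R]$, proving $(p,q) \in \poly[\R]$.

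The main obstacle will be the linear-independence step in the converse: converting the semantic identity $p(v)=q(v)$, which only constrains $v$'s that are constant on $\R$-classes, into a purely combinatorial identity on coefficients grouped by $\pi$-image. Once that is in place, both the characterization of $\mon[\R]$ and the block-diagonal construction of $\omega$ are routine.
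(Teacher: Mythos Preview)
Your proof is correct. The forward direction matches the paper's (both just unwind the definition $\poly[\R]=\lin[\mon[\R]]$ and apply the $(1)\Rightarrow(3)$ parts of the two Strassen-type theorems). For the converse, however, you take a genuinely different route.

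The paper's argument is more algebraic: it first observes (via Theorems~\ref{thm:linStrassen} and~\ref{thm:monStrassen}) that when $\R$ is an equivalence, $\poly[\R]$ is itself an equivalence relation on polynomials. It then picks a representative $x_H$ in each $\R$-class, defines a canonical form $p^\R$ (essentially your $\pi$-pushforward but landing back in $\poly[\X]$), and shows $(p,p^\R),(q,q^\R)\in\poly[\R]$. The semantic hypothesis forces $p^\R=q^\R$ as polynomials (same linear-independence step you flagged), and the conclusion $(p,q)\in\poly[\R]$ follows by symmetry and transitivity of $\poly[\R]$.

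Your argument is more constructive: you characterize $\mon[\R]$ directly as the kernel of $\pi$, derive the class-wise coefficient-sum identities, and then \emph{build} the witnessing linear coupling block by block via transportation feasibility. This avoids invoking the full equivalence of conditions in Theorems~\ref{thm:linStrassen} and~\ref{thm:monStrassen} for the converse, and it produces an explicit $\omega$, which is closer in spirit to the algorithmic concerns of the paper. The paper's version, on the other hand, is shorter and highlights that $\poly[\R]$ inherits the equivalence structure---a fact it reuses later (e.g., in the proofs of Lemmas~\ref{lem:B-compatible} and~\ref{lem:reflupto}).
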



The following example illustrates how one can use Corollary~\ref{cor:couplingEquiv} to
prove equivalence among polynomials.
\begin{examplenum}
Consider the relation $\R$ from Example~\ref{example:002} and the
polynomials $f_{A_{01}}$ $f_{A_{10}}$ from~\eqref{eq:running:example}.
A linear coupling for $(f_{A_{01}},f_{A_{10}})$ is
\begin{align*}
\omega({A_{00}} B, {A_{00}} B) & = 2 &
\omega({A_{11}} , {A_{11}}) & = 3 \\
\omega({A_{10}} B, {A_{01}} B) & = 1 &
\omega({A_{10}} , {A_{01}}) & = 3
\end{align*}
and $\omega(\cdot,\cdot) = 0$ otherwise (\cf\ Example~\ref{ex:linearcoupling}). Note that the support of $\omega$ identifies a matching of monomials that, if assumed equal, imply $f_{A_{01}} = f_{A_{10}}$ (by Theorem~\ref{thm:linStrassen}) as illustrated below:
\begin{align*}
f_{A_{01}}  & = 2 \refnode{${A_{00}} B$}{m1} + 3 \refnode{${A_{11}}$}{m2} -
\big( 3 \refnode{${A_{01}}$}{m3} + \refnode{${A_{01}} B$}{m4} \big) \,, \\[2ex]
f_{A_{10}} & = 2 \refnode{${A_{00}} B$}{n1} + 3 \refnode{${A_{11}}$}{n2} -
\big( 3 \refnode{${A_{10}}$}{n3} + \refnode{${A_{10}} B$}{n4} \big) \,.
\end{align*}
\begin{tikzpicture}[remember picture,overlay]
\path[-,color=black]
	(m1) edge (n1)
	(m2) edge (n2)
	(n3) edge (m3)
	(n4) edge (m4)
	;
\end{tikzpicture}
Clearly $({A_{11}} , {A_{11}}), ({A_{10}} , {A_{01}}) \in \mon[\R]$, and as illustrated in
Example~\ref{ex:moncoupling}, also $({A_{00}} B, {A_{00}} B), ({A_{10}} B, {A_{01}} B) \in \mon[\R]$. Therefore, $(f_{A_{01}}, f_{A_{10}}) \in \poly[\R]$.
Again, the supports of the monomial couplings identify a matching of the variables that, by Corollary~\ref{cor:couplingEquiv}, when assumed to be equal, imply $f_{A_{01}} = f_{A_{10}}$ as illustrated below:
\begin{align*}
f_{A_{01}}  & = 2 \refnode{${A_{00}}$}{M11} \, \refnode{$B$}{M12}
- 3\refnode{${A_{01}}$}{M2}
- \refnode{${A_{01}}$}{M31} \, \refnode{$B$}{M32}
+ 3 \refnode{${A_{11}}$}{M4} \,, \\[2ex]
f_{A_{10}} & = 2 \refnode{${A_{00}}$}{N11} \, \refnode{$B$}{N12}
- 3\refnode{${A_{10}}$}{N2}
- \refnode{${A_{10}}$}{N31} \, \refnode{$B$}{N32}
+ 3 \refnode{${A_{11}}$}{N4} \,.
\end{align*}
\begin{tikzpicture}[remember picture,overlay]
\path[-,color=black]
	(M11) edge (N11)
	(M12) edge (N12)
	(M2) edge (N2)
	(M31) edge (N31)
	(M32) edge (N32)
	(M4) edge (N4)
	;
\end{tikzpicture}
Note that this is in line with the fact that $\R$ is a BDE for the polynomial vector field $f$ in~\eqref{eq:running:example}.
\end{examplenum}



\section{From Couplings to Differential Equivalences} \label{sec:BDBandFBD}

Here we give a coinductive characterization of BDE and FDE over polynomial vector fields as the greatest fixed point of two monotone operators whose definition is based on the notion of couplings. This will give us a coinduction proof principle which exploits coupling-based methods and will constitute the formal basis of our \ch{local} algorithm (Section~\ref{sec:onthefly}).

\newcommand{\Rel}[1][S]{2^{#1 \times #1}}
\paragraph*{Coinduction proof principle} For a monotone map $b$ on the lattice of relations, the Knaster-Tarski fixed-point theorem characterizes the greatest fixed point $\gfp{b}$ as the greatest lower bound of all its post-fixed points (i.e., $\gfp{b} = \bigcup \{\R \mid \R \subseteq b(\R)\}$). This leads to the coinduction proof principle illustrated below:
\begin{equation*}
 \frac{\exists \R'\,, \R \subseteq \R' \subseteq b(\R') }{ \R \subseteq \gfp{b}} \,.
\end{equation*}
By characterising BDE (resp.\ FDE) as the greatest fixed point of some monotone operator, we can exploit the above principle to show that a set $\R$ is contained in a BDE (resp.\ FDE) by providing a post-fixed point $\R'$ containing it. The operators will be defined in terms of a \emph{backward} (resp., \emph{forward}) \emph{differential bisimulation}, discussed next.

\subsection{Backward Differential Bisimulation}


\begin{definition} \label{def:BDB}
Let $f$ be a polynomial vector field over $\X$. A relation $\R \subseteq \X \times \X$
is a \emph{backward differential bisimulation} (BDB) for $f$, if it is a post-fixed point of
the following operator:
\begin{align*}
\mathcal{B}^f(\R) = \{ (x,y) \in \X \times \X \mid (f_x, f_y) \in \poly[\R] \} .
\end{align*}
\end{definition}
It is easy to show that $\mathcal{B}^f$ is monotone in the lattice of relations, therefore $\gfp{\mathcal{B}^f}$ exists and is the greatest BDB.
In the remainder, whenever the vector field $f$ is clear from the context, we write $\mathcal{B}$ in place of $\mathcal{B}^f$.

Thanks to Corollary~\ref{cor:couplingEquiv}, we prove the following result. It ensures that finding a BDB, instead of a BDE, is enough to imply that related variables have the same solutions if initialized equally (cf. \cite[Theorem~3]{popl16}).
\begin{proposition} \label{prop:BR}
Let $f \colon \RE^\X \to \RE^\X$ be polynomial vector field and $\R$ a BDB. Then, for any $v \in \RE^\X$ we have
\begin{equation*}
\Big( \bigwedge_{(x,y) \in \R} v_x = v_y \Big) \Rightarrow \Big( \bigwedge_{(x,y) \in \R} f_x(v) = f_y(v) \Big) \, .
\end{equation*}
\end{proposition}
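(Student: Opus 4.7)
The plan is to unfold definitions and directly invoke Corollary~\ref{cor:couplingEquiv}. Since $\R$ is a BDB, by definition it is a post-fixed point of $\mathcal{B}^f$, i.e., $\R \subseteq \mathcal{B}^f(\R)$. Unfolding the definition of $\mathcal{B}^f$, this means that for every $(x,y) \in \R$ we have $(f_x, f_y) \in \poly[\R]$.

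Now fix any $v \in \RE^\X$ satisfying the antecedent $\bigwedge_{(x,y) \in \R} v_x = v_y$, and fix an arbitrary pair $(x,y) \in \R$. The pair of polynomials $(f_x, f_y)$ lies in $\poly[\R]$, so Corollary~\ref{cor:couplingEquiv} applies. I would use specifically the forward implication of that corollary, which (thanks to the addendum ``$\eqref{itm:monStrass1} \Rightarrow \eqref{itm:monStrass2} \wedge \eqref{itm:monStrass3}$ holds for arbitrary relations'' in Theorems~\ref{thm:linStrassen} and~\ref{thm:monStrassen}) does not require $\R$ to be an equivalence. Thus $f_x(v) = f_y(v)$. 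Since $(x,y) \in \R$ was arbitrary, I would then take the conjunction over all pairs to obtain $\bigwedge_{(x,y) \in \R} f_x(v) = f_y(v)$, which is the desired conclusion.

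The proof is essentially a one-line chase through the definitions, so there is no real obstacle. The only point worth emphasizing in the write-up is that we do \emph{not} need $\R$ to be an equivalence to apply Corollary~\ref{cor:couplingEquiv} in the direction we need; this matters because Definition~\ref{def:BDB} does not require BDBs to be equivalences, in contrast to BDEs (Definition~\ref{def:BDE}). This asymmetry is precisely the reason why working with BDBs is advantageous for a local algorithm: one may assemble a candidate relation pair-by-pair without having to close it under reflexivity, symmetry, and transitivity at each step, while still preserving the semantic guarantee of Proposition~\ref{prop:BR}.
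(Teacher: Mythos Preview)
Your proposal is correct and matches the paper's own proof, which is the one-liner ``By Corollary~\ref{cor:couplingEquiv} and definition of BDB.'' You have simply unfolded this, and your remark that only the forward (equivalence-free) direction of Corollary~\ref{cor:couplingEquiv} is needed is accurate and worth making explicit.
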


\begin{examplenum}\label{example:003}
Consider the vector field $f : \RE^\X \to \RE^\X$ from~(\ref{eq:running:example}) and let
$\R' = \mathit{id} \cup \{(A_{01},A_{10})\}$. Then, it can be shown that
\begin{equation*}
 \{ (3 {A_{01}} , 3 {A_{10}}) , ({A_{01}} B, {A_{10}} B) \} \subseteq \mon[\R']
\end{equation*}
From this, it is possible to infer that $ (f_{A_{10}},f_{A_{01}})  \in \poly[\R']$. It can be noted that $\R'$ is a BDB but not a BDE because it is not an equivalence relation.
\end{examplenum}
The above example shows that not every BDB is a BDE. The following theorem clarifies the close connection among the notions of BDB and BDE.

\begin{theorem}[Fixed-point characterization of BDE]\label{thm:BR2BE}
Let $f$ be a polynomial vector field over $\X$ and $\R \subseteq \X \times \X$.
Then, the following hold:
\begin{enumerate}[label=(\roman*), leftmargin=5ex]
    \item \label{itm:BR2BE1} If $\R$ is a BDB then $e(\R)$ is a BDE.
    \item \label{itm:BR2BE2} If $\R$ is a BDE then $\R$ is a BDB.
    \item \label{itm:BR2BE3} $\gfp{\mathcal{B}}$ is the greatest BDE.
\end{enumerate}
\end{theorem}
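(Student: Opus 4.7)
The three parts build on each other, so the plan is to tackle them in order \ref{itm:BR2BE1}, \ref{itm:BR2BE2}, \ref{itm:BR2BE3}, with Proposition~\ref{prop:BR} and Corollary~\ref{cor:couplingEquiv} as the main tools. For \ref{itm:BR2BE1}, I would fix any $v \in \RE^\X$ such that $v_x = v_y$ whenever $(x,y) \in e(\R)$; since $\R \subseteq e(\R)$ this implies $v_x = v_y$ for every $(x,y) \in \R$, so Proposition~\ref{prop:BR} yields $f_x(v) = f_y(v)$ on $\R$. Because numerical equality in $\RE$ is itself reflexive, symmetric, and transitive, it automatically propagates from $\R$ to its equivalence closure, so $f_x(v) = f_y(v)$ on $e(\R)$ as well. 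This is exactly the BDE condition for the equivalence relation $e(\R)$.

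For \ref{itm:BR2BE2}, I would pick an arbitrary $(x,y) \in \R$ and show $(f_x, f_y) \in \poly[\R]$. Since $\R$ is a BDE (hence an equivalence) and $(x,y) \in \R$, Definition~\ref{def:BDE} gives that $v_x = v_y$ for all $(x,y) \in \R$ implies $f_x(v) = f_y(v)$. The equivalence assumption on $\R$ is precisely the hypothesis needed for the converse direction of Corollary~\ref{cor:couplingEquiv}, which then yields $(f_x, f_y) \in \poly[\R]$. Hence $\R \subseteq \mathcal{B}(\R)$, proving $\R$ is a BDB.

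For \ref{itm:BR2BE3}, set $\calG = \gfp{\mathcal{B}}$. By~\ref{itm:BR2BE1}, $e(\calG)$ is a BDE; by~\ref{itm:BR2BE2}, $e(\calG)$ is then a BDB, so by maximality of $\calG$ we have $e(\calG) \subseteq \calG$, and combined with $\calG \subseteq e(\calG)$ this gives $\calG = e(\calG)$. Thus $\calG$ itself is an equivalence, and applying~\ref{itm:BR2BE1} once more shows $\calG$ is a BDE. Greatest\-ness among BDEs is immediate from~\ref{itm:BR2BE2}: any BDE is a BDB and therefore contained in $\calG$.

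I do not anticipate a real obstacle: the only subtle point is that a BDB need not be transitive/symmetric (as Example~\ref{example:003} illustrates), so one must be careful in \ref{itm:BR2BE1} to derive the implication at the level of values $f_x(v)$ rather than attempt to manipulate $\R$ structurally, and in \ref{itm:BR2BE3} to use the greatest-fixed-point property to close $\gfp{\mathcal{B}}$ under equivalence closure. Everything else is a direct application of the previously established coupling machinery.
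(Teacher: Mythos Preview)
Your proposal is correct and follows essentially the same approach as the paper: part~\ref{itm:BR2BE1} is argued exactly as in the paper via $\R \subseteq e(\R)$, Proposition~\ref{prop:BR}, and the fact that equality is an equivalence; part~\ref{itm:BR2BE2} is the paper's ``direct consequence of Corollary~\ref{cor:couplingEquiv}'' spelled out; and your argument for part~\ref{itm:BR2BE3} is a correct unpacking of the paper's one-line appeal to \ref{itm:BR2BE1}, \ref{itm:BR2BE2}, and Knaster--Tarski.
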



\subsection{Forward Differential Bisimulation}

Forward differential bisimulation (FDB) can be tied to the notion of \emph{total derivative of a vector field} $f \colon \RE^\X \to \RE^\X$, commonly denoted by $\partial f$. It is given by the Jacobian matrix $(\partial_{x_j} f_{x_i})_{x_i,x_j \in \X}$, where $\partial_{x_j} f_{x_i}$ denotes the partial derivative of $f_{x_i} \colon \RE^\X \to \RE$ with respect to $x_j$. If $f$ is a polynomial vector field, it is well-known (e.g.,~\cite{Li19891413}) that the total derivative can be written as
\begin{align}\label{eq:jac:dec}
(\partial_{x_j} f_{x_i})_{x_i,x_j \in \X} = \textstyle\sum_{k = 1}^\kappa m_k \cdot J_k ,
\end{align}
where $m_1,\ldots,m_\kappa$ are pairwise different monomials over $\X$ and $J_1,\ldots,J_\kappa \in \RE^{\X \times \X}$.

\begin{examplenum}\label{example:jac:dec}
In the case of our running example~(\ref{eq:running:example}), the Jacobian matrix $(\partial_{x_j} f_{x_i})_{x_i,x_j \in \X}$ is given by
\[
\left(
    \begin{array}{rrrrr}
    -4{B} &                    3  &          3        &  0   &  -4{A_{00}} \\
    2{B}  &     -{B} \! - \! 3  &          0        &  3   &  2{A_{00}} \! - \! {A_{01}} \\
    2{B}  &                    0  &  -{B} \! - \! 3 &  3   &  2{A_{00}} \! - \! {A_{10}} \\
    0       &                {B}  &       {B}       & -6   &  {A_{01}} \! + \! {A_{10}} \\
    -4{B} &      3 \! - \! {B}  & 3 \! - \! {B}   &  6   & -4{A_{00}} \! - \! {A_{01}} \! - \! {A_{10}} \\
    \end{array}
\right)
\]
The decomposition~(\ref{eq:jac:dec}) can be expressed by five matrices $J_1$, $J_{B}$, $J_{A_{00}}$, $J_{A_{01}}$ and $J_{A_{10}}$, where $J_1$ accounts for the constant monomial, while $J_{x_i}$ accounts for the monomial ${x_i}$. For instance, $m_1 = 1$ and $m_B = B$, while
\[
\underbrace{\left(
    \begin{array}{rrrrr}
    0  &  3  &  3  &  0   &  0 \\
    0  & -3  &  0  &  3   &  0 \\
    0  &  0  & -3  &  3   &  0 \\
    0  &  0  &  0  & -6   &  0 \\
    0  &  3  &  3  &  6   &  0 \\
    \end{array}
\right)}_{ J_1 }
\quad
\underbrace{\left(
    \begin{array}{rrrrr}
    -4 &      0  &     0  &  0   &  0 \\
    2  &     -1  &     0  &  0   &  0 \\
    2  &      0  &    -1  &  0   &  0 \\
    0  &      1  &     1  &  0   &  0 \\
    -4 &      -1 &    -1  &  0   &  0 \\
    \end{array}
\right)}_{ J_B }
\]
\end{examplenum}

We are now in the position to introduce the operator that will be used for the fixed-point characterization of FDE.
\begin{definition} \label{def:FBD}
Let $f$ be a polynomial vector field over $\X$ and $F = \{J_1,\ldots,J_\kappa\}$ be as in~(\ref{eq:jac:dec}). A relation $\R \subseteq \X \times \X$ is a \emph{forward differential bisimulation} (FDB) for $f$ if it is a post-fixed point of the following operator:
\begin{align*}
\calF^f(\R) = \bigcap_{J \in F} \mathcal{B}^{J^T}(\R) \,.
\end{align*}
\end{definition}
Essentially, $\R$ is an FDB for $f$ whenever it is a BDB of all $J_1^T, \ldots, J_\kappa^T$ in~(\ref{eq:jac:dec}), where $A^T$ is the transpose of matrix $A$.

The monotonicity of $\calF^f$ follows by that of $\mathcal{B}$. Therefore, by Knaster-Tarski fixed-point theorem, $\gfp{\calF^f}$ exists. In the following, whenever the vector field $f$ is clear from the context, we write $\calF$ in place of $\calF^f$.


The following pivotal observation relies on~\cite[Lemma I.1]{DBLP:journals/corr/abs-2004-11961} and reduces FDE to BDE.

\begin{theorem}\label{thm:fe:via:be}
Fix some index set $\X$, some equivalence relation $\R \subseteq \X \times \X$ and a polynomial vector field $f : \RE^\X \to \RE^\X$. Then, $\R$ is an FDE of $f$ if and only if $\R$ is a BDE of each linear vector field $J_k^T : \RE^\X \to \RE^\X, v \mapsto J_k^T v$ from~(\ref{eq:jac:dec}).
\end{theorem}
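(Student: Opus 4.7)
The plan is to reduce both sides of the stated equivalence to a common intermediate condition phrased in terms of partial derivatives of the class sums
$g_H := \sum_{z\in H} f_z$ for $H \in \X/\R$.
Concretely, I aim to prove the chain
\begin{align*}
\R\ \text{is an FDE of } f
\iff
\bigl[ \partial_x g_H = \partial_y g_H \text{ (as polynomials)}\bigr. \\
\bigl. \text{for all } (x,y)\in\R\text{ and } H\in\X/\R\bigr]
\iff
\R\ \text{is a BDE of each } J_k^T,
\end{align*}
which yields the theorem. Working with the polynomial identity $\partial_x g_H = \partial_y g_H$ as the hinge is convenient because both conditions in the theorem unfold naturally into it.

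For the first equivalence, I would invoke \cite[Lemma~I.1]{DBLP:journals/corr/abs-2004-11961} directly, as it is precisely this Jacobian-style reformulation of FDE. Alternatively, to prove it in place, I would start from the FDE invariance $g_H(\phi_\lambda(v)) = g_H(v)$, where $\phi_\lambda$ is the substitution $x\mapsto \lambda(v_x{+}v_y)$, $y\mapsto(1{-}\lambda)(v_x{+}v_y)$ fixing all other coordinates, and differentiate in $\lambda$ to obtain
\[
 (v_x+v_y)\bigl[\partial_x g_H-\partial_y g_H\bigr]\!\bigl(\phi_\lambda(v)\bigr) = 0 .
\]
Since $g_H$ is polynomial and the set $\{v : v_x+v_y\neq 0\}$ is Zariski-dense, the bracket vanishes identically as a polynomial, giving the forward direction. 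For the converse, $\partial_x g_H=\partial_y g_H$ forces $g_H$ to depend on $x,y$ only through $x+y$ (change variables to $u=x+y$, $w=x-y$, note $\partial_w g_H=0$), which in turn delivers invariance under $\phi_\lambda$, i.e.\ the FDE condition; one must apply this argument one $\R$-pair at a time, which is fine since $\phi_\lambda$ only touches the two coordinates $x,y$.

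For the second equivalence, I would unfold the BDE condition for the linear vector field $v\mapsto J_k^T v$. Its $x$-th component is the linear combination $\sum_j (J_k)_{jx}\,v_j$, so by Definition~\ref{def:BDE} the BDE condition reads: whenever $v$ is constant on every $\R$-class, $\sum_j (J_k)_{jx} v_j = \sum_j (J_k)_{jy} v_j$. Since valuations constant on classes are parametrised freely by one real per class, this collapses to
\[
\textstyle\sum_{j\in H'} (J_k)_{jx} = \sum_{j\in H'} (J_k)_{jy} \qquad \text{for every } H'\in\X/\R.
\]
By the Jacobian decomposition~(\ref{eq:jac:dec}), $(J_k)_{ji}$ is the $m_k$-coefficient of $\partial_{x_i} f_{x_j}$, hence $\sum_{j\in H'}(J_k)_{jx}$ is the $m_k$-coefficient of $\partial_x g_{H'}$. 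Running this for every $k$ (i.e.\ every monomial appearing in the Jacobian) says exactly that $\partial_x g_{H'}$ and $\partial_y g_{H'}$ agree coefficientwise, i.e.\ coincide as polynomials.

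The main obstacle I anticipate is the first equivalence, because it is where the quantifier over the auxiliary variable $\lambda$ and the nonlinear substitution $\phi_\lambda$ must be traded for a clean polynomial identity. The differentiation-in-$\lambda$ argument needs the Zariski-density remark to remove the spurious factor $v_x+v_y$, and the converse needs the ``function of $x+y$'' characterisation to reinstate the substitution invariance; everything else (the BDE side and the final chaining) is essentially bookkeeping on coefficients in the decomposition~(\ref{eq:jac:dec}).
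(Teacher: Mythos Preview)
Your argument is correct. The hinge condition $\partial_x g_H = \partial_y g_H$ is exactly the right intermediate: the unfolding of BDE for $J_k^T$ to the coefficient equalities $\sum_{j\in H'}(J_k)_{jx}=\sum_{j\in H'}(J_k)_{jy}$ is valid, and the differentiation-in-$\lambda$ (plus the Zariski-density step) together with the change of variables $u=x+y$, $w=x-y$ cleanly handle both directions of the first equivalence. One small imprecision: the paper's use of \cite[Lemma~I.1]{DBLP:journals/corr/abs-2004-11961} is to reduce FDE of $f$ to FDE of each linear $J_k$, not directly to your partial-derivative identity; but since FDE of the linear map $v\mapsto J_k v$ is readily seen to be equivalent to the $m_k$-coefficient equality (and hence to your hinge condition), and since you also supply a self-contained proof, this does not affect correctness.

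Compared to the paper's proof, your route is more elementary and more self-contained. The paper first treats the linear case via an invariant-subspace argument: with $A$ the $0/1$ class-indicator matrix, it invokes that $\R$ is an FDE of $v\mapsto Jv$ iff the row space of $AJ$ lies in that of $A$, transposes to get $J^T(U_\R)\subseteq U_\R$, and then appeals to \cite[Theorem~3]{popl16} to identify this with BDE of $J^T$; the nonlinear case is reduced to the linear one via \cite[Lemma~I.1]{DBLP:journals/corr/abs-2004-11961}. Your proof sidesteps both external citations by computing directly with the class sums $g_H$ and the monomial decomposition~(\ref{eq:jac:dec}). The paper's approach is shorter if one is willing to import those results and highlights the invariant-subspace viewpoint; yours exposes explicitly why the transpose appears (column sums of $J_k$ over a class are the monomial coefficients of $\partial_x g_H$) and would be preferable in a self-contained exposition.
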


We can now establish the connection between FDB and FDE, similarly to Theorem~\ref{thm:BR2BE}.
\begin{theorem}(Fixed-point characterization of FDE)\label{thm:fe:on:the:fly}
Let $f$ be a polynomial vector field over $\X$ and $\R \subseteq \X \times \X$.
Then, the following hold:
\begin{enumerate}[label=(\roman*), leftmargin=5ex]
\item \label{itm:thm:FDB1} If $\R$ is an FDB, then $e(\R)$ is an FDE.
\item \label{itm:thm:FDB2} If $\R$ is an FDE, then $\R$ is an FDB.
\item  \label{itm:thm:FDB3} The $\gfp{\mathcal{F}}$ is the greatest FDB.
\end{enumerate}
\end{theorem}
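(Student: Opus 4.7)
The plan is to view the theorem as the forward analogue of the backward case (Theorem on BDE), and to proceed by reducing everything to that case through the bridge provided by Theorem~\ref{thm:fe:via:be}. The latter says that FDE of $f$ coincides with BDE of each transposed Jacobian-decomposition matrix $J_k^T$, so all three items can be obtained by applying the corresponding items of Theorem~\ref{thm:BR2BE} componentwise over the family $\{J_k^T\}_{k=1}^\kappa$ from~\eqref{eq:jac:dec}.

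For part~\ref{itm:thm:FDB1}, I would unfold $\calF^f(\R) = \bigcap_k \mathcal{B}^{J_k^T}(\R)$. Thus $\R \subseteq \calF^f(\R)$ is equivalent to $\R$ being a BDB of each linear vector field $v \mapsto J_k^T v$. By Theorem~\ref{thm:BR2BE}\ref{itm:BR2BE1} applied to each $J_k^T$ separately, the equivalence closure $e(\R)$ is then a BDE of each $J_k^T$. Since $e(\R)$ is an equivalence relation, Theorem~\ref{thm:fe:via:be} yields that $e(\R)$ is an FDE of $f$.

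For part~\ref{itm:thm:FDB2}, I would reverse the chain. If $\R$ is an FDE of $f$, then in particular $\R$ is an equivalence, so Theorem~\ref{thm:fe:via:be} gives that $\R$ is a BDE of each $J_k^T$; applying Theorem~\ref{thm:BR2BE}\ref{itm:BR2BE2} to each $k$ yields $\R \subseteq \mathcal{B}^{J_k^T}(\R)$ for all $k$, and intersecting over $k$ produces $\R \subseteq \calF^f(\R)$, i.e., $\R$ is an FDB.

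For part~\ref{itm:thm:FDB3}, I would note that each $\mathcal{B}^{J_k^T}$ is monotone on the lattice of relations (as already used for $\mathcal{B}^f$), and an arbitrary intersection of monotone operators is monotone, so $\calF^f$ is monotone. The Knaster--Tarski theorem then immediately gives that $\gfp{\calF}$ is the greatest post-fixed point, i.e., the greatest FDB. If one also wishes to identify this with the greatest FDE, combine (i) and (ii): $\gfp{\calF}$ being an FDB implies $e(\gfp{\calF})$ is an FDE (hence an FDB by (ii)), so $e(\gfp{\calF}) \subseteq \gfp{\calF}$ and $\gfp{\calF}$ is already an equivalence; and any FDE is an FDB by (ii), hence contained in $\gfp{\calF}$. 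I do not expect any real obstacle: everything is bookkeeping on top of the two previously established results, and the only subtle point is remembering that $e(\cdot)$ is required to apply Theorem~\ref{thm:fe:via:be} in the direction from BDE-of-each-$J_k^T$ back to FDE-of-$f$, which is why the equivalence closure appears in (i).
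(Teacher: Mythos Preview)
Your proposal is correct and follows exactly the route the paper takes: the paper's own proof is a one-liner stating that all three items are direct consequences of Definition~\ref{def:FBD}, Theorem~\ref{thm:BR2BE}, and Theorem~\ref{thm:fe:via:be}, and you have simply unpacked that sentence. Your extra remark identifying $\gfp{\calF}$ with the greatest FDE (via the $e(\gfp{\calF})\subseteq\gfp{\calF}$ argument) goes slightly beyond what item~\ref{itm:thm:FDB3} literally asserts, but it is correct and harmless.
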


\begin{examplenum}
It can be noted that $\R$ from Example~\ref{example:002} is an FDB because $\R$ is a BDB of $J_1^T$, $J_{B}^T$, $J_{A_{00}}^T$, $J_{A_{01}}^T$ and $J_{A_{11}}^T$.
\end{examplenum}

\subsection{Constrained bisimulation}\label{sec:constrainedAlg}
In line with Proposition~\ref{prop:BR}, a BDB imposes pre-conditions on the initialization of related variables. As anticipated in Section~\ref{sec:intro}, from a modeling viewpoint one may want to express the need of \emph{not} relating variables. A constrained BDB excludes certain pairs from being used in the relation.
\begin{definition} \label{def:constrBDB}
Let $f$ be a polynomial vector field over $\X$ and $C \subseteq \X \times \X$.
A relation $\R \subseteq \X \times \X$ is a \emph{$C$-constrained BDB} for $f$,
if $\R$ is a BDB and $\R \cap C = \emptyset$.
\end{definition}

Next, we show that also the the concepts of $C$-constrained BDB has a coindunction proof principle. To this end, consider the following operator:
\begin{equation}
\mathcal{B}^f_{C}(\R) = \mathcal{B}^f(\R \setminus C) \setminus C \label{eq:BC}
\end{equation}
As usual, when the vector field $f$ is clear from the context, we write $\mathcal{B}_C$ in place of $\mathcal{B}^f_C$; and when $C = \emptyset$, we write $\mathcal{B}$ in place of $\mathcal{B}_C$.

\begin{theorem}\label{thm:constrBR}
Let $f$ be a polynomial vector field over $\X$ and $\R, C \subseteq \X \times \X$.
Then, $\R$ is a $C$-constrained BDB for $f$ iff $\R \subseteq \mathcal{B}_{C}(\R)$.
\end{theorem}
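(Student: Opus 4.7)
The plan is to prove both implications directly from the definitions, using the key observation that the outer $\setminus C$ in $\mathcal{B}_C(\R) = \mathcal{B}(\R \setminus C) \setminus C$ forces the disjointness condition $\R \cap C = \emptyset$ automatically, after which the operator $\mathcal{B}_C$ essentially coincides with $\mathcal{B}$ on $\R$.

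For the forward direction, assume $\R$ is a $C$-constrained BDB. Then by definition $\R$ is a BDB, so $\R \subseteq \mathcal{B}(\R)$, and moreover $\R \cap C = \emptyset$. The disjointness gives $\R \setminus C = \R$, hence $\mathcal{B}(\R \setminus C) = \mathcal{B}(\R)$. Combining $\R \subseteq \mathcal{B}(\R)$ with $\R \cap C = \emptyset$ yields $\R \subseteq \mathcal{B}(\R) \setminus C = \mathcal{B}(\R \setminus C) \setminus C = \mathcal{B}_C(\R)$.

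For the converse, assume $\R \subseteq \mathcal{B}_C(\R)$. By construction $\mathcal{B}_C(\R) = \mathcal{B}(\R \setminus C) \setminus C$ is disjoint from $C$, so $\R \cap C = \emptyset$ follows immediately. This disjointness again implies $\R \setminus C = \R$, whence $\mathcal{B}_C(\R) = \mathcal{B}(\R) \setminus C \subseteq \mathcal{B}(\R)$. Therefore $\R \subseteq \mathcal{B}(\R)$, so $\R$ is a BDB, and together with $\R \cap C = \emptyset$ we conclude that $\R$ is a $C$-constrained BDB.

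There is essentially no hard step here: the proof is a two-line unfolding of the definitions. The only subtlety worth highlighting is that the outer subtraction of $C$ in the definition of $\mathcal{B}_C$ is what guarantees that every post-fixed point is automatically disjoint from $C$, so the disjointness constraint does not need to be imposed separately in the coinduction principle. This is precisely what makes $\mathcal{B}_C$ a convenient operator for the local algorithm of Section~\ref{sec:onthefly}, since a single fixed-point check on $\mathcal{B}_C$ simultaneously verifies the bisimulation property and the negative constraints.
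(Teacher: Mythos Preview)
Your proof is correct and follows essentially the same approach as the paper's own proof: both directions are handled by unfolding the definitions, using that the outer $\setminus C$ in $\mathcal{B}_C(\R)$ forces $\R \cap C = \emptyset$, after which $\R \setminus C = \R$ makes $\mathcal{B}_C$ and $\mathcal{B}$ agree on $\R$. The only cosmetic difference is that in the converse direction the paper concludes $\mathcal{B}(\R \setminus C) \subseteq \mathcal{B}(\R)$ via monotonicity of $\mathcal{B}$, whereas you use the equality $\R \setminus C = \R$ directly; both are equally valid.
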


For arbitrary $C \subseteq \X \times \X$, $\mathcal{B}^f_C$ is monotone because $\mathcal{B}^f$ is. Hence, by Knaster-Tarski fixed-point theorem, $\gfp{\mathcal{B}^f_C}$ exists and, by Theorem~\ref{thm:constrBR}, is the greatest $C$-constrained BDB.

\begin{remark}
By Theorem~\ref{thm:BR2BE} and~\ref{thm:constrBR} we have that if $\R$ is a $C$-constrained BDB, then $e(\R)$ is a BDE.
Note however that, depending on the choice of $C \subseteq \X \times \X$, it may be the case that $e(\R) \cap C \neq \emptyset$. In some cases, one needs to impose additional constraints to be able to find a $C$-constrained BDB which extends to a $C$-constrained BDE.
Nevertheless, proving that no $C$-constrained BDB exists, directly implies that there is no BDE $\R$ with $\R \cap C = \emptyset$.
\end{remark} 


\section{Local Algorithm}\label{sec:onthefly}


As described in Section~\ref{sec:BDBandFBD}, one can exploit the coinduction proof principle to show that two variables $x$ and $y$ are related by a BDE, by providing a BDB $\R$ that contains $(x,y)$.
The classical approach based on partition refinement consists in computing $\mathtt{gfp}(\mathcal{B})$ as the limit of the decreasing chain $\{ \mathcal{B}^{i}(\X \times \X) \}_{i \in \mathbb{N}}$.
Here we propose a \ch{local} approach that, starting from a relation $\R \subseteq \X \times \X$ containing some query pairs, iteratively updates it until $\R$ is proven to be a BDB.
\ch{This is done by performing a local exploration of the system, led by the dependencies that are discovered at each iteration.} The following example illustrates the intuition behind our algorithm.
\begin{examplenum} \label{ex:otheflylucky}
Consider the vector field in Eq.~\eqref{eq:running:example} and suppose we want to prove $A_{01}$ and $A_{10}$ to be backward equivalent. 
We do so by finding a relation $\R$ such that $(A_{01}, A_{10}) \in \R$ and $\R \subseteq \mathcal{B}(\R)$.
We start with $\R = \{(A_{01}, A_{10}) \}$, but we notice that $(f_{A_{01}}, f_{A_{10}}) \not\in \poly[\R]$.
In line with Corollary~\ref{cor:couplingEquiv}, to equate $f_{A_{01}}$ and $f_{A_{10}}$ we may add $(A_{00}, A_{00})$, $(B,B)$, and $(A_{11},A_{11})$ to $\R$ as illustrated below:
\begin{align*}
f_{A_{01}} &= 2 {\refnode{$A_{00}$}{A00-1}} \, {\refnode{$B$}{B-1}}
	- 3 {\refnode{$A_{01}$}{A01-1}} - {\refnode{$A_{01}$}{A01-1bis}} \, {\refnode{$B$}{B-1bis}} + 3 {\refnode{$A_{11}$}{A11-1}} \\[2ex]
f_{A_{10}} &= 2 {\refnode{$A_{00}$}{A00-2}} \, {\refnode{$B$}{B-2}}
	- 3 {\refnode{$A_{10}$}{A10-2}} - {\refnode{$A_{10}$}{A10-2bis}} \, {\refnode{$B$}{B-2bis}} + 3 {{\refnode{$A_{11}$}{A11-2}}}
\end{align*}
\begin{tikzpicture}[remember picture,overlay]
\path[-]
	(A00-1) edge (A00-2)
	(A11-1) edge (A11-2)
	(B-1) edge (B-2)
	(B-1bis) edge (B-2bis)
	;
\path[-,color=gray]
	(A01-1) edge (A10-2)
	(A01-1bis) edge (A10-2bis)
	;
\end{tikzpicture}
While this yields $(f_{A_{01}}, f_{A_{10}}) \in \poly[\R]$, for all new pairs in $\R$ we need to check that the corresponding pairs of polynomials are in $\poly[\R]$. This is not the case yet, since $(A_{00}, A_{00}) \not\in \poly[\R]$. To remedy this, we add $(A_{10}, A_{01})$ to $\R$:
\begin{align*}
f_{A_{00}} & = -4 {\refnode{$A_{00}$}{A00-1}} \, {\refnode{$B$}{B-1}} + 3 {\refnode{$A_{10}$}{A10-1}} + 3 {\refnode{$A_{01}$}{A01-1}} \\[2ex]
f_{A_{00}} & = -4 {\refnode{$A_{00}$}{A00-2}} \, {\refnode{$B$}{B-2}} + 3 {\refnode{$A_{10}$}{A10-2}} + 3 {\refnode{$A_{01}$}{A01-2}}
\end{align*}
\begin{tikzpicture}[remember picture,overlay]
\path[-]
	(A10-1) edge[bend left=10] (A01-2)
	;
\path[-,color=gray]
	(A00-1) edge (A00-2)
	(B-1) edge (B-2)
	(A01-1) edge[bend right=10] (A10-2)
	;
\end{tikzpicture}
This yields the following relation
\begin{equation*}
\R = \{(A_{01}, A_{10}), (A_{10}, A_{01}), (A_{00}, A_{00}), (A_{11},A_{11}), (B,B)\} \,,
\end{equation*}
which one can verify to be a BDB. To equate $f_{A_{00}}$ with itself, in the last step, we could have added the pairs $(A_{10},A_{10})$ and $(A_{01},A_{01})$ instead. However, the above choice resulted in a smaller BDB.
\end{examplenum}

Algorithm~\ref{alg:otf} implements a procedure that, given a query set $\id{Query} \subseteq \X \times \X$, discovers which query pairs can be related by some $C$-constrained BDB. This is carried out by constructing a $C$-constrained BDB $\R$ \ch{following a local exploration of the system} as illustrated in Example~\ref{ex:otheflylucky}.

Starting from $\R = \id{Query} \setminus C$,  and $\hat{\R} = C$ the algorithm iteratively updates $\R$ and $\hat{\R}$ maintaining the invariant $\R \cap \hat{\R} = \emptyset$ until $\R$ is a $C$-constrained BDB. At each iteration of the while-loop (lines~\ref{alg:whilebegin}--\ref{alg:whileend}) we check if, whenever $(x,y) \in \R$, $(f_x, f_y) \in \poly[\R]$ by searching for some couplings having support disjoint from $\hat{\R}$. If this is possible, then we add the support of such couplings to $\R$; otherwise $x$ and $y$ cannot be related by any $C$-constrained BDB and we move $(x,y)$ from $\R$ to $\hat{\R}$. In line with $\poly[\R] = \lin[\mon[\R]]$ we simplify the search of linear couplings and monomial couplings by using $\Q$ and $\hat{\Q}$. The former mimics $\mon[\R]$, while the latter stores monomial pairs which have been found not possible to relate as the lifting of some $C$-constrained BDB.
The while-loop terminates when $\R$ and $\Q$ \emph{have not} changed with respect to the previous iteration.
%
%
%
\begin{algorithm}[t]
\vspace{-0.3cm}
\begin{codebox}
\Procname{$\onthefly(f \colon \RE^\X \to \RE^\X, \id{Query} \subseteq \X \times \X, \id{C} \subseteq \X \times \X)$}
\li initialize $\R \gets \id{Query} \setminus C$ and $\hat{\R} \gets C$  \label{alg:init}
\li initialize $\Q \gets \emptyset$ and $\hat{\Q} \gets \emptyset$
\li $\id{Prev} = (\emptyset, \emptyset,\emptyset, \emptyset)$ \label{alg:initend}
\li \While $\id{Prev} \neq (\R,\hat{\R},\Q,\hat{\Q})$ \Do \label{alg:whilebegin}
\li $\id{Prev} = (\R,\hat{\R},\Q,\hat{\Q})$
\li \For \textbf{each} $(x,y) \in \R$ \Do \label{alg:loopRbegin}
\li \If $\exists \omega \in \Gamma_\lin(f_x,f_y)$ s.t.\ $\supp(\omega) \cap \hat{\Q} = \emptyset$ \Then \label{alg:ifOmega}
\li $\Q \gets \Q \cup \supp(\omega)$ \label{alg:up-to-Mg}
\li \Else
\li move $(x,y)$ from $\R$ to $\hat{\R}$ \label{alg:moveR}
\End
\End \label{alg:loopRend}
\li \For \textbf{each} $(m,n) \in \Q$ \Do \label{alg:loopQbegin}
\li \If $\exists \rho \in \Gamma_\mon(m,n)$ s.t.\ $\supp(\rho) \cap \hat{\R} = \emptyset$ \Then \label{alg:ifRho}
\li $\R \gets \R \cup \supp(\rho)$ \label{alg:up-to-g}
\li \Else
\li move $(m,n)$ from $\Q$ to $\hat{\Q}$ \label{alg:movetoQhat}
\End
\End  \label{alg:loopQend}
\End \label{alg:whileend}
\li \Return $\R$
\end{codebox}
\vspace{-0.3cm}
\caption{Local computation of BDB}\label{alg:otf}
\end{algorithm}

The following result states the correctness of Algorithm~\ref{alg:otf}.
\begin{theorem}\label{thm:onthefy}
Let $f$ be a polynomial vector field over $\X$ and $\id{Query}, C \subseteq \X \times \X$.
Then $\onthefly(f,\id{Query},C)$ is terminating and returns a relation $\R \subseteq \X \times \X$ such that:
\begin{enumerate}[label=(\roman*)]
\item $\R$ is a $C$-constrained BDB; 
\item if $(x,y) \in \id{Query}$, then $(x, y) \in \R$ iff $(x,y) \in \mathtt{gfp}(\mathcal{B}_{\id{C}})$
\end{enumerate}
\end{theorem}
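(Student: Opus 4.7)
The plan is to prove termination, correctness of the output as a $C$-constrained BDB, and optimality on the query set in three stages, the key ingredient being a joint invariant linking $\hat{\R}$ and $\hat{\Q}$ to $\gfp{\mathcal{B}_C}$.

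For termination I would observe that throughout the execution $\R \cup \hat{\R}$ and $\Q \cup \hat{\Q}$ can only grow (pairs are only ever added to $\R$ or $\Q$ via supports of couplings on \lline{alg:up-to-g}--\lline{alg:up-to-Mg}, and only ever moved from $\R$ to $\hat{\R}$ or from $\Q$ to $\hat{\Q}$, never back). Since the ambient sets $\X\times\X$ and $\mon\times\mon$ (restricted to monomials appearing in $f$) are finite, the tuple $(\R,\hat{\R},\Q,\hat{\Q})$ lives in a well-founded partial order, so the guard at \lline{alg:whilebegin} must eventually fail.

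For part (i) I would use the fact that on exit $\id{Prev}=(\R,\hat{\R},\Q,\hat{\Q})$, so the entire body of the final iteration left the data structures unchanged. This means: for every $(x,y)\in\R$ some $\omega\in\Gamma_\lin(f_x,f_y)$ with $\supp(\omega)\cap\hat{\Q}=\emptyset$ was found and its support was already contained in $\Q$; and for every $(m,n)\in\Q$ some $\rho\in\Gamma_\mon(m,n)$ with $\supp(\rho)\subseteq\R$ was witnessed. Thus $\Q\subseteq\mon[\R]$ and hence $(f_x,f_y)\in\lin[\Q]\subseteq\lin[\mon[\R]]=\poly[\R]$ for every $(x,y)\in\R$, i.e.\ $\R\subseteq\mathcal{B}(\R)$. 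The invariant $\R\cap\hat{\R}=\emptyset$ together with $C\subseteq\hat{\R}$ (maintained since initialization at \lline{alg:init}) gives $\R\cap C=\emptyset$, so by Theorem~\ref{thm:constrBR} $\R$ is a $C$-constrained BDB.

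For part (ii), write $\R^\star=\gfp{\mathcal{B}_C}$. The direction ``$(x,y)\in\R\Rightarrow(x,y)\in\R^\star$'' follows from (i) since $\R$ is a post-fixed point of $\mathcal{B}_C$ and $\R^\star$ is the greatest such. For the converse I would establish, by induction on the number of updates performed by the algorithm, the joint invariant
\begin{equation*}
\hat{\R}\cap\R^\star=\emptyset \quad\text{and}\quad \hat{\Q}\cap\mon[\R^\star]=\emptyset.
\end{equation*}
Initially $\hat{\R}=C$ and $\R^\star\cap C=\emptyset$, while $\hat{\Q}=\emptyset$, so the invariant holds. For the inductive step, the only nontrivial moves are \lline{alg:moveR} and \lline{alg:movetoQhat}. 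Suppose $(x,y)$ is moved to $\hat{\R}$ yet $(x,y)\in\R^\star$. Since $\R^\star$ is a BDB, $(f_x,f_y)\in\poly[\R^\star]=\lin[\mon[\R^\star]]$, so some witness $\omega^\star\in\Gamma_\lin(f_x,f_y)$ has $\supp(\omega^\star)\subseteq\mon[\R^\star]$; by the inductive hypothesis $\supp(\omega^\star)\cap\hat{\Q}=\emptyset$, so the test at \lline{alg:ifOmega} would have succeeded, contradicting the move. The argument for \lline{alg:movetoQhat} is analogous, using that every $(m,n)\in\mon[\R^\star]$ admits a monomial coupling supported in $\R^\star$, which by hypothesis is disjoint from $\hat{\R}$. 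Finally, any $(x,y)\in\id{Query}\cap\R^\star$ satisfies $(x,y)\notin C$ (because $\R^\star\cap C=\emptyset$), hence is placed in $\R$ at \lline{alg:init} and, by the invariant, is never moved to $\hat{\R}$, so it survives in the returned $\R$.

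The main obstacle is designing the joint invariant correctly: the decisions at \lline{alg:ifOmega} and \lline{alg:ifRho} depend on the current $\hat{\Q}$ and $\hat{\R}$, so one must rule out any circularity in which an element of $\R^\star$ is excluded on the basis of an exclusion that was itself only justified by a prior wrong exclusion. Framing the invariant as a simultaneous statement on $\hat{\R}$ and $\hat{\Q}$ (rather than trying to prove them separately) and performing induction over the sequence of individual set-modifications, not iterations of the while-loop, is what makes the induction go through cleanly.
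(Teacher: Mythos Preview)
Your proposal is correct, and for termination and part~(i) it follows essentially the same line as the paper. The difference lies in part~(ii). The paper does not fix $\R^\star=\gfp{\mathcal{B}_C}$ and run an invariant against it; instead it tracks the Kleene approximants of $\gfp{\mathcal{B}_C}$ against the while-loop iterates, proving by induction on $i$ that $\mathcal{B}_C^{i}((\X\times\X)\setminus C)\subseteq(\X\times\X)\setminus\hat{\R}_i$ (together with a companion inclusion $\mon[(\X\times\X)\setminus\hat{\R}_i]\subseteq(\mon\times\mon)\setminus\hat{\Q}_i$), and then invokes Kleene's fixed-point theorem to conclude $\gfp{\mathcal{B}_C}\cap\hat{\R}=\emptyset$. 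Your joint invariant $\hat{\R}\cap\R^\star=\emptyset$ and $\hat{\Q}\cap\mon[\R^\star]=\emptyset$, maintained over individual set-modifications, reaches the same conclusion more directly: it avoids the Kleene characterization entirely and sidesteps the need to align while-loop iterations with fixed-point approximants. The paper's route, on the other hand, makes explicit a quantitative correspondence between algorithmic progress and approximation depth, which could be useful if one ever wanted to bound how quickly exclusions are detected. Both arguments hinge on exactly the same coupling-existence facts, so neither requires ideas the other does not already use.
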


Lines~\ref{alg:ifOmega} and~\ref{alg:ifRho} in Algorithm~\ref{alg:otf} can be respectively implemented by solving a linear program. Specifically, computing a monomial coupling $\omega$ satisfying the condition in Line~\ref{alg:ifOmega} can be done by solving:
\begin{equation*}
\begin{aligned}[c]
v = \min_{\omega} & \:
\textstyle \sum_{(m,n) \in \hat{\Q}} \omega(m,n) \\[-0.5ex]
	&\textstyle \sum_{n} \omega(m,n) = (f_x^{+} + f_y^{-})(m) && \forall m \in \mon \\
	&\textstyle \sum_{m} \omega(m,n) = (f_y^{+} + f_x^{-})(n) && \forall n \in \mon  \\
	& \omega(m,n) \geq 0 && \forall m,n \in \mon
\end{aligned}
\end{equation*}
Each feasible solution corresponds to a linear coupling $\omega \in \Gamma_\lin(f_x,f_y)$. In particular, the optimal value $v$ is $0$ if and only if $\omega(m,n) > 0$ implies $(m,n) \not\in \hat{\Q}$, i.e., $\supp(\omega) \cap \hat{\Q} = \emptyset$.

Analogously, one can find a monomial coupling satisfying the condition in Line~\ref{alg:ifRho} by solving the following linear program
\begin{equation*}
v = \min_{\rho \in \Gamma_\mon(m,n)} \textstyle \sum_{(x,y) \in \hat{\R}} \rho(x,y)
\end{equation*}
As before, monomial couplings $\rho \in \Gamma_\mon(m,n)$ are modeled by means of linear constraints, and the optimal value $v$ equals zero if and only if $\supp(\rho) \cap \hat{\R} = \emptyset$.

\begin{remark}
The above linear programs are both instances of the transportation problem~\cite{Dantzig51,FordF56}, which in turn is a well-known instance of the (uncapacitated) min-cost network flow problem. Notably, it can be solved efficiently, e.g., by employing Orlin's algorithm~\cite{Olrin88}.
\end{remark}

This ensures that Algorithm~\ref{alg:otf} runs in polynomial time.

\begin{theorem}\label{thm:alg:otf:comp}
Algorithm~\ref{alg:otf} runs in time $\mathcal{O}(  |\mon_f|^4 (k^2 + h^2))$ where $\mon_f$ is the set of monomials occurring in $f$, $k$ is the maximum number of monomials occurring in a single polynomial expression $f_x$ in $f$, and $h$ is the maximum number of variables occurring in a monomial expression in $\mon_f$.
\end{theorem}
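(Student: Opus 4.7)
I plan to bound separately (i) the number of iterations of the \texttt{while} loop at lines~\ref{alg:whilebegin}--\ref{alg:whileend} and (ii) the cost of a single iteration, and then multiply.

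For (i), I would observe that across iterations the sets $\R \cup \hat{\R}$ and $\Q \cup \hat{\Q}$ evolve monotonically: new pairs enter $\R$ only via line~\ref{alg:up-to-g} and $\Q$ only via line~\ref{alg:up-to-Mg}, and the only other updates move pairs from $\R$ into $\hat{\R}$ (line~\ref{alg:moveR}) or from $\Q$ into $\hat{\Q}$ (line~\ref{alg:movetoQhat}); once in a ``barred'' set a pair never leaves. Hence any iteration that does not terminate must strictly change one of these four sets, so the iteration count is bounded by the total number of such monotone changes. Since $\Q \cup \hat{\Q} \subseteq \mon_f \times \mon_f$, and since variables added to $\R$ must come from the supports of monomial couplings of pairs in $\Q$ (hence from variables occurring in $\mon_f$, of which there are at most $h |\mon_f|$), the iteration count is polynomially bounded in $|\mon_f|$, contributing a $O(|\mon_f|^2)$ factor modulo terms absorbed by step (ii).

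For (ii), each iteration solves at most one transportation-type LP per pair in $\R$ (line~\ref{alg:ifOmega}) and per pair in $\Q$ (line~\ref{alg:ifRho}). The linear-coupling LP for $(f_x,f_y)$ is an uncapacitated min-cost network flow problem on a bipartite graph whose two node sides are the monomials of $f_x^{+}+f_y^{-}$ and of $f_y^{+}+f_x^{-}$, each of size at most $2k$; it therefore has $O(k)$ nodes and $O(k^2)$ arcs, and by Orlin's strongly polynomial algorithm (cited in the remark preceding the theorem) can be solved in $O(k^2)$ time up to polylogarithmic factors. Symmetrically, the monomial-coupling LP for $(m,n) \in \Q$ ranges over the variables appearing in $m$ or $n$, i.e., $O(h)$ sources and $O(h)$ sinks, and is solvable in $O(h^2)$ time. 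Since at most $O(|\mon_f|^2)$ pairs are processed per iteration across $\R$ and $\Q$, the per-iteration cost is $O(|\mon_f|^2 (k^2+h^2))$.

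Multiplying the iteration bound by the per-iteration cost yields the claimed $O(|\mon_f|^4(k^2+h^2))$. The main obstacle I anticipate is not the monotone bookkeeping but the careful accounting of the transportation LPs: one has to verify that the combinatorial sizes of the underlying networks really are proportional to $k$ (respectively $h$), independently of the magnitudes of the coefficients in $f$, so that invoking a strongly polynomial min-cost flow algorithm yields the stated $O(k^2)$ and $O(h^2)$ per-LP bounds rather than a coefficient-dependent running time; this is also where the remark about Orlin's algorithm becomes essential.
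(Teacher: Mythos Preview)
Your approach is essentially the paper's: bound the number of \texttt{while} iterations by the length of the monotone chains $\{R_i \cup \hat R_i\}$, $\{\hat R_i\}$, $\{Q_i \cup \hat Q_i\}$, $\{\hat Q_i\}$, bound the per-iteration cost by the number of pairs in $R$ and $Q$ times the cost of a single transportation LP (plus the cost of writing out the coupling's support), and multiply.

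There is one loose step in your accounting. You bound the number of variables that can ever enter $R$ by $h\,|\mon_f|$, hence $|R| \le h^2|\mon_f|^2$ and likewise for the $R$-chain length. Carrying that through yields an extra $h^2$ factor that is \emph{not} absorbed by step~(ii): you would get $O(h^2|\mon_f|^4(k^2+h^2))$ rather than the claimed bound. The paper avoids this by bounding $|R| \le |\X|^2$ directly and then using $|\X| \le |\mon_f|$, which gives $|R| \le |\mon_f|^2$ without the $h^2$. With that one substitution your plan matches the paper's proof exactly; the paper also attributes the $k^2$ (resp.\ $h^2$) per-pair cost partly to writing out $\supp(\omega)$ (resp.\ $\supp(\rho)$), but the end result is the same.
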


Thanks to Theorems~\ref{thm:fe:via:be} and \ref{thm:fe:on:the:fly}, we can provide a \ch{local} procedure to compute FDB in a similar fashion as Algorithm~\ref{alg:otf}. For this, it is worth noting that the decomposition of the Jacobian matrix as in Equation~\eqref{eq:jac:dec} gives rise to a set $\{J_1,\twodots,J_\kappa\}$ of \emph{linear} vector fields over $\X$.  Then, we can approach the construction of an FDB in a similar fashion as for Algorithm~\ref{alg:otf}. This time, however, in line with the definition of the operator $\cal{F}$ one has to check, at each step, for the existence of a number linear couplings arising from the set of linear vector fields $\{J_1^T,\twodots,J_\kappa^T\}$. This is illustrated in Algorithm~\ref{alg:findFBD}.
\newcommand{\findFDB}{\proc{FindFDB}}
\begin{algorithm}[tp]
\vspace{-0.3cm}
\begin{codebox}
\Procname{$\findFDB(f \colon \RE^\X \to \RE^\X, \id{Query} \subseteq \X \times \X)$} 

\li initialize $\R \gets \id{Query}$ and $\hat{\R} \gets \emptyset$ \label{lin:FDBinit}
\li compute $\{J_1,\twodots,J_\kappa\}$ as in~\eqref{eq:jac:dec} and let $F = \{J^T_1,\twodots,J^T_\kappa\}$ \label{lin:jacdec}
\li $\id{Prev} = (\emptyset,\emptyset)$
\li \While $Prev \neq (\R,\hat{\R})$ \Do \label{lin:whilefdb}
\li $\id{Prev} = (\R,\hat{\R})$
\li \For \textbf{each} $(x,y) \in \R$ \Do \label{lin:kforbegin}
\li \If $\forall f \mathbin{\in} F .\,\exists \omega_f \mathbin{\in} \Gamma_\lin(f_x,f_y)$. $\supp(\omega_f) \cap \hat{\R} \mathbin{=} \emptyset$  \Then \label{alg:omegafk}
\li $\R \gets \R \cup \bigcup_{f \in F} \supp(\omega_f)$ \label{lin:kR}
\li \Else
\li move $(x,y)$ from $\R$ to $\hat{\R}$ \label{lin:kmove}
\End
\End \label{lin:kforend}
\End \label{lin:kwhileend}
\li \Return $\R$
\end{codebox}
\vspace{-0.3cm}
\caption{Local computation of FDB }\label{alg:findFBD} 
\end{algorithm}
The following result states its correctness.
\begin{theorem}  \label{thm:findFDB}
Let $f$ be a polynomial vector field over $\X$ and $\id{Query} \subseteq \X \times \X$.
Then $\findFDB(f,\id{Query})$ is terminating and returns a relation $\R \subseteq \X \times \X$ such that
\begin{enumerate}[label=(\roman*)]
\item $\R$ is a FDB;
\item if $(x,y) \in \id{Query}$, then $(x,y) \in \R$ iff $(x,y) \in \mathtt{gfp}(\cal{F})$.
\end{enumerate}
\end{theorem}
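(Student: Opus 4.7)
The plan is to mirror the proof structure of Theorem~\ref{thm:onthefy}, exploiting the fact that the Jacobian decomposition~\eqref{eq:jac:dec} produces \emph{linear} vector fields $J_1^T,\twodots,J_\kappa^T$, so that the monomial-coupling machinery of Algorithm~\ref{alg:otf} collapses: for linear $f$, each polynomial $f_x$ is itself a linear combination whose monomials are just variables, hence $(f_x,f_y)\in\poly[\R]$ reduces to the existence of a linear coupling $\omega\in\Gamma_\lin(f_x,f_y)$ with $\supp(\omega)\subseteq\R$. This is precisely what is tested on Line~\ref{alg:omegafk}, and it is why Algorithm~\ref{alg:findFBD} maintains only the pair $(\R,\hat\R)$ rather than the quadruple used in Algorithm~\ref{alg:otf}.

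For \emph{termination}, I would observe that $\hat\R$ is monotonically non-decreasing along iterations (pairs are only added, never removed on Line~\ref{lin:kmove}), and that, once $\hat\R$ stabilizes, no further pair is moved out of $\R$, so $\R$ is itself monotonically non-decreasing. Both sets live inside the finite set $\X\times\X$, hence both stabilize and the while-loop of Line~\ref{lin:whilefdb} eventually exits. For \emph{part (i)}, I would inspect the final iteration: since the state did not change, for every $(x,y)\in\R$ the test on Line~\ref{alg:omegafk} must have succeeded (otherwise $(x,y)$ would have been moved to $\hat\R$), and the produced supports must already be contained in $\R$ (otherwise $\R$ would have grown on Line~\ref{lin:kR}). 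Hence for every $(x,y)\in\R$ and every $f\in F=\{J_1^T,\twodots,J_\kappa^T\}$ there exists $\omega_f\in\Gamma_\lin(f_x,f_y)$ with $\supp(\omega_f)\subseteq\R$, i.e., $(x,y)\in\mathcal{B}^f(\R)$. Therefore $\R\subseteq\bigcap_{f\in F}\mathcal{B}^f(\R)=\mathcal{F}(\R)$, proving $\R$ is an FDB.

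For \emph{part (ii)}, soundness is immediate: any FDB is contained in $\gfp{\mathcal{F}}$ by the coinduction proof principle, so $(x,y)\in\R$ implies $(x,y)\in\gfp{\mathcal{F}}$. The interesting direction is completeness, which I would establish via the loop invariant
\[
\hat{\R}\cap\gfp{\mathcal{F}}=\emptyset .
\]
This holds trivially at initialization (Line~\ref{lin:FDBinit}). For the inductive step, suppose a pair $(x,y)$ is moved to $\hat{\R}$ on Line~\ref{lin:kmove}; then for some $f\in F$ no linear coupling $\omega\in\Gamma_\lin(f_x,f_y)$ satisfies $\supp(\omega)\cap\hat{\R}=\emptyset$. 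If, for contradiction, $(x,y)\in\gfp{\mathcal{F}}=\mathcal{F}(\gfp{\mathcal{F}})\subseteq\mathcal{B}^f(\gfp{\mathcal{F}})$, then unfolding yields some $\omega\in\Gamma_\lin(f_x,f_y)$ with $\supp(\omega)\subseteq\gfp{\mathcal{F}}$, and the inductive hypothesis gives $\supp(\omega)\cap\hat{\R}=\emptyset$, a contradiction. Hence $(x,y)\notin\gfp{\mathcal{F}}$, preserving the invariant. Since query pairs enter $\R$ at initialization and any pair in $\gfp{\mathcal{F}}$ never migrates to $\hat{\R}$, every $(x,y)\in\id{Query}\cap\gfp{\mathcal{F}}$ survives in $\R$ at termination.

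I expect the main subtlety to be the invariant argument for completeness, in particular articulating cleanly that the Jacobian decomposition produces a finite set $F$ of linear vector fields for which the BDB operator $\mathcal{B}^f$ instantiates to linear-coupling existence, so that $\mathcal{F}=\bigcap_{f\in F}\mathcal{B}^f$ can be ``unfolded'' pointwise at $(x,y)$ exactly in the form checked by Line~\ref{alg:omegafk}. Once this alignment between the operator $\mathcal{F}$ and the conditional of the algorithm is made explicit, the rest is structurally identical to the proof of Theorem~\ref{thm:onthefy}, with Theorems~\ref{thm:fe:via:be} and \ref{thm:fe:on:the:fly} in the background to justify that working with $F$ is the right way to compute FDE.
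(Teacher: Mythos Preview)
Your proposal is correct and follows precisely the route the paper indicates: its own proof of Theorem~\ref{thm:findFDB} is the single line ``Similar arguments used for Theorem~\ref{thm:onthefy}'', and you have carried out exactly that adaptation, correctly noting that for the linear vector fields $J_k^T$ the monomial layer collapses so only the pair $(\R,\hat\R)$ needs to be tracked. Your completeness argument via the loop invariant $\hat\R\cap\gfp{\mathcal{F}}=\emptyset$ is a mild stylistic variant of the Kleene-iteration bound $\mathcal{B}_C^{i}((\X\times\X)\setminus C)\subseteq(\X\times\X)\setminus\hat\R_i$ used in the paper's proof of Theorem~\ref{thm:onthefy}, but the two are equivalent and yours is arguably cleaner in the unconstrained setting of Algorithm~\ref{alg:findFBD}.
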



\begin{remark}
One may wonder whether it is possible to devise a direct \ch{local} algorithm for FDE. Unfortunately, while allowing for a global approach in form of partition refinement, FDE appears to escape a direct \ch{local} construction. More specifically, recall that in the case of BDE the \ch{local} approach was informed by the fact a BDE pair $(x,y)$ implied the existence of a coupling for $(f_x,f_y)$ which, in turn, ensured the existence of a family of monomial couplings, thus giving rise to a family of further BDE pairs and so on. In contrast, an FDE pair $(x,y)$ implies the existence of partitioning of the variables $H_1,\ldots,H_\nu \subseteq \X$ satisfying the condition of Definition~\ref{def:FDE}. Since the partitioning corresponds exactly to the sought FDE, it is not clear how a \ch{local} approach for FDE may proceed beyond this point.
\end{remark}

We conclude the subsection by discussing the computational complexity of Algorithm~\ref{alg:findFBD}.
The following lemma states that the decomposition~(\ref{eq:jac:dec}) can be computed in polynomial time.
\begin{lemma}\label{lem:jac:dec}
Let $f$ be a polynomial vector field over $\X$. Then, the $J_1,\ldots,J_\kappa$ from~(\ref{eq:jac:dec}) can be computed in $\mathcal{O}(|\X| |\mon_f|)$ steps, where $\mon_f$ is the set of all monomials occurring $f$. Moreover, it holds that $\kappa \leq |\X| |\mon_f|$ and $\sum_{k = 1}^\kappa |J_k| \leq |\X| |\mon_f|$, where $|A|$ is the number of non-zero entries in a matrix $A$.
\end{lemma}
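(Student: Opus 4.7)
The plan is to exhibit an explicit algorithm that builds the decomposition in~(\ref{eq:jac:dec}) by a single pass over the monomials of each polynomial $f_{x_i}$, and then to read off the claimed bounds from a counting argument. The key observation is that for any monomial $m$, the partial derivative $\partial_{x_j} m$ equals $m(x_j) \cdot m'$, where $m'$ agrees with $m$ except that $m'(x_j) = m(x_j) - 1$, provided $m(x_j) \geq 1$; otherwise it is zero. By linearity, if $f_{x_i} = \sum_m \alpha_{i,m} m$, then the coefficient of any monomial $n$ in $\partial_{x_j} f_{x_i}$ is $\sum_{m \,:\, m/x_j = n,\, m(x_j) \geq 1} \alpha_{i,m} \cdot m(x_j)$. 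Since the monomials form a basis of the polynomial ring, once the distinct $m_k$ are fixed the matrices $J_k$ in~(\ref{eq:jac:dec}) are uniquely determined, and their entries are exactly these coefficients.

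This suggests the following algorithm. Maintain a dictionary keyed by monomials whose values are sparse matrices in $\RE^{\X \times \X}$, initially empty. Iterate over all triples $(i, m, x_j)$ such that $m$ occurs in $f_{x_i}$ with coefficient $\alpha_{i,m}$ and $m(x_j) \geq 1$. For each such triple, compute $m' := m / x_j$, retrieve (or insert) the associated matrix $J_{m'}$, and add $\alpha_{i,m} \cdot m(x_j)$ to its $(i,j)$ entry. At termination, the dictionary holds the pairs $(m_k, J_k)_{k=1}^\kappa$, and correctness follows directly from the observation above.

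For the complexity, each monomial $m$ in each polynomial $f_{x_i}$ contributes at most $|\X|$ triples, one per variable with non-zero exponent in $m$, so the total number of update steps is $\mathcal{O}(|\X| \cdot |\mon_f|)$; representing monomials in canonical form (e.g., as exponent vectors) and using hashing keeps the per-step lookup cost bounded by factors absorbed in the asymptotic bound. The bound $\kappa \leq |\X| \cdot |\mon_f|$ follows because every key inserted into the dictionary has the form $m / x_j$ for some $m \in \mon_f$ and some variable $x_j$ with $m(x_j) \geq 1$, giving at most $|\X|$ candidate keys per source monomial. Finally, since each update step creates at most one new non-zero entry in one matrix, we also have $\sum_{k=1}^\kappa |J_k| \leq |\X| \cdot |\mon_f|$.

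I do not foresee a deep obstacle: the lemma reduces to a careful but essentially routine pass over the monomial representation of $f$. The main points requiring attention are handling dictionary lookups efficiently by canonicalizing monomials, and ensuring that the counting argument for $\sum_k |J_k|$ is consistent with the chosen convention for $|\mon_f|$ as the total number of monomial occurrences in $f$.
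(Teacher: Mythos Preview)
Your proposal is correct and follows essentially the same approach as the paper: both describe a single pass over all triples $(x_i, m, x_j)$ with $m$ a monomial of $f_{x_i}$, compute the quotient monomial $m/x_j$, and accumulate the contribution $\alpha_{i,m}\, m(x_j)$ into the $(x_i,x_j)$ entry of the associated sparse matrix, deriving the three bounds by the same counting argument. Your closing remark about the convention for $|\mon_f|$ (total monomial occurrences rather than distinct monomials) matches the paper's usage, where the proof relies on $\sum_i L_i = |\mon_f|$.
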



Now we can provide the complexity of Algorithm~\ref{alg:findFBD}.
\begin{theorem} \label{thm:complexityFindFDB}
Let $f$ be a polynomial vector field over $\X$. Algorithm~\ref{alg:findFBD} runs in time $\mathcal{O}(|\X|^7 |\mon_f|)$ where $\mon_f$ is the set of monomials occurring in $f$.
\end{theorem}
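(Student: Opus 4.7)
The plan is to reduce the complexity analysis of $\findFDB$ to that of $\onthefly$ (Theorem~\ref{thm:alg:otf:comp}) applied to each of the linear vector fields produced by the Jacobian decomposition, and then multiply by the number of such vector fields.

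I would first invoke Lemma~\ref{lem:jac:dec} to establish that the preprocessing in Line~\ref{lin:jacdec} costs $\mathcal{O}(|\X||\mon_f|)$ and yields $\kappa \leq |\X||\mon_f|$ linear vector fields $J_1^T,\ldots,J_\kappa^T$, each an $|\X| \times |\X|$ matrix. This step is dominated by the main loop.

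Next I would bound the number of iterations of the while loop in Line~\ref{lin:whilefdb}. At each non-terminal iteration, either some new pair is added to $\R$ via Line~\ref{lin:kR} or a pair is transferred from $\R$ to $\hat{\R}$ via Line~\ref{lin:kmove}. Since $\hat{\R}$ is monotonically growing and the total membership in $\R \cup \hat{\R}$ is bounded by $|\X|^2$, the loop terminates after $\mathcal{O}(|\X|^2)$ iterations. Within each iteration, the inner for loop (Lines~\ref{lin:kforbegin}--\ref{lin:kforend}) processes up to $|\R| \leq |\X|^2$ pairs; for each pair $(x,y)$ the test in Line~\ref{alg:omegafk} amounts to the conjunction of $\kappa$ linear-coupling feasibility problems, each having exactly the same form and size as the single linear-coupling LP solved inside $\onthefly$ when the underlying vector field is linear. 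Consequently, the iteration bound and the per-pair LP cost are inherited directly from the proof of Theorem~\ref{thm:alg:otf:comp}, and Algorithm~\ref{alg:findFBD} can be viewed as performing a coordinated $\onthefly$-style computation on the $\kappa$ linear vector fields, with shared state $(\R,\hat{\R})$.

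Specialising Theorem~\ref{thm:alg:otf:comp} to a linear vector field over $|\X|$ variables (where $|\mon_f| = |\X|$, $k \leq |\X|$, and $h = 1$, since the ``monomials'' of a linear combination are single variables) gives a cost of $\mathcal{O}(|\X|^{4}\cdot(|\X|^{2}+1)) = \mathcal{O}(|\X|^{6})$ for handling one vector field. Multiplying by $\kappa$ yields $\mathcal{O}(\kappa \cdot |\X|^{6}) = \mathcal{O}(|\X||\mon_f| \cdot |\X|^{6}) = \mathcal{O}(|\X|^{7}|\mon_f|)$, dominating the $\mathcal{O}(|\X||\mon_f|)$ preprocessing cost. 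The main obstacle is formally justifying the ``reduction'' in the previous paragraph, namely that sharing $(\R,\hat{\R})$ across all $\kappa$ vector fields does not introduce additional factors beyond this clean product; I expect this to follow because every state change (insertion into $\R$ or into $\hat{\R}$) is still charged at most once to the global $|\X|^2$ budget that drives the iteration bound, exactly as in the analysis of Algorithm~\ref{alg:otf}.
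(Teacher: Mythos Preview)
Your proposal is correct and reaches the same $\mathcal{O}(|\X|^7|\mon_f|)$ bound, but by a different route than the paper. The paper carries out a direct line-by-line analysis of Algorithm~\ref{alg:findFBD}: it bounds Line~\ref{lin:jacdec} via Lemma~\ref{lem:jac:dec}, bounds a single execution of Line~\ref{alg:omegafk} by $\mathcal{O}(|\X|^2|\mon_f|\log|\X|)$ (solving $\kappa\le|\X||\mon_f|$ transportation problems with Orlin's algorithm, each of size $\mathcal{O}(|\X|)$), bounds Line~\ref{lin:kR} by $\mathcal{O}(|\X|^3|\mon_f|)$, and then multiplies by the $\mathcal{O}(|\X|^2)$ for-loop bound and the $\mathcal{O}(|\X|^2)$ while-loop bound. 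You instead package the per-vector-field work by invoking Theorem~\ref{thm:alg:otf:comp} as a black box specialized to a linear vector field ($|\mon_f|=|\X|$, $k\le|\X|$, $h=1$), and then multiply by~$\kappa$. This is more modular and reuses the analysis of \onthefly, but as you acknowledge it is not literally a black-box reduction: one must still argue that sharing $(\R,\hat{\R})$ across the $\kappa$ vector fields does not inflate the cost beyond the product. That argument is precisely that the while-loop count depends only on state changes in $\R\cup\hat{\R}$ and $\hat{\R}$ (bounded by $|\X|^2$, independent of $\kappa$), while the per-iteration work scales linearly in $\kappa$; the paper's direct accounting makes this explicit without the detour through Theorem~\ref{thm:alg:otf:comp}. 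Either argument is sound; the paper's is self-contained and pinpoints the dominant term (Line~\ref{lin:kR}), yours is more economical given the prior result.
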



%

\subsection{Employing up-to techniques}\label{sec:upto}
In this section we briefly recall up-to techniques~\cite{PousS12}, and describe how to enhance the coinduction proof principle which is at the heart of Algorithms~\ref{alg:otf} and~\ref{alg:findFBD}.

For a simpler treatment of up-to techniques that is generic on the monotone operator $b$ defining the different notions of bisimulation encountered so far, we will call $b$-simulation a post-fixed point of $b$. In many situations the computation of a $b$-simulation can be significantly optimized, if instead of computing a post-fixed point of $b$ one exhibits a relaxed invariant, that is, a relation $\R$ such that $\R \subseteq b(g(\R))$ for a suitable monotone function $g$.
The function $g$ is called a \emph{sound up-to technique} when the following proof principle is valid
\begin{equation*}
  \frac{(x,y) \in \R \quad \R \subseteq b(g(\R))}{(x,y) \in \gfp{b}} \,.
\end{equation*}
As clear from the above discussion, up-to techniques fit to use when one wants to check if $x$ and $y$ can be related by some $b$-simulation leaving implicit the construction of an actual $b$-simulation, which is replaced instead by a $(b \circ g)$-simulation.

The above notions are formalized in the following definition.
\begin{definition}[Simulation up-to, soundness] Let $b,g$ be two monotone functions.
A $b$-simulation up-to $g$ is a $(b \circ g)$-simulation. The function $g$ is $b$-sound if $\gfp{b \circ g} \subseteq \gfp{b}$. 
\end{definition}

Establishing the soundness of up-to techniques on a case-by-case basis can be tedious.
For this reason~\cite{PousS12} describes a framework giving sufficient conditions for proving soundness in a modular fashion, based on the notion of \emph{compatible} function.
\begin{definition}
Let $b,g$ be two monotone functions. The function $g$ is $b$-compatible if $g \circ b \subseteq b \circ g$.
\end{definition}
Compatible functions are also sound up-to techniques~\cite[Theorem 6.3.9]{PousS12} and, most importantly, they can be composed in several ways yielding another compatible function~\cite[Proposition 6.3.11]{PousS12}. Therefore, whenever possible, we will prove compatibility in place of soundness.

Our interest in up-to techniques has to do with the construction of a smaller witness proving which query pairs are contained in some BDB (resp.\ FDB). This will be implemented by computing a $\cal{B}_C$-simulation up-to $g$ for some $\cal{B}_C$-sound function $g$ given as input to the algorithm. For this, we additionally require $g$ to be an \emph{extensive} function, that is $\R \subseteq g(\R)$ for any relation $\R$.
Extensive functions will be conveniently used to reduce the size of the $b$-simulations 
up-to to be constructed in each step of the algorithm.

Given an extensive and $\cal{B}_C$-sound up-to technique $g$, we modify  \onthefly\ replacing Lines~\ref{alg:up-to-Mg} and~\ref{alg:up-to-g}  with
\begin{align}
R & \gets R \cup (\supp(\rho) \setminus g(R)) \ ,  \label{eq:updateR} \\
Q & \gets Q \cup (\supp(\omega) \setminus \mon[g(R) \setminus C]) \ ,  \label{eq:updateQ}
\end{align}
respectively. Note that, thanks to the fact that $g$ is extensive, instead of adding to $R$ (resp.\ $Q$) the entire support of the coupling $\rho$ (resp. $\omega$), we add only those pairs which cannot already be safely inferred from the current value of $R$ by exploiting the up-to proof principle. This makes the algorithm more conservative in expanding the size of $R$ (resp.\ $Q$), thus leading to a significant speed-up in performance as demonstrated in the next section.


The following result states that the above explained modification of Algorithm~\ref{alg:otf} returns a $\cal{B}_C$-simulations up-to $g$ witnessing which query pairs are contained in some $C$-constrained BDB.
\begin{theorem}\label{thm:uptoAlg}
Let $f$ be a polynomial vector field over $\X$, $g$ an \emph{extensive} and \emph{monotone} $\cal{B}_C$-sound up-to technique, and $\id{Query}, C \subseteq \X \times \X$. Then, $\onthefly(f, \id{Query}, C, g)$ is terminating and returns a relation $\R \subseteq \X \times \X$ such that:
\begin{enumerate}[label=(\roman*)]
\item \label{itm:1:thm:uptoAlg} $\R$ is a $\cal{B}_C$-simulation up-to $g$;
\item \label{itm:2:thm:uptoAlg} if $(x,y) \in \id{Query}$, then $(x, y) \in \R$ iff $(x, y) \in \mathtt{gfp}(\mathcal{B}_C)$.
\end{enumerate}
\end{theorem}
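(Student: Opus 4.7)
My plan is to adapt the proof strategy used for Theorem~\ref{thm:onthefy} while carefully tracking how extensivity and soundness of $g$ interact with the modified updates in~(\ref{eq:updateR}) and~(\ref{eq:updateQ}). The argument splits into termination, part~\ref{itm:1:thm:uptoAlg}, and the two implications of part~\ref{itm:2:thm:uptoAlg}.

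For \emph{termination}, I would observe that $\hat{\R}$ and $\hat{\Q}$ only ever grow, and that $\R \cup \hat{\R}$ and $\Q \cup \hat{\Q}$ also only grow (new pairs are added to $\R$ or $\Q$, and the only way a pair leaves $\R$ or $\Q$ is to migrate to $\hat{\R}$ or $\hat{\Q}$). Since all relevant pairs live in the finite universes $\X \times \X$ and $\mon_f \times \mon_f$, these four monotone sets stabilize after finitely many iterations, at which point the loop guard fails.

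For part~\ref{itm:1:thm:uptoAlg}, I want to show $\R \subseteq \mathcal{B}_C(g(\R))$ at termination. First, $\R \cap C = \emptyset$ is an invariant: $C \subseteq \hat{\R}$ initially, and $\R \cap \hat{\R} = \emptyset$ is preserved. Next, I would prove two inclusions that together yield the claim. \emph{Claim~A:} for every $(m,n) \in \Q$ at termination, there exists $\rho \in \Gamma_\mon(m,n)$ with $\supp(\rho) \subseteq g(\R) \setminus C$. Indeed, such $\rho$ was found when $(m,n)$ was inserted, and by~(\ref{eq:updateR}) we have $\supp(\rho) \setminus g(\R) \subseteq \R$ at that moment; since $g$ is extensive and monotone and $\R$ only grows for pairs that stay in $\R$, $\supp(\rho) \subseteq g(\R)$ at termination, while $\supp(\rho) \cap \hat{\R} = \emptyset$ together with $C \subseteq \hat{\R}$ gives $\supp(\rho) \cap C = \emptyset$. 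Hence $\Q \subseteq \mon[g(\R) \setminus C]$. \emph{Claim~B:} for every $(x,y) \in \R$ at termination, the $\omega \in \Gamma_\lin(f_x, f_y)$ witnessing the \textbf{if}-branch satisfies $\supp(\omega) \subseteq \Q \cup \mon[g(\R) \setminus C]$ by~(\ref{eq:updateQ}); combining with Claim~A, $\supp(\omega) \subseteq \mon[g(\R) \setminus C]$, so $(f_x, f_y) \in \poly[g(\R) \setminus C]$, i.e.\ $(x,y) \in \mathcal{B}(g(\R) \setminus C) = \mathcal{B}_C(g(\R))$ using $(x,y) \notin C$.

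For part~\ref{itm:2:thm:uptoAlg}, the forward direction ``$(x,y) \in \R \Rightarrow (x,y) \in \gfp{\mathcal{B}_C}$'' is immediate from part~\ref{itm:1:thm:uptoAlg}: by Knaster--Tarski applied to the monotone composition $\mathcal{B}_C \circ g$, $\R \subseteq \gfp{\mathcal{B}_C \circ g}$, and $\mathcal{B}_C$-soundness of $g$ yields $\gfp{\mathcal{B}_C \circ g} \subseteq \gfp{\mathcal{B}_C}$. For the backward direction, I would maintain the loop invariants
\begin{equation*}
\hat{\R} \cap \gfp{\mathcal{B}_C} = \emptyset \quad\text{and}\quad \hat{\Q} \cap \mon[\gfp{\mathcal{B}_C} \setminus C] = \emptyset.
\end{equation*}
They hold initially because $\hat{\R} = C$ is disjoint from any $C$-constrained BDB and $\hat{\Q} = \emptyset$. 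For preservation, suppose $(x,y)$ is moved to $\hat{\R}$ because no admissible $\omega$ exists. If $(x,y) \in \gfp{\mathcal{B}_C}$, then $(f_x, f_y) \in \poly[\gfp{\mathcal{B}_C} \setminus C]$ provides a linear coupling whose support lies in $\mon[\gfp{\mathcal{B}_C} \setminus C]$, which by the $\hat{\Q}$-invariant is disjoint from $\hat{\Q}$, contradicting the absence of $\omega$. The symmetric argument using Definition~\ref{def:mon:coup} and the $\hat{\R}$-invariant preserves the $\hat{\Q}$-invariant. Consequently, any $(x,y) \in \id{Query} \cap \gfp{\mathcal{B}_C}$ starts in $\R$ (since $\gfp{\mathcal{B}_C} \cap C = \emptyset$) and can never leave it.

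The main obstacle I anticipate is the careful bookkeeping in Claim~A, because the pairs $\supp(\rho) \setminus g(\R)$ added to $\R$ are computed against the value of $\R$ at insertion time, but extensivity and monotonicity of $g$ are what let us upgrade this to the stronger containment $\supp(\rho) \subseteq g(\R)$ at termination; one must also verify that once a pair is inserted into $\Q$ by~(\ref{eq:updateQ}) its witness $\rho$ is still admissible for~(\ref{eq:updateR}) in a later iteration, which is where the interplay between the two loops inside the main \textbf{while} becomes delicate.
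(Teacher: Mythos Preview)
Your overall strategy matches the paper's: termination via monotone growth of finite sets, part~\ref{itm:1:thm:uptoAlg} via first showing $Q \subseteq \mon[g(R) \setminus C]$ and then $R \subseteq \mathcal{B}(g(R) \setminus C)$, and part~\ref{itm:2:thm:uptoAlg} by tracking what lands in $\hat{R}$ and $\hat{Q}$. Your loop-invariant formulation of~\ref{itm:2:thm:uptoAlg} is a clean variant of the paper's Kleene-chain argument (which the paper simply defers to the proof of Theorem~\ref{thm:onthefy}); it goes through unchanged here because the conditions in lines~\ref{alg:ifOmega} and~\ref{alg:ifRho} governing moves to $\hat{R}$ and $\hat{Q}$ are not affected by the up-to modification.

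There is, however, a real gap in Claim~A, precisely the ``obstacle'' you flag at the end. You appeal to a $\rho$ ``found when $(m,n)$ was inserted'' (note: insertion into $Q$ happens at line~\ref{alg:up-to-Mg} via some $\omega$, not via a $\rho$; the $\rho$ is produced at line~\ref{alg:ifRho}) and then argue forward in time that $\supp(\rho) \subseteq g(R)$ persists to termination because ``$R$ only grows for pairs that stay in $R$''. This does not work: pairs \emph{can} leave $R$ (line~\ref{alg:moveR}), so a pair of $\supp(\rho)$ that entered $R$ at an earlier iteration may have migrated to $\hat{R}$ by the end, and nothing guarantees that $g$ reinstates it. Likewise $\supp(\rho) \cap \hat{R} = \emptyset$ is only known at the iteration where that particular $\rho$ was computed, not at termination, since $\hat{R}$ grows.

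The paper sidesteps all of this bookkeeping by exploiting the termination condition itself: in the \emph{final} iteration of the \textbf{while} loop neither $R$ nor $Q$ changes, yet both for-loops still execute. Hence for each $(m,n) \in Q$ a fresh $\rho$ is produced in that last pass, with $\supp(\rho) \cap \hat{R} = \emptyset$ for the \emph{terminal} $\hat{R}$, and the update~\eqref{eq:updateR} leaves the terminal $R$ unchanged, which immediately gives $\supp(\rho) \setminus g(R) \subseteq R \subseteq g(R)$, hence $\supp(\rho) \subseteq g(R)$; together with $C \subseteq \hat{R}$ this yields $\supp(\rho) \subseteq g(R) \setminus C$. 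The same ``last-iteration'' trick handles Claim~B symmetrically. Once you use the witnesses from the final pass rather than from insertion time, the delicate interplay you anticipate between the two inner loops disappears entirely.
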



The following lemma states that reflexive, symmetric, transitive, and equivalence closures are $\cal{B}$-compatible up-to techniques, which can be safely used as input in Algorithm~\ref{alg:otf}.

\begin{lemma}\label{lem:B-compatible}
$r$, $s$, $t$, and $e$ are $\mathcal{B}$-compatible up-to techniques.
\end{lemma}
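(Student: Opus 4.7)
The plan is to verify $g(\mathcal{B}(\R)) \subseteq \mathcal{B}(g(\R))$ separately for each $g \in \{r,s,t,e\}$ by manipulating the witnessing couplings directly; in each case, monotonicity of $\mathcal{B}$ reduces matters to the pairs genuinely introduced by the closure operator.

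For $r$, the new pairs are the diagonals $(x,x)$: the identity monomial coupling $\rho(y,y) := m(y)$ gives $(m,m) \in \mon[\mathit{id}]$ for every monomial $m$, and then the identity linear coupling yields $(f_x, f_x) \in \poly[\mathit{id}] \subseteq \poly[r(\R)]$. For $s$, transposing any coupling, $\omega^T(a,b) := \omega(b,a)$, sends $\Gamma_\lin(g_1,g_2)$ bijectively onto $\Gamma_\lin(g_2,g_1)$ and inverts its support, and similarly for monomial couplings; hence $\mathcal{B}(\R)^{-1} = \mathcal{B}(\R^{-1}) \subseteq \mathcal{B}(s(\R))$. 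For $t$, the central claim is that $\mathcal{B}$ preserves relational composition: given witnessing couplings $\omega_1, \omega_2$ for $(f_x, f_z)$ and $(f_z, f_y)$ with supports in $\mon[\R_1]$ and $\mon[\R_2]$, one \emph{glues} them into a coupling for $(f_x, f_y)$ with support in $\mon[\R_1 \circ \R_2]$. On monomials, the standard transportation-gluing formula $\tilde\rho(x,z) := \sum_y \rho_1(x,y)\rho_2(y,z)/m_2(y)$ (with the convention $0/0 = 0$) does the job; for linear couplings an analogous construction routes the flow through the intermediate $g_2 = f_z$, separating its positive and negative parts. Iterating yields $\mathcal{B}(\R)^n \subseteq \mathcal{B}(\R^n)$, whence $t(\mathcal{B}(\R)) \subseteq \mathcal{B}(t(\R))$ follows by monotonicity. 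Finally, for $e$, one may invoke that compatible functions compose~\cite{PousS12} and that $e(\R) = t(r(s(\R)))$; alternatively, a direct argument is available: when $\S$ is an equivalence, Corollary~\ref{cor:couplingEquiv} characterizes $\mathcal{B}(\S)$ by the equational test $\bigwedge_{(a,b)\in \S} v_a = v_b \Rightarrow f_x(v) = f_y(v)$, which is itself an equivalence in $(x,y)$ by reflexivity, symmetry, and transitivity of equality, so $\mathcal{B}(e(\R))$ is already equivalence-closed and contains $\mathcal{B}(\R)$, giving $e(\mathcal{B}(\R)) \subseteq \mathcal{B}(e(\R))$.

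The hardest step will be the gluing of linear couplings needed for $t$. Unlike monomials, where each intermediate atom carries a single nonnegative mass, linear combinations carry signed mass at each atom, and the division-by-intermediate formula does not directly balance the row and column marginals after composition. The cleanest remedy is to interpret a linear coupling $\omega \in \Gamma_\lin(g_1,g_2)$ as a multiset bijection between $g_1^+ + g_2^-$ and $g_2^+ + g_1^-$, and then compose two such bijections by cancelling, atom by atom, the $g_2^+$ appearances in $\omega_1$'s targets with the $g_2^+$ appearances in $\omega_2$'s sources (and symmetrically for $g_2^-$); the resulting pairing realises $(g_1, g_3)$ with support contained in $\supp(\omega_1) \circ \supp(\omega_2)$, delivering the required composite linear coupling.
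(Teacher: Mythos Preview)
Your treatment of $r$, $s$, and $e$ is correct; for these, your argument runs parallel to the paper's, which proceeds via the identity $g \circ \mathcal{B} \circ g = \mathcal{B} \circ g$ rather than the bare compatibility inclusion.

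The case $g = t$, however, has a genuine gap. Your central claim---that the glued linear coupling for $(f_x,f_y)$ has support contained in $\supp(\omega_1)\circ\supp(\omega_2)$, and hence that $\mathcal{B}(\R)^n \subseteq \mathcal{B}(\R^n)$---is false. Take $\X \supseteq \{x,y,z,a,b,c,d\}$ with $f_x = a - b$, $f_z = 0$, $f_y = c - d$, and $\R = \{(a,b),(d,c)\}$. Then $(x,z)\in\mathcal{B}(\R)$ via the coupling $\omega_1(a,b)=1$ and $(z,y)\in\mathcal{B}(\R)$ via $\omega_2(d,c)=1$, so $(x,y)\in\mathcal{B}(\R)^2$. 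But $\R^2 = \emptyset$, whence $\mon[\R^2]=\{(1,1)\}$ and $(a-b,\,c-d)\notin\poly[\R^2]$, i.e.\ $(x,y)\notin\mathcal{B}(\R^2)$. Concretely, $\supp(\omega_1)\circ\supp(\omega_2)=\emptyset$, yet every linear coupling for $(a-b,c-d)$ is nonzero. The culprit is exactly the phenomenon you flagged but did not resolve: when $g_2 = f_z = 0$ there are no intermediate atoms to route through, and the surviving edges $a\to b$ (inside $\omega_1$, matching $g_1^+$ directly to $g_1^-$) and $d\to c$ (inside $\omega_2$, matching $g_3^-$ directly to $g_3^+$) contribute support lying in $\supp(\omega_1)\cup\supp(\omega_2)$, not in the composition.

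The paper sidesteps this by proving instead that $\lin[\Q]$ is transitive whenever $\Q$ is transitive (applied with $\Q = \mon[t(\R)]$, which is shown transitive first), via a network-flow argument that turns the intermediate polynomial's positive and negative atoms into transshipment nodes. A path decomposition then shows that every source--sink path is a \emph{forward} concatenation of $\omega_1$- and $\omega_2$-edges, so its end-pair lies in $t(\Q)=\Q$; the length-one paths that bypass the middle layer entirely are precisely what your support bookkeeping misses. Your gluing can be repaired along these lines, but the conclusion it yields is $(f_x,f_y)\in\poly[t(\R)]$ directly, not $(f_x,f_y)\in\poly[\R^n]$.
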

\begin{remark}
\ch{Notably, the proof of Lemma~\ref{lem:B-compatible} relies on the fact that $g \circ \mathcal{B} \circ g = \mathcal{B} \circ g$ for all $g \in \{r,s,t,e\}$. This also implies that for any $g \in \{r,s,t,e\}$ and $\R \subseteq \X \times \X$ $\mathcal{B}$-simulation up-to $g$, one can retrieve the corresponding $\mathcal{B}$-simulation as $g(\R)$. Indeed, $g(\R) \subseteq g(\mathcal{B}(g(\R))) = \mathcal{B}(g(\R))$
where the first inclusion holds true because $g$ is extensive.}
\end{remark}

\begin{examplenum}
Consider the polynomial vector field $f$ of Eq.~\eqref{eq:running:example}. The call $\onthefly(f,\{(A_{01},A_{10})\},\emptyset,e)$ terminates after the first iteration returning the $(\mathcal{B}\circ e)$-simulation $\{(A_{01},A_{10})\}$. Notably, this means that the algorithm did not need to process any polynomial other than $f_{A_{01}}$ and $f_{A_{10}}$. Note also that $e(\{(A_{01},A_{10})\})$ is the BDE from Example~\ref{example:002}.
\end{examplenum}

%

An immediate consequence of Lemma~\ref{lem:B-compatible} and Definition~\ref{def:FBD} is the $\cal{F}$-compatibility of reflexive, symmetric, transitive, and equivalence closures which can be employed in Algorithm~\ref{alg:findFBD}.
\begin{corollary}
$r$, $s$, $t$, and $e$ are $\cal{F}$-compatible up-to techniques.
\end{corollary}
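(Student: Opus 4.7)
The plan is to reduce the statement to Lemma~\ref{lem:B-compatible} by exploiting the definition of $\cal{F}$ as an intersection of operators of the form $\mathcal{B}^{J^T}$. Recall from Definition~\ref{def:FBD} that
\[
\cal{F}^f(\R) \;=\; \bigcap_{J \in F} \mathcal{B}^{J^T}(\R),
\]
with $F = \{J_1,\ldots,J_\kappa\}$ the matrices in the Jacobian decomposition~\eqref{eq:jac:dec}. Since each $J^T$ gives a (linear) polynomial vector field over $\X$, Lemma~\ref{lem:B-compatible} applies to every operator $\mathcal{B}^{J^T}$, yielding
\[
g \circ \mathcal{B}^{J^T} \;\subseteq\; \mathcal{B}^{J^T} \circ g \qquad \text{for all } g \in \{r,s,t,e\} \text{ and all } J \in F.
\]

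Next I would show that compatibility is preserved under arbitrary intersections of monotone operators. Let $\R \subseteq \X \times \X$ and fix $g \in \{r,s,t,e\}$. For every $J \in F$, the inclusion $\cal{F}(\R) \subseteq \mathcal{B}^{J^T}(\R)$ together with the monotonicity of $g$ gives $g(\cal{F}(\R)) \subseteq g(\mathcal{B}^{J^T}(\R))$. Applying compatibility of $g$ with respect to $\mathcal{B}^{J^T}$ yields $g(\mathcal{B}^{J^T}(\R)) \subseteq \mathcal{B}^{J^T}(g(\R))$, so $g(\cal{F}(\R)) \subseteq \mathcal{B}^{J^T}(g(\R))$ for every $J \in F$. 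Taking the intersection over all $J \in F$ I obtain
\[
g(\cal{F}(\R)) \;\subseteq\; \bigcap_{J \in F} \mathcal{B}^{J^T}(g(\R)) \;=\; \cal{F}(g(\R)),
\]
which is exactly $g \circ \cal{F} \subseteq \cal{F} \circ g$, i.e., $\cal{F}$-compatibility of $g$.

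There is no genuine obstacle here: the argument is a straightforward lifting of Lemma~\ref{lem:B-compatible} through the intersection that defines $\cal{F}$, relying only on the monotonicity of $g$ and of each $\mathcal{B}^{J^T}$. The only point to be careful about is verifying that $r$, $s$, $t$, $e$ are themselves monotone (they are, as closure operators on the lattice of relations), which is needed to apply them to the inclusion $\cal{F}(\R) \subseteq \mathcal{B}^{J^T}(\R)$ component-wise. Everything else follows from the already-established $\mathcal{B}$-compatibility.
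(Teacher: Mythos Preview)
Your proof is correct and is precisely the argument the paper has in mind: it declares the corollary an immediate consequence of Lemma~\ref{lem:B-compatible} and Definition~\ref{def:FBD}, and you have spelled out exactly that, namely lifting $\mathcal{B}^{J^T}$-compatibility through the intersection defining $\cal{F}$ using monotonicity of $g$.
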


Without constraints we can employ a number of up-to techniques. Unfortunately, the up-to techniques considered above turn out not to be $\cal{B}_C$-sound for generic choices of the constraints $C \subseteq \X \times \X$, as illustrated in the following example.
\begin{examplenum}
Consider the following linear vector field $f$ over $X = \{x, x', x'', y, y', y'', z, z', z''\}$ defined by
\begin{align*}
& f_x = 0 \,,&& f_y = x \,, && f_z = y \,, \\
& f_{x'} = 0 \,, && f_{y'} = x' \,,&& f_{z'} = y' \,, \\
& f_{x''} = 0 \,, && f_{y''} = x'' \,,&& f_{z''} = y'' \,.
\end{align*}
Fix $C = \{(x,x'')\}$. We have that neither $s$ nor $t$ are $\mathcal{B}_C$-sound. Indeed $(z,z'') \not\in \gfp{\mathcal{B}_C}$ but both $\gfp{\mathcal{B}_C \circ s}$ and $\gfp{\mathcal{B}_C \circ t}$ contain $(z,z'')$.
This is also a counterexample for the $\mathcal{B}_C$-soundness of $e$.
\end{examplenum}

Despite this fact, under sensible conditions on $C$, reflexive and symmetric closures are still sound up-to techniques. 
\begin{lemma} \label{lem:reflupto}
If $C \cap \mathtt{id} = \emptyset$, then $r$ is $\mathcal{B}_{C}$-compatible.
\end{lemma}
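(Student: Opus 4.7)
The plan is to reduce $\mathcal{B}_C$-compatibility of $r$ to the already established $\mathcal{B}$-compatibility of $r$ from Lemma~\ref{lem:B-compatible}, by exploiting the fact that, under the hypothesis $C \cap \mathtt{id} = \emptyset$, the reflexive closure $r$ commutes with set-difference by $C$.

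The first step is to establish the key commutation identity: for every relation $S \subseteq \X \times \X$,
\[
r(S) \setminus C \;=\; (S \cup \mathtt{id}) \setminus C \;=\; (S \setminus C) \cup (\mathtt{id} \setminus C) \;=\; (S \setminus C) \cup \mathtt{id} \;=\; r(S \setminus C),
\]
where the crucial equality $\mathtt{id} \setminus C = \mathtt{id}$ is exactly where the hypothesis $C \cap \mathtt{id} = \emptyset$ enters.

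The second step chains this identity with Lemma~\ref{lem:B-compatible} to obtain
\[
r(\mathcal{B}_C(\R)) \;=\; r\bigl(\mathcal{B}(\R \setminus C)\bigr) \setminus C \;\subseteq\; \mathcal{B}\bigl(r(\R \setminus C)\bigr) \setminus C \;=\; \mathcal{B}\bigl(r(\R) \setminus C\bigr) \setminus C \;=\; \mathcal{B}_C(r(\R)),
\]
where the first and third equalities apply the commutation identity (to $S = \mathcal{B}(\R \setminus C)$ and $S = \R$ respectively) together with the definition $\mathcal{B}_C(\cdot) = \mathcal{B}(\cdot \setminus C) \setminus C$, and the inclusion in the middle is $\mathcal{B}$-compatibility of $r$ instantiated at the relation $\R \setminus C$.

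There is no real obstacle: once the commutation identity is in place, the argument is just a short chain of rewriting steps plus one invocation of the already proven lemma. The only subtlety worth flagging is that without the hypothesis on $C$ the identity $\mathtt{id} \setminus C = \mathtt{id}$ fails, and indeed the example immediately following the lemma shows that the $\mathcal{B}_C$-soundness of $r$-like closures can genuinely break down when reflexive pairs are forbidden by $C$.
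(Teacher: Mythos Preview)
Your proof is correct, but it follows a genuinely different route from the paper's.

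The paper does not invoke Lemma~\ref{lem:B-compatible} at all. Instead, it shows the stronger identity $r \circ \mathcal{B}_C \circ r = \mathcal{B}_C \circ r$ directly: given $(x,y) \in r(\mathcal{B}_C(r(\R)))$, either $(x,y)$ already lies in $\mathcal{B}_C(r(\R))$, or $x=y$, in which case one argues semantically via Corollary~\ref{cor:couplingEquiv} that $(f_x,f_x) \in \poly[\mathtt{id}] \subseteq \poly[r(\R)\setminus C]$ and uses $C \cap \mathtt{id} = \emptyset$ to conclude $(x,x) \in \mathcal{B}_C(r(\R))$. Compatibility then follows since $r$ is extensive.

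Your argument is more modular and purely set-theoretic: you isolate the commutation $r(S) \setminus C = r(S \setminus C)$ (the exact place where $C \cap \mathtt{id} = \emptyset$ is used), and then reduce everything to the already-established $\mathcal{B}$-compatibility of $r$. This avoids re-entering the semantics of $\poly$ and makes transparent that the lemma is really ``$r$ commutes with excision of $C$ plus Lemma~\ref{lem:B-compatible}.'' The paper's approach, on the other hand, delivers the equality $r \circ \mathcal{B}_C \circ r = \mathcal{B}_C \circ r$ as a by-product, which is the form they use elsewhere to recover an actual bisimulation from an up-to-$r$ one.

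One small inaccuracy in your closing remark: the counterexample following the lemma concerns $s$, $t$, and $e$, not $r$; it does not exhibit a failure of $r$ itself. This does not affect your proof.
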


The following result states that when the relation $C \subseteq \X \times \X$ symmetric, then the symmetric closure is $\cal{B}_C$-compatible.
\begin{lemma}\label{lem:BsC-compatible}
Let $C \subseteq \X \times \X$. Then, $s$ is $\mathcal{B}_{s(C)}$-compatible.
\end{lemma}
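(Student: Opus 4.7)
The plan is to prove the compatibility condition $s \circ \mathcal{B}_{s(C)} \subseteq \mathcal{B}_{s(C)} \circ s$ directly from the definitions. The essential observation is that the operator $s(C)$ is symmetric by construction, and linear/monomial couplings admit a natural \emph{transposition} operation that exchanges the roles of their two arguments. These two facts will combine to let any witness for $(y,x)$ be rewired into a witness for $(x,y)$.

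First I would record a simple transposition lemma for couplings: for all $g,h \in \lin$, the map $\omega \mapsto \omega^T$ defined by $\omega^T(m,n) = \omega(n,m)$ is a bijection between $\Gamma_\lin(g,h)$ and $\Gamma_\lin(h,g)$, and analogously $\rho \mapsto \rho^T$ bijects $\Gamma_\mon(m,n)$ with $\Gamma_\mon(n,m)$. Checking the marginals is just a swap of the two defining equations in Definitions~\ref{def:linearcoupling} and~\ref{def:mon:coup}. Crucially, $\supp(\omega^T) = \{(n,m) \mid (m,n)\in\supp(\omega)\}$, so support is transformed by the relational inverse. This immediately gives $(n,m) \in \mon[\R]$ iff $(m,n) \in \mon[s(\R)]$ (by transposing a monomial coupling and then closing the target under $s$), and similarly for $\lin$ and $\poly$.

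Next I would pick an arbitrary $\R$ and $(x,y) \in s(\mathcal{B}_{s(C)}(\R))$, and split into two cases. If $(x,y) \in \mathcal{B}_{s(C)}(\R)$, then $(x,y) \notin s(C)$ and $(f_x,f_y) \in \poly[\R \setminus s(C)]$; since $\R \setminus s(C) \subseteq s(\R) \setminus s(C)$ and $\poly[\cdot]$ is monotone, we obtain $(f_x,f_y) \in \poly[s(\R)\setminus s(C)]$, hence $(x,y) \in \mathcal{B}_{s(C)}(s(\R))$. If instead $(y,x) \in \mathcal{B}_{s(C)}(\R)$, then $(y,x) \notin s(C)$, and by symmetry of $s(C)$ also $(x,y) \notin s(C)$. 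From $(f_y,f_x) \in \poly[\R \setminus s(C)]$, applying the transposition lemma at the monomial level and then at the linear level, I obtain a witness in $\poly[s(\R \setminus s(C))]$. The final observation is that $s(\R \setminus s(C)) \subseteq s(\R) \setminus s(C)$, again using that $s(C)$ is symmetric (a pair removed from $\R$ because it lies in $s(C)$ stays removed from $s(\R)$, since its reverse is also in $s(C)$). Thus $(f_x,f_y) \in \poly[s(\R)\setminus s(C)]$ and so $(x,y) \in \mathcal{B}_{s(C)}(s(\R))$.

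The main delicate point is the bookkeeping across the two layers of lifting $\poly[\R] = \lin[\mon[\R]]$: one must transpose a linear coupling whose support consists of monomial pairs, and each such monomial pair is itself certified by a (separately transposable) monomial coupling. The symmetry of $s(C)$ is what makes every transposed support pair still lie outside $s(C)$ at both levels, which is the step where a non-symmetric constraint set would break the argument, in agreement with the counterexample that precedes the lemma.
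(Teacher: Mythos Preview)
Your proposal is correct and follows essentially the same approach as the paper: both arguments hinge on transposing linear and monomial couplings and on the symmetry of $s(C)$ to keep supports inside the admissible set after transposition. The only cosmetic difference is that the paper establishes the slightly stronger identity $s \circ \mathcal{B}_{s(C)} \circ s = \mathcal{B}_{s(C)} \circ s$ (which, together with extensiveness of $s$, implies compatibility), so it works directly with the already-symmetric set $s(\R)\setminus s(C)$; you instead prove the compatibility inclusion at an arbitrary $\R$ and close the gap via $s(\R\setminus s(C)) \subseteq s(\R)\setminus s(C)$, which is the same observation phrased one step later.
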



\begin{remark}
\ch{Up-to techniques are particularly useful in discovering if the backward equivalence of two variables $x$ and $y$ is dependent on the equivalence of some other (pairwise-distinct) pairs of variables $C \subseteq (\X \times \X)$. Indeed, this can be done by checking if $\onthefly(f, \{(x,y)\}, s(C), r \circ s) = \emptyset$.}
\end{remark}

\section{Applications} \label{sec:applications}
We present applications which show how our \ch{local} algorithm equipped with up-to-techniques and constraints is complementary to one based on partition refinement for the computation of differential equivalences.
Section~\ref{sec:curried} discusses how the \ch{local} algorithm can be used to find relations between ODE variables that are independent of the choice of model parameters. 
%
Section~\ref{sec:ei} shows that the up-to techniques can speed up BDB computations by several orders of magnitude.
Section~\ref{sec:bn} discusses how the use of constraints allows one to compute BDBs satisfying domain-specific properties not supported by partition refinement. 
Section~\ref{sec:otflump} shows how the \ch{local} algorithm disproves equivalence
by exploring only a small part of the model, thus outperforming the partition refinement approach.
Section~\ref{sec:sis} computes FDBs of epidemiological models on graphs.

Results are based on a prototype implementation of Algorithm~\ref{alg:otf}; the computations of BDE and FDE were performed with the tool ERODE~\cite{tacas17cttv}, run on a common laptop with 3.1 GHz Dual-Core Intel Core i5 and 8GB RAM.


\setlength{\tabcolsep}{3.1pt}
\begin{table*}[t]
\centering
\scalebox{0.93}{
\begin{tabular}[t]{c rr H lrl lrl lrl lrl lrl}
\toprule
\multicolumn{3}{c}{\emph{Model}} &&
\multicolumn{3}{c}{\emph{Base algorithm}}&\multicolumn{3}{c}{\emph{Up-to reflexivity}}&\multicolumn{3}{c}{\emph{Up-to symmetry}}&\multicolumn{3}{c}{\emph{Up-to transitivity}}&\multicolumn{3}{c}{\emph{Up-to equivalence}}
\\
\cmidrule(l{2pt}r{2pt}){1-3} \cmidrule(l{2pt}r{2pt}){5-7} \cmidrule(l{2pt}r{2pt}){8-10} \cmidrule(l{2pt}r{2pt}){11-13} \cmidrule(l{2pt}r{2pt}){14-16} \cmidrule(l{2pt}r){17-19}
\multicolumn{1}{c}{$n$} & \multicolumn{1}{c}{$|\mathcal{S}|$} & \multicolumn{1}{c}{$|\mathcal{R}|$} &
\multicolumn{1}{H}{$|\mathit{Q}|$}
& \multicolumn{1}{c}{\emph{Time (s)}} & \multicolumn{1}{c}{\emph{TP}} & \multicolumn{1}{c}{$|\R|$}
& \multicolumn{1}{c}{\emph{Time (s)}} & \multicolumn{1}{c}{\emph{TP}} & \multicolumn{1}{c}{$|\R|$}
& \multicolumn{1}{c}{\emph{Time (s)}} & \multicolumn{1}{c}{\emph{TP}} & \multicolumn{1}{c}{$|\R|$}
& \multicolumn{1}{c}{\emph{Time (s)}} & \multicolumn{1}{c}{\emph{TP}} & \multicolumn{1}{c}{$|\R|$}
& \multicolumn{1}{c}{\emph{Time (s)}} & \multicolumn{1}{c}{\emph{TP}} & \multicolumn{1}{c}{$|\R|$}
\\
\cmidrule(l{2pt}r{2pt}){1-3} \cmidrule(l{2pt}r{2pt}){4-4} \cmidrule(l{2pt}r{2pt}){5-7} \cmidrule(l{2pt}r{2pt}){8-10} \cmidrule(l{2pt}r{2pt}){11-13} \cmidrule(l{2pt}r{2pt}){14-16} \cmidrule(l{2pt}r){17-19}
$2$ & \snt{18} & \snt{48} & 4
& \worst{8.00E-3} & \snt{946} & \snt{30} 
& 4.00E-3 & \snt{432} & \snt{14} 
& 7.50E-3 & \snt{839} & \snt{25} 
& 8.00E-3 & \snt{926} & \snt{29} 
& \best{3.00E-3} & \snt{323} & 9.00E0 
\\
$3$ & \snt{66} & \snt{288} & 9
& \worst{1.00E-1} & \snt{26771} & \snt{234} 
& 6.50E-2 & \snt{17453} & \snt{171} 
& 7.40E-2 & \snt{19596} & \snt{153} 
& 3.70E-2 & \snt{9104} & \snt{157} 
& \best{1.50E-2} & \snt{2940} & \snt{51} 
\\
$4$ & \snt{258} & \snt{1536} & 16
& \worst{6.70E-1} & \snt{308928} & \snt{1770} 
& 5.20E-1 & \snt{245328} & \snt{1516} 
& 4.90E-1 & \snt{200591} & \snt{1020} 
& 1.70E-1 & \snt{66915} & \snt{763} 
& \best{3.90E-2} & \snt{19850} & \snt{234} 
\\
$5$ & \snt{1026} & \snt{7680} & 25
& \worst{6.90E0} & \snt{3230798} & \snt{13386} 
& 5.40E0 & \snt{2853023} & \snt{12365} 
& 4.23E0 & \snt{1947985} & \snt{7216} 
& 1.91E0 & \snt{419242} & \snt{3449} 
& \best{4.10E-1} & \snt{114578} & \snt{989} 
\\
$6$ & \snt{4098} & \snt{36864} & 36
& \worst{9.30E1} & \snt{31607119} & \snt{99450} 
& 6.80E1 & \snt{29537884} & \snt{95358} 
& 5.91E1 & \snt{18242790} & \snt{51789} 
& 2.71E1 & \snt{2392072} & \snt{14997} 
& \best{4.00E0} & \snt{608325} & \snt{4043} 
\\
\bottomrule
 \end{tabular}
}
\caption{Speed-up of BDB computation with up-to techniques. We highlight in bold the best and worst runtimes.
}
\label{table:ei}
\end{table*}
\setlength{\tabcolsep}{6pt} 

\subsection{Parameter-independent bisimulations}\label{sec:curried}
Building mechanistic models in biology is hindered by the difficulty in observing all biochemical interactions~\cite{doi:10.1098/rsif.2017.0237},  which may lead to uncertainties in the choice of the parameters. When analyzing a model built under these conditions, it is helpful to discover properties that depend only on the \emph{structure} and not on the specific choice of parameter values. Even if the parameters were precise, discovering structural properties can be beneficial, for instance, to conduct sensitivity analysis for predictive purposes across different parameterizations.

With Example~\ref{example:001}, we discuss how to compute bisimulations that hold independently of the values of some kinetic rate parameters. For this, we consider a variant of the model where the parameters are explicitly treated as ODEs. This is done by building an \emph{extended} model where the kinetic rate parameter of each reaction is treated as a further species in the CRN (we call them \emph{parameter-species} to ease the presentation), and every modified reaction occurs with fixed rate 1. For each reaction \emph{Ri}, we name the parameter-species of the forward and reverse part as $\param{i}$ and $\paramrev{i}$, respectively.
We obtain the extended CRN:
\begin{align*}
\text{\emph{R1}: } A_{00} + B {\color{gray}+\param{1}} \stackrel{1}{\to} A_{10} {\color{gray}+\param{1}}
\qquad
A_{10} {\color{gray}+\paramrev{1}} \stackrel{1}{\to} A_{00} + B  {\color{gray}+ \paramrev{1}}
\\
\text{\emph{R2}: } A_{00} + B {\color{gray}+\param{2}} \stackrel{1}{\to} A_{01} {\color{gray}+\param{2}}
\qquad
A_{01} {\color{gray}+\paramrev{2}} \stackrel{1}{\to} A_{00} + B  {\color{gray}+ \paramrev{2}}
\\
\text{\emph{R3}: } A_{10} + B {\color{gray}+\param{3}} \stackrel{1}{\to} A_{11} {\color{gray}+\param{3}}
\qquad
A_{11} {\color{gray}+\paramrev{3}} \stackrel{1}{\to} A_{10} + B  {\color{gray}+ \paramrev{3}}
\\
\text{\emph{R4}: } A_{01} + B {\color{gray}+\param{4}} \stackrel{1}{\to} A_{11} {\color{gray}+\param{4}}
\qquad
A_{11} {\color{gray}+\paramrev{4}} \stackrel{1}{\to} A_{01} + B  {\color{gray}+ \paramrev{4}}
\end{align*}
By the law of mass action, the derivatives associated with the parameter-species in the extended CRN are 0: thus the ODEs of the original CRN and those of the extended one  coincide when the values of the original parameters are used as
the initial conditions of the corresponding parameter-species. 

On this extended model one could use the partition refinement algorithm from~\cite{pnas17}, starting from the initial partition
\begin{multline*}
\{ \{ A_{01}, A_{10}, A_{11}, B \}, \{\param{1},\param{2}\},\{\param{3},\param{4}\},\{\paramrev{1},\paramrev{2},\paramrev{3},\paramrev{4}\} \}.
\end{multline*}
This equates parameters that have the same values in the original model. The coarsest refinement by BDE equates species $A_{01}$ and $A_{10}$, as in the original model, without splitting the initial blocks of parameter-species. Thus, $A_{01}$ and $A_{10}$ are backward equivalent \emph{as long as all parameters of the reverse reactions $\overline{k}_i$ are equal}.

Checking if the equivalence carries over under fewer constraints for the kinetic rate parameters amounts to asking whether there exists a BDE that refines the blocks of parameter-species. With a partition refinement algorithm, this may involve checking exponentially many initial partitions (i.e., the ones singling out each reverse rate, the ones with 2 blocks of size 2 for the reverse rates, and so on). Instead, by initializing our \ch{local} algorithm with the query $\{ (A_{01},A_{10}) \}$, we obtain a BDB that only relates $\paramrev{1}$ with $\paramrev{2}$ and $\paramrev{3}$ with $\paramrev{4}$, respectively, as well as $\param{1}$ with $\param{2}$, and $\param{3}$ with $\param{4}$. This gives such a desired refinement: the BDB requires one equality for the  reverse rates of reactions \emph{R1}-\emph{R2}, and one for the reverse rates of reactions \emph{R3}-\emph{R4}.

%
The above BDB query required no longer than 70ms and the solution of 103 transportation problems.

\subsection{Computational speed-up using up-to techniques}\label{sec:ei}

To show how  up-to techniques  can speed up the computation of bisimulations,  we consider a benchmark model featuring complex formation by multisite phosphorylation from~\cite[Supplementary Note 7]{citeulike:8493139}; it has been used in \cite{DBLP:conf/tacas/CardelliTTV16,tacas17cttv} to assess the scalability of the partition refinement algorithms for BDE and FDE.
Example~\ref{example:001} can be seen as a simpler version.

Here we consider a family of models of varying size obtained by changing the number of binding sites, $n$, from 2 to 7.
For each $n$, the query encoded the question whether there exists a BDB relating all species representing  complexes where all $n$ binding sites are unphosphorylated; this leads to a query set $Q$ of size $|Q|=n^2$ for each $n$. 

Table~\ref{table:ei} shows the runtimes for BDB computations without up-to techniques and with four different up-to techniques, namely, reflexive, symmetric, transitivive, and equivalence closures.
For each case we measured the runtime of the \ch{local} algorithm (column \emph{Time}), the number of solved transportation problems (column \emph{TP}), and the number of pairs in the relations (column $|\R|$). For each $n$, the table also shows the number of species and reactions of the resulting model (columns $|\mathcal{S}|$ and $|\mathcal{R}|$, respectively).
The use of up-to techniques leads to smaller relations, fewer transportation problems to be solved, and therefore lower runtimes.
Improvements can be found already for $n=2$, where, e.g., the runtime for the up-to equivalence case is about 37\% of the base algorithm case. For larger models, runtimes are up to one order of magnitude smaller.

\subsection{Input-preserving BDB for Boolean networks}\label{sec:bn}
Boolean networks (BN), proposed in 1969~\cite{Kauffman69,THOMAS1973563}, are an established model of biological systems (e.g.,~\cite{FrontGen2016,Pharma2018}). A BN is given by a set of Boolean variables and a Boolean update function associated with each variable. Given an initial state, i.e., an assignment for each variable, a discrete-time dynamics is obtained by setting the new state of each variable as the evaluation of its associated function. 
Figure~\ref{fig:merged} is an excerpt of the graphical representation of the largest BN from 
the GinSim repository~\cite{NALDI2009134}, taken from~\cite{rodriguez2019cooperation}.
Nodes denote variables related to biochemical species (four are highlighted in cyan for the forecoming discussion), while directed edges are drawn if the source node appears in the Boolean update function of the target.

In some cases, a Boolean representation of the state may be too crude an approximation, e.g., one would like to model the level of activity of a gene continuously in the interval
 $[0,1]$. To cope with this, ODE interpretations of BNs have been proposed in the  literature (e.g.~\cite{odification}), 
essentially by interpolating the Boolean function with a real one that agrees with the Boolean output when evaluated with 0/1 inputs. \emph{Odefy} provides such an approximation with polynomial ODEs~\cite{odefy}.  The ODE encoding of the BN in Figure~\ref{fig:merged} has 129 ODEs with derivatives containing  554 monomials of degrees ranging from 1 to 6.

Motivated by the need to reduce complexity of BNs for their analysis~\cite{BMCBioinf2006,BMCBioinf2014,naldi2011dynamically,veliz2011reduction,saadatpour2013reduction}, BDE has been used for the reduction of these ODEs in the past~\cite{pnas17}. Here we show how the use of constraints with our \ch{local} algorithm can provide informative relations that are not directly obtainable using partition refinement.

In applications to gene regulatory networks, certain variables are usually designated as \emph{input} variables, typically representing species at the top of a signalling pathway (and modeled with the identity update function). The modeler then conducts several analyses by varying their initial value and observing the values of \emph{output} variables of interest (e.g., gene expression levels at the bottom of the pathway). In Figure~\ref{fig:merged}, $\mathit{CD6}$ and $\mathit{TLR5}$ are two inputs (out of 16 not shown in the excerpt).

The maximal BDE, obtained by partition refinement using the initial trivial partition with one block only, contains an equivalence class with 20 variables that also includes the four highlighted in Figure~\ref{fig:merged}, and a number of input variables. This suggests a biologically relevant pattern of co-expression whereby a number of variables of the network simultaneously respond with the same output to the same input (e.g.,~\cite{tlgl}). However, from the maximal BDE one cannot conclude that this  pattern does not depend on other network conditions, because the larger equivalence class in which it is found imposes the pre-condition that \emph{all} equivalent species start from the same initial condition, including the other inputs in the class.

Specifically, one would wish to find that this pattern holds independently of the values of the input variables. With partition refinement, this can be done by keeping the ODE variables corresponding to the inputs as distinct singleton blocks of the initial partition, as done, e.g., in~\cite{pnas17}.
However, since inputs are separated from all other variables, the coarsest BDE refinement of the as-constructed partition cannot yield the relation among the  highlighted variables in Figure~\ref{fig:merged} (which, in fact, are separated into 3 distinct blocks).

\begin{figure}[t]
\centering
\includegraphics[width=0.95\linewidth]{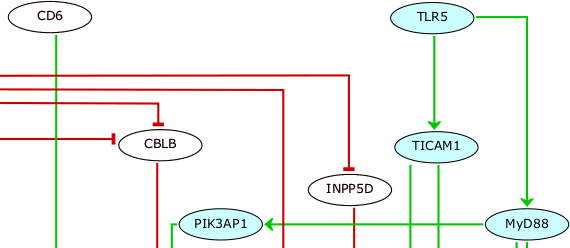}
\caption{Excerpt of the BN TCR-TLR5~\cite{rodriguez2019cooperation} using GinSim~\cite{NALDI2009134}.
}\label{fig:merged}
\end{figure}

We perform three experiments 
 showing how BDB can be successfully used in this context.
%
In order to check whether we can relate the four highlighted species without relating any inputs, or any other species in general, we run our algorithm with query 
the Cartesian product of the 4 highlighted species, and with constraints the Cartesian product of the 16 input species minus the identity (240 pairs).
Our prototype computed a BDB with 16 pairs relating all highlighted species. 
The BDE obtained as the equivalence closure of such BDB contains a block consisting of the  highlighted species, and 1 singleton block for each other species, proving that no further assumptions are required to relate the  highlighted species.
%


As a second example, we use as query 
the Cartesian product of the 3 non-input highlighted species, obtaining a BDB with 12 pairs. 
The BDB again relates all highlighted species, and its equivalence closure gives the same BDE as in the previous experiment.
This suggests that in order to prove the query we need to relate all  highlighted species.
%

In order to confirm this, we perform a third experiment where we explicitly add constraints for preventing $\mathit{TLR5}$ to be related with the other highlighted species.
We get a BDB with 6 pairs only, relating only $\mathit{MyD88}$ with $\mathit{TICAM1}$.
This confirms that it is possible to relate
$\mathit{PIK3AP1}$ with $\mathit{TICAM1}$ and $\mathit{MyD88}$ only if $\mathit{TLR5}$ is also related to them.

All  experiments required less than 0.5s and the solution of less than 85 transportation problems

We note that the first experiment could be replicated with partition refinement using as initial partition the computed BDE, while the second and third one could not because partition refinement can only decrease the size of relations while our \ch{local} approach adds the necessary pairs.

\subsection{Local computation of CTMC lumpability}\label{sec:otflump}
In this section we study the effectiveness of our \ch{local} algorithm with respect to partition refinement in proving that two CTMCs are not equivalent.  We show this on a case study of comparison between a model of a mutual-inhibition (MI) cell switch, a mechanism present in many biological networks (e.g.~\cite{mione,mitwo}), and a simpler switch, which has been shown to correspond to the approximate majority (AM) algorithm from population protocols~\cite{am}. These two systems can be modeled as CRNs as follows:
\begin{align*}
\begin{aligned}[c]
y_0 + z_0 &  \! \stackrel{1}{\to} \!  z_0 + y_1 & y_1 + z_0 &  \! \stackrel{1}{\to} \!  z_0 + y_2 \\
y_2 + y_0 &  \! \stackrel{1}{\to} \!  y_0 + y_1 & y_1 + y_0 &  \! \stackrel{1}{\to} \!  y_0 + y_0 \\
z_2 + z_0 &  \! \stackrel{1}{\to} \!  z_0 + z_1 & z_1 + z_0 &  \! \stackrel{1}{\to} \!  z_0 + z_0 \\
z_0 + y_0 &  \! \stackrel{1}{\to} \!  y_0 + z_1 & z_1 + y_0 &  \! \stackrel{1}{\to} \!  y_0 + z_2
\end{aligned}
\ \ \vline \ \
\begin{aligned}
 x_0 + x_2 &  \! \stackrel{1}{\to} \!  x_2 + x_1 \\
 x_1 + x_2 &  \! \stackrel{1}{\to} \!  x_2 + x_2 \\
 x_2 + x_0 &  \! \stackrel{1}{\to} \!  x_0 + x_1 \\
x_1 + x_0 &  \! \stackrel{1}{\to} \!  x_0 + x_0
\end{aligned}
\end{align*}
where species $y_0$, $y_1$, $y_2$, $z_0$, $z_1$ and $z_2$ refer to MI and $x_0$, $x_1$, $x_2$ refer to AM.  Being able to compare such kinds of networks is relevant in evolutionary biology to assess, for example, if complex cellular mechanisms can be related to less robust, more primordial variants (e.g.,~\cite{CardelliCRN,Gay:2010aa}).

\setlength{\tabcolsep}{4pt}
\begin{table}
\centering
\scalebox{1}{
\begin{tabular}[t]{cc rr l lHH}
\toprule
\multicolumn{2}{c}{\emph{Initial populations}} &
\multicolumn{2}{c}{\emph{Union CTMC}} &
\multicolumn{1}{c}{\emph{BDE}} &
\multicolumn{3}{c}{\emph{BDB}}
\\
\cmidrule(l{2pt}r{2pt}){1-2} \cmidrule(l{2pt}r{2pt}){3-4} \cmidrule(l{2pt}r{2pt}){5-5} \cmidrule(l{2pt}r{2pt}){6-8}
\multicolumn{1}{c}{$x_0,y_0,z_2$} & \multicolumn{1}{c}{$x_2,y_2,z_0$} &
\multicolumn{1}{c}{\emph{States}} & \multicolumn{1}{c}{\emph{Trans.}} &
\multicolumn{1}{c}{\emph{Time (s)}} &
\multicolumn{1}{c}{\emph{Time (s)}} &
\multicolumn{1}{H}{\emph{TP}} &
\multicolumn{1}{H}{$|\mathcal{R}|$}
\\
\midrule
{\color{white}0}2 & {\color{white}0}1 & 93 & 276 & 1.60E-3 &
1.00E-3&\np{1}&0
\\
{\color{white}0}4 & {\color{white}0}2 & 762 & \np{3504} & 7.00E-3 &
1.00E-3&\np{1}&\np{0}
\\
{\color{white}0}6 & {\color{white}0}3 & \np{2979} & \np{16164} & 2.60E-2 &
1.00E-3&\np{1}&\np{0}
\\
{\color{white}0}8 & {\color{white}0}4 & \np{8202} & \np{48624} & 8.90E-2 &
2.00E-3&\np{1}&\np{0}
\\
10 & {\color{white}0}5 & \np{18375} & \np{115140} & 2.89E-1 &
2.00E-3&\np{1}&\np{0}
\\
12 & {\color{white}0}6 & \np{35928} & \np{233856} & 9.10E-1 &
2.00E-3&\np{1}&\np{0}
\\
14 & {\color{white}0}8 & \np{75922} & \np{511984} & {1.18E0} &
{2.00E-3}&\np{1}&\np{0}
\\
16 & 10 & \np{142532} & \np{985504} & {3.10E0} &
3.00E-3&\np{1}&\np{0}
\\
18 & 12 & \np{245550} & \np{1729680} & {5.90E0} &
3.00E-3&\np{1}&\np{0}
\\
20 & 14 & \np{396304} & \np{2832064} & 1.12E1 &
{4.50E-3}&\np{1}&\np{0}
\\
22 & 16 & \np{607658} & \np{4392496} & 1.70E1 &
{6.00E-3}&\np{1}&\np{0}
\\
\bottomrule
 \end{tabular}
}
\caption{Local vs global BDE checks to relate the CTMCs of MI and AM (initial populations not shown are set to 0).
}
\label{table:AMMICTMC}
\end{table}
\setlength{\tabcolsep}{6pt}
It has been established that the species of MI and AM can be related by a BDE when the CRNs are interpreted with ODE mass-action semantics~\cite{DBLP:journals/tcs/CardelliTTV19,DBLP:journals/tcs/CardelliTTV19a}.
Such interpretation can be seen as a deterministic limit description of a CTMC describing discrete molecular interactions, when the number of molecules goes to infinity~\cite{Gillespie77}. In fact, under certain physical conditions, from first principles it is well-known that the ground-truth behavior is given by a CTMC~\cite{Gillespie77}. Starting from an initial population represented by a vector where each component models the amount of elements for each species, 
the CTMC is generated by exhaustively applying every reaction, generating a new CTMC state where the reaction's reagents 
are replaced by the reaction's products, respectively referring to the species appearing in the left- and right-hand-side of the reaction.

%
To see whether MI and AM can be related also under this CTMC interpretation, i.e., under the assumption of finite populations for the species, it is possible to initialize the \ch{local} algorithm with the pair consisting of the respective initial states of the two CTMCs.  Similarly, the partition refinement algorithm can be initialized with a partition with a block containing the two initial states.

The answer to the above comparison question is negative.  Table~\ref{table:AMMICTMC} shows the runtimes of the \ch{local} and the (global) partition-refinement algorithms to disprove the relation between the initial states of the networks' CTMCs with varying initial populations as specified by the first two columns (species not mentioned are set to 0). Increasing initial population counts leads to a combinatorial explosion of the underlying CTMC: the third and fourth columns provide the number of states and transitions, respectively, of the CTMC formed by the disjoint union of the CTMCs of both networks. While the runtime for partition refinement (column \emph{BDE}) grows with the CTMC size (as expected by the computational complexity of CTMC lumping algorithms~\cite{DBLP:journals/ipl/DerisaviHS03}), the \ch{local} algorithm always yields an empty relation by analyzing only one transportation problem within a few milliseconds (column \emph{BDB}).


\subsection{Local FDB computation on epidemiological models}\label{sec:sis}

\begin{figure}
\centering
\includegraphics[width=1.0\linewidth]{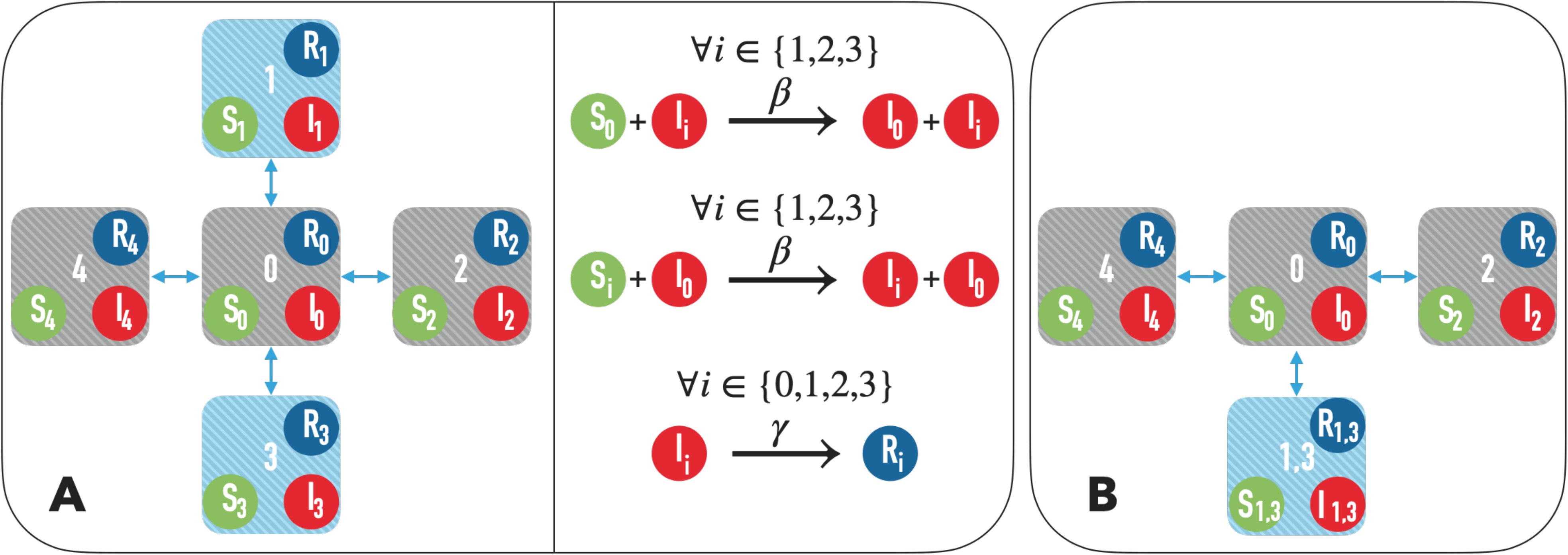}
\caption{
(A) SIR on a star network where species in the central node interact with those in the peripheral ones.
(B) FDB reduction of (A) relating pairwise the variables in nodes 1 and 3
de facto obtaining a coarse-grained 
network.
}
\label{wfig:SIS}
\end{figure}

We study how FDB can be applied on epidemiological SIR models evolving on graphs (e.g.,~\cite{pastor2015epidemic}). In this model, each vertex describes an individual that: can be susceptible (S) to an infection with a rate proportional to the number of
infected (I) neighbors (and parameter $\beta$); that can recover (R) from the infection to gain immunity (with parameter $\gamma$).
%
Fig.~\ref{wfig:SIS}(A) shows the model for a star topology.
%

Since nodes in the graph may represent locations, a coarse-grained representation may be helpful to reason about the system in terms of larger areas.
For example, the cumulative dynamics of nodes 1 and 3  could be obtained by running the FDB \ch{local} algorithm with the query $\{S_1,S_3\}\times\{S_1,S_3\}$.
By solving 295 transportation problems in about 10\,ms the output FDB has 21 pairs: a reflexive pair for each of the 15 species, and the 6 pairs
$(S_1,S_3), (S_3,S_1), (I_1,I_3), (I_3,I_1), (R_1,R_3), (R_3,R_1)$.
Intuitively, as depicted in Fig.~\ref{wfig:SIS}(B), it does not distinguish among nodes 1 and 3, while leaving unaltered the rest of the graph.
%

This problem can be semi-automatically addressed using partition refinement, but it requires to perform trial-and-error for the appropriate initial partition to use.
%
%
E.g., for an initial partition where the only non-singleton block is $\{S_1,S_3\}$, the coarsest FDE refinement is
$\{\{S_0\}$, $\{I_0\}$, $\{S_1,S_3\}$, $\{S_2,S_4\}$, $\{I_0,I_1,I_2,I_3\}$, $\{R_0,R_1,R_2,R_3\}\}$, but it aggregates too much. 
The desired reduction is obtained for initial partition $\{S_1,S_3\}$, $\{I_1,I_3\}$, $\{R_1,R_3\}$, $\{S_0,S_2,S_4,I_0,I_2,I_4,R_0\}$, $\{R_2\}$, $\{R_4\}$.

\section{Conclusions and Future Work}
\ch{We have presented an algorithm for computing bisimulations over variables of a system of polynomial differential equations using a \ch{local} approach, which complements available \ch{global} methods based on partition refinements.
Crucial to our approach is the introduction of a novel coupling method for reasoning about equivalences over polynomials.
Our algorithm computes bisimulations which relate ODE solutions \emph{exactly}. Given that probabilistic couplings have proved instrumental for the development of \emph{bisimilarity distances}~\cite{BreugelW01,DesharnaisGJP04}, a natural question is whether our approach can be lifted to metric spaces for reasoning about approximate bisimulations of polynomial differential equations. We intend to tackle this in future work.}

\paragraph*{\textbf{Acknowledgments}} This work was supported by the Poul Due Jensen Foundation, grant 883901. 

\balance
\bibliography{onthefly}
		
\onecolumn
\appendix


\section*{Proofs}



\begin{proof}[Proof of Theorem~\ref{thm:linStrassen}]
($\ref{itm:linStrass1} \Rightarrow \ref{itm:linStrass2}$)
Let $v \in \RE^X$ such that $\bigwedge_{(x,y) \in \R} v_x \leq v_y$. By hypothesis there exists $\omega \in \Gamma_\lin(g,h)$ such that $\supp(\omega) \subseteq \R$.
Then, the following hold
\begin{align*}
(g^{+} + h^{-})(v) &= \sum_{x \in X} (g^{+} + h^{-})(x) \cdot v_x \\
&= \sum_{x \in X} \big( \sum_{y \in X} \omega(x,y) \big) \cdot v_x \tag{$\omega \in \Gamma_\lin(g,h)$} \\
&= \sum_{x,y \in X} \omega(x,y) \cdot v_x \\
&= \sum_{(x,y) \in \R} \omega(x,y) \cdot v_x \,. \tag{$\supp(\omega) \subseteq \R$}
\end{align*}
Analogously, $(g^{-} + h^{+})(v) = \sum_{(x,y) \in \R} \omega(x,y) \cdot v_y$. Since for all $(x,y) \in \R$, $v_x \leq v_y$ and $\omega(x,y) \geq 0$,
we have $(g^{+} + h^{-})(v) \leq (g^{-} + h^{+})(v)$, which is equivalent to $g(v) \leq h(v)$.

($\ref{itm:linStrass1} \Rightarrow \ref{itm:linStrass3}$) Follows similarly to the above.

($\ref{itm:linStrass2} \Rightarrow \ref{itm:linStrass1}$)
Define $F$ and $F_1$ as follows
\begin{align*}
F &= \{v \in \RE^X \mid  v_x - v_y \leq 0, \text{ for all $(x,y) \in \R$} \}\\
F_1 &= F \cap \{v \in \RE^X \mid  v_x - v_y \leq 1, \text{ for all $(x,y) \notin \R$} \} \,.
\end{align*}
Note that \ref{itm:linStrass2} is equivalent to $0 \geq \sup_{v \in F}  g(v) - h(v)$.
Moreover, the following inequalities hold
\begin{align*}
\sup_{v \in F}  {} &g(v) - h(v) \\
&\geq \sup_{v \in F_1}  g(v) - h(v) \tag{$F_1 \subseteq F$} \\
&= \sup_{v \in F_1} \sum_{x \in X} \big( (g^{+} + h^{-})(x) - (g^{-} + h^{+})(x) \big) \cdot v_x \\
&= \inf_{\omega \in \Gamma_\lin(g,h)} \sum_{(x,y) \not\in \R} \omega(x,y)
\label{eq:duallin}\tag{*}\\
&\geq 0 \tag{by $\omega(x,y) \geq 0$ for all $x,y \in X$}
\end{align*}
where \eqref{eq:duallin} follows by strong duality for linear programs and the fact that, when $\R$ is an equivalence, $\sup_{v \in F_1} \sum_{x \in X} \big( (g^{+} + h^{-})(x) - (g^{-} + h^{+})(x) \big) \cdot v_x$ is a linear program whose dual is $\inf_{\omega \in \Gamma_\lin(g,h)} \sum_{(x,y) \not\in \R} \omega(x,y)$.

The above implies $0 = \inf_{\omega \in \Gamma_\lin(g,h)} \sum_{(x,y) \not\in \R} \omega(x,y)$ which is equivalent to \ref{itm:linStrass1}.

($\ref{itm:linStrass3} \Rightarrow \ref{itm:linStrass2}$)
It follows by noticing that since $\R$ is symmetric, $\bigwedge_{(x,y) \in \R} v_x = v_y$ is equivalent to $\bigwedge_{(x,y) \in \R} v_x \leq v_y$.
\end{proof}

\begin{proof}[Proof of Theorem~\ref{thm:monStrassen}]
($\ref{itm:monStrass1} \Rightarrow \ref{itm:monStrass2}$)
Let $v \in \RE_{>0}^X$ such that $\bigwedge_{(x,y) \in \R} v_x \leq v_y$. By hypothesis there exists $\rho \in \Gamma_\mon(m,n)$ such that $\supp(\rho) \subseteq \R$. Then, the following hold:
\begin{align*}
m(v) &= \prod_{x \in \X} v_x^{m(x)} \\
&= \prod_{x \in \X} v_x^{\sum_{y \in \X} \rho(x,y)} \tag{$\rho \in \Gamma_\mon(m,n)$} \\
&= \prod_{x,y \in \X} v_x^{\rho(x,y)} \\
&= \prod_{(x,y) \in \R} v_x^{\rho(x,y)} \,. \tag{$\supp(\rho) \subseteq \R$}
\end{align*}
Analogously, $n(v) = \prod_{(x,y) \in \R} v_y^{\rho(x,y)}$. Since for all $(x,y) \in \R$, $v_x \leq v_y$ and $\rho(x,y) \geq 0$ we have $m(v) \leq n(v)$.

($\ref{itm:monStrass1} \Rightarrow \ref{itm:monStrass3}$) Follows similarly to the above.

($\ref{itm:monStrass2} \Rightarrow \ref{itm:monStrass1}$)
Define $F$ and $F_1$ as follows
\begin{align*}
F &= \{v \in \RE_{>0}^X \mid  \log v_x - \log v_y \leq 0, \text{ for all $(x,y) \in \R$} \}  \\
F_1 &= F \cap \{v \in \RE_{>0}^X \mid  \log v_x - \log v_y \leq 1, \text{ for all $(x,y) \notin \R$} \} \,.
\end{align*} 
Recall that for all $a,b \in \RE_{>0}$, $a \leq b$ iff $\log a \leq \log b$. 
Therefore, $F = \{v \in \RE_{>0}^X \mid  v_x - v_y \leq 0, \text{ for all $(x,y) \in \R$} \}$, moreover, 
for all $v \in \RE_{>0}^X$, $m(v) - n(v) \leq 0$ iff $\log m(v) - \log n(v) \leq 0$. 
Hence \ref{itm:monStrass2} is equivalent to $0 \geq \sup_{v \in F}  \log m(v) - \log n(v)$. Moreover, the following inequalities hold
\begin{align*}
\sup_{v \in F} \log m(v) - \log n(v) & \geq \sup_{v \in F_1}  \log m(v) - \log n(v) \tag{$F_1 \subseteq F$} \\
&= \sup_{v \in F_1} \sum_{x \in X} \big( m(x) - n(x) \big) \cdot \log v_x \\
&= \inf_{\rho \in \Gamma_\mon(m,n)} \sum_{(x,y) \not\in \R} \rho(x,y)
\label{dualmon}\tag{*}\\
&\geq 0 \tag{by $\rho(x,y) \geq 0$ for all $x,y \in X$}
\end{align*}
where \eqref{dualmon} follows by strong duality for linear programs. Indeed, $\sup_{v \in F_1} \sum_{x \in X} \big( m(x) - n(x) \big) \cdot \log v_x$ becomes a linear program by applying the change of variable $\log v_x = z_x$ for $x \in X$ with $z \in \RE^X$. Since $\R$ is an equivalence relation, its dual is $\inf_{\rho \in \Gamma_\mon(m,n)} \sum_{(x,y) \not\in \R} \rho(x,y)$.

The above implies $0 = \inf_{\rho \in \Gamma_\mon(m,n)} \sum_{(x,y) \not\in \R} \rho(x,y)$ which is equivalent to \ref{itm:monStrass1}.

($\ref{itm:monStrass3} \Rightarrow \ref{itm:monStrass2}$)
It follows by noticing that since $\R$ is symmetric, $\bigwedge_{(x,y) \in \R} v_x = v_y$ is equivalent to $\bigwedge_{(x,y) \in \R} v_x \leq v_y$. 
\end{proof}

\begin{proof}[Proof of Corollary~\ref{cor:couplingEquiv}]
($\Rightarrow$)
Direct consequence of Theorems~\ref{thm:linStrassen} and \ref{thm:monStrassen}.

($\Leftarrow$) Assume $\R$ is an equivalence. Using Theorem~\ref{thm:monStrassen} one can show that $\mon[\R]$ is an equivalence. Analogously, using Theorem~\ref{thm:linStrassen} one can show that also $\poly[\R]$ is an equivalence.

Fix a representative $x_H \in H$ for each $\R$-equivalence class $H$.
For $p \in \poly[X]$ and  $m \in \mon[X]$, we define
\begin{align*}
p^\R = \sum_{m \in \mon[X]} p(m) \cdot m^\R \,,
&&
m^\R = \prod_{H \in X/\R} x_H^{\sum_{x \in H} m(x)} \,.
\end{align*}
By Theorem~\ref{thm:monStrassen}, $(m, m^\R) \in \mon[\R]$, for all $m \in \mon[X]$. Consequently, by Theorem~\ref{thm:linStrassen}, $(p, p^\R) \in \poly[\R]$ for any $p \in \poly[X]$.

The hypothesis $(\bigwedge_{(x,y) \in \R} v_x = v_y) \Rightarrow p(v) = q(v)$ implies $p^\R(v) = q^\R(v)$,
for all $v \in \RE^X$. Taylor's theorem ensures that $p^\R(m) = q^\R(m)$, for all $m \in \mon[X]$, therefore $p^\R$ and $q^\R$ are identical polynomial expressions. By reflexivity of $\poly[\R]$ we have $(p^\R, q^\R) \in \poly[\R]$. By $(p, p^\R), (q, q^\R) \in \poly[\R]$, symmetry and transitivity of $\poly[\R]$, we conclude $(p,q) \in \poly[\R]$.
\end{proof}

\begin{proof}[Proof of Proposition~\ref{prop:BR}]
By Corollary~\ref{cor:couplingEquiv} and definition of BDB.
\end{proof}

\begin{proof}[Proof of Theorem~\ref{thm:BR2BE}]
\begin{trivlist}
\item
\ref{itm:BR2BE1}
It follows by definition of BDE and the following implications
\begin{align*}
\Big( \bigwedge_{(x,y) \in e(\R)} v_x = v_y \Big)
	\Rightarrow \Big( \bigwedge_{(x,y) \in \R} v_x = v_y \Big)
	\tag{$\R \subseteq e(\R)$ } \\
\Rightarrow \Big( \bigwedge_{(x,y) \in \R} f_x(v) = f_y(v) \Big)
	\tag{Proposition~\ref{prop:BR}} \\
\Rightarrow \Big( \bigwedge_{(x,y) \in e(\R)} f_x(v) = f_y(v) \Big)
	\tag{$=$ equiv.\ rel.}
\end{align*}

\item \ref{itm:BR2BE2} Direct consequence of Corollary~\ref{cor:couplingEquiv}.

\item \ref{itm:BR2BE3} By \ref{itm:BR2BE1}, \ref{itm:BR2BE2}, and Knaster-Tarski fixed-point theorem.
\qedhere
\end{trivlist}
\end{proof}

\begin{proof}[Proof of Theorem~\ref{thm:fe:via:be}]
Let us first assume that $f$ is a linear vector field, i.e., there exists a matrix $J \in \RE^{\X \times \X}$ such that $f(v) = J \cdot v$ for all $v \in \RE^\X$. With this, let matrix $A \in \RE^{\X \times \X}$ be such that its rows constitute the equivalence classes of $\X / \R$ via the relation $a_{x_i,x_j} = 1$ if $(x_i,x_j) \in \R$ and $a_{x_i,x_j} = 0$ otherwise. By~\cite{DBLP:journals/corr/abs-2004-11961}, it follows that $\R$ is a forward equivalence of $J$ if and only if the space spanned by the rows of matrix $A J$ is contained in the space spanned by the rows of matrix $A$. This, in turn, holds true if and only if the space spanned by the columns of matrix $J^T A^T$ is contained in the space spanned by the columns of matrix $A^T$. Since the latter is equivalent to $J^T(U_\R) \subseteq U_\R$, where $U_\R = \{ v \in \RE^\X \mid v_{x_i} = v_{x_j} \text{ for all } (x_i,x_j) \in \R \}$, Theorem 3 of~\cite{popl16} yields the claim. We now drop the assumption of linearity. To this end, we observe that~\cite[Lemma I.1]{DBLP:journals/corr/abs-2004-11961} ensures that $\R$ is an FDE of $f$ if and only if $\R$ is an FDE of each $J_k : \RE^\X \to \RE^\X, v \mapsto J_k v$ from~(\ref{eq:jac:dec}). Thanks to above, this holds true if and only if $\R$ is a BDE of all $J^T_k : \RE^\X \to \RE^\X, v \mapsto J^T_k v$ from~(\ref{eq:jac:dec}).
\end{proof}

\begin{proof}[Proof of Theorem~\ref{thm:fe:on:the:fly}]
All three items are a direct consequence of Definition~\ref{def:FBD}, Theorem~\ref{thm:BR2BE} and Theorem~\ref{thm:fe:via:be}.
\end{proof}

\begin{proof}[Proof of Theorem~\ref{thm:constrBR}]
($\Leftarrow$) Assume $\R \subseteq \mathcal{B}_{C}(\R)$. By definition of $\mathcal{B}_C$, $\mathcal{B}_{C}(\R) \cap C = \emptyset$. Because $\R \subseteq \mathcal{B}_{C}(\R)$ we also have that $\R \cap C = \emptyset$. To prove that $\R$ is a BDB we show that $\R$ is a post-fixed point of $\mathcal{B}$.
\begin{align*}
\R &\subseteq \mathcal{B}_{C}(\R) \tag{hypothesis}\\
&= \mathcal{B}(\R \setminus C) \setminus C  \tag{Eq.~\eqref{eq:BC}} \\
&\subseteq \mathcal{B}(\R \setminus C) \\
&\subseteq \mathcal{B}(\R) \tag{$\R \setminus C \subseteq \R$ and $\mathcal{B}$ monotone}
\end{align*}
($\Rightarrow$) By Definition~\ref{def:constrBDB}, $\R \cap C = \emptyset$ and $\R$ is a BDB. Then \begin{align*}
\R &= \R \setminus C \tag{$\R \cap C = \emptyset$} \\
&\subseteq \mathcal{B}(\R) \setminus C \tag{$\R$ is a BDB} \\
&= \mathcal{B}(\R \setminus C) \setminus C \tag{$\R \cap C = \emptyset$} \\
&= \mathcal{B}_C(\R) \tag{Eq.~\eqref{eq:BC}}
\end{align*}
\end{proof}

\begin{proof}[Proof of Theorem~\ref{thm:onthefy}]
We start by proving that the algorithm always terminates. Denote by $R_i$, $\hat{R}_i$, $Q_i$, and $\hat{Q}_i$ respectively
the value of the variables $R$, $\hat{R}$, $Q$, and $\hat{Q}$ at the beginning of the $i$-th iteration of the while loop (lines~\ref{alg:whilebegin}--\ref{alg:whileend}).
By induction on $i \in \mathbb{N}$ we can prove that
\begin{align}
 &R_i \cup \hat{R}_i \subseteq R_{i+1} \cup \hat{R}_{i+1} \, , &&  Q_i \cup \hat{Q}_i \subseteq Q_{i+1} \cup \hat{Q}_{i+1} \\
 &\hat{R}_i \subseteq \hat{R}_{i+1} \, ,  &&  \hat{Q}_i \subseteq \hat{Q}_{i+1} \label{eq:hatchain} \\
 &R_i \cap \hat{R}_i = \emptyset \, , &&  Q_i \cap \hat{Q}_i = \emptyset \,. \label{eq:disjointchain}
\end{align}
Since $R_i, \hat{R}_i \subseteq \X \times \X$ and $\X$ is finite, the increasing chains
$\{R_i \cup \hat{R}_i\}_{i \in \mathbb{N}}$ and $\{\hat{R}_i \}_{i \in \mathbb{N}}$ are finite.
Let $n \in \mathbb{N}$ be an index that is a limit index for both chains, then the following equalities hold:
\begin{align*}
	R_n &=  (R_n \cup \hat{R}_n) \setminus \hat{R}_n \tag{$R_n \cap \hat{R}_n = \emptyset$} \\
	&= (R_{n+1} \cup \hat{R}_{n+1}) \setminus \hat{R}_n \\
	&= (R_{n+1} \cup \hat{R}_{n+1}) \setminus \hat{R}_{n+1} \\
	&= R_{n+1} \tag{$R_{n+1} \cap \hat{R}_{n+1} = \emptyset$}
\end{align*}
Let $\mon_f$ denote the set of monomials that occur in the vector field $f$. Clearly, $\mon_f$ is finite.
Note that for all $x,y \in \X$ and $\omega \in \Gamma_\lin(f_x,f_y)$, $\supp(\omega) \subseteq \mon_f \times \mon_f$.
Therefore $Q_i, \hat{Q}_i \subseteq \mon_f \times \mon_f$ for all $i \in \mathbb{N}$. By following a similar argument as
before, we can prove that for some $n \in \mathbb{N}$, $Q_n = Q_{n+1}$.
Since the condition of the while loop checks at each iteration $i$ if $(R_i,\hat{R}_i, Q_i,\hat{Q}_i) \neq (R_{i+1}, \hat{R}_{i+1},Q_{i+1}, \hat{Q}_{i+1})$, we have that after finitely many iterations, the condition is falsified. Hence the algorithm terminates.

Let $\mathcal{R}$ be the output of $\onthefly(f,\id{Query},C)$. As explained above, during the last iteration of the while-loop neither $R$ nor $Q$ are changed w.r.t.\ the previous iteration. Hence, during the last iteration of the while-loop, each iteration of the for-loop on $R$ (lines~\ref{alg:loopRbegin}--\ref{alg:loopRend}) as well as each iteration of the for-loop on $Q$ (lines~\ref{alg:loopQbegin}--\ref{alg:loopQend}) executes the first branch of the if statement.
Therefore, for all $(m,n) \in Q$ there exists $\rho \in \Gamma_\mon(m,n)$ such that $R = R \cup \supp(\rho)$. Hence
\begin{align*}
Q &\subseteq \{(m,n) \mid \exists \rho \in \Gamma_\mon(m,n).\, R = R \cup \supp(\rho) \} \\
&= \{(m,n) \mid \exists \rho \in \Gamma_\mon(m,n).\, \supp(\rho) \subseteq R \} \\
 &= \mon[R]
\end{align*}
Analogously, for all $(x,y) \in R$ exists $\omega \in \Gamma_\lin(f_x,f_y)$ such that $Q = Q \cup \supp(\omega)$.
Hence,
\begin{align*}
R &\subseteq \{(x,y) \mid \exists \omega \in \Gamma_{\lin}(f_x,f_y) . \, Q = Q \cup \supp(\omega) \} \\
&\subseteq \{(x,y) \mid \exists \omega \in \Gamma_{\lin}(f_x,f_y) . \, \supp(\omega) \subseteq Q \} \\
&= \{ (x,y) \mid (f_x,f_y) \in \lin[Q] \} \\
&\subseteq \{ (x,y) \mid (f_x,f_y) \in \lin[\mon[R]] \} \tag{$Q \subseteq \mon[R]$}\\
&=  \mathcal{B}(R) \tag{def.\ $\mathcal{B}$}
\end{align*}
This proves that $R$ is a BDB. By $\hat{R}_0 = C$, \eqref{eq:hatchain}
 and \eqref{eq:disjointchain}, we have $R \cap C = \emptyset$. Therefore, by Theorem~\ref{thm:constrBR}, $R$ is a $C$-constrained BDB.


We show that for any $i \in \mathbb{N}$
\begin{equation}
\mon[(\X\times \X) \setminus \hat{R}_i] \subseteq (\mon \times \mon) \setminus \hat{Q}_i
\label{wrongmonomialpairs}
\end{equation}
For $i = 0$ the above inclusion holds because $Q_0 = \emptyset$. Assume
towards a contradiction that there exists $i > 0$ such that $(m, n) \in \mon[(\X\times \X) \setminus \hat{R}_i]$ and $(m, n) \in \hat{Q}_i$.
By construction of the algorithm, if $(m, n) \in \hat{Q}_i$ then at some iteration $j < i$ of the while-loop, the pair $(m, n)$ was moved from
$Q$ to $\hat{Q}$ by executing line~\ref{alg:movetoQhat}. This means that the condition of the if statement in line~\ref{alg:ifRho} was false, that is
$\supp(\rho) \cap \hat{R}_j \neq \emptyset$ for all $\rho \in \Gamma_\mon(m,n)$.
This implies also that $\supp(\rho) \cap \hat{R}_i \neq \emptyset$ because $j < i$ implies $\hat{R}_j \subseteq \hat{R}_i$.
This contradicts the fact that $(m, n) \in \mon[(\X\times \X) \setminus \hat{R}_i]$ because
\begin{align*}
	&\mon[(\X\times \X) \setminus \hat{R}_i] \\
	&= \{ (m,n) \mid \exists \rho \in \Gamma_\mon(m,n) .\, \supp(\rho) \subseteq (\X\times \X) \setminus \hat{R}_i \} \\
	&=  \{ (m,n) \mid \exists \rho \in \Gamma_\mon(m,n) .\, \supp(\rho) \cap \hat{R}_i = \emptyset \} \,.
\end{align*}
Now we show by induction on $i \in \mathbb{N}$ that
\begin{equation}
\mathcal{B}_C^{i}((\X \times \X) \setminus C) \subseteq (\X \times \X) \setminus \hat{R}_i
\label{wrongspecies}
\end{equation}
The base case ($i = 0$) holds because $\hat{R}_0 = C$.
For the inductive step, consider $i \geq 0$, then we have
\begin{align*}
	\mathcal{B}_C^{i+1}((\X \times \X) \setminus C) & = \mathcal{B}_C(\mathcal{B}_C^{i}((\X \times \X) \setminus C)) \\
	&\subseteq \mathcal{B}_C((\X \times \X) \setminus \hat{R}_i)
	\tag{ind.\ hp. and $\cal{B}_C$ monotone} \\
	&\subseteq \mathcal{B}((\X \times \X) \setminus \hat{R}_i) \setminus C
	\tag{Eq.~\eqref{eq:BC} and $C \subseteq \hat{R}_i$} \\
	&\subseteq \{ (x,y) \mid (f_x, f_y) \in \lin[\mon[(\X \times \X) \setminus \hat{R}_i]] \}  \tag{def.\ $\mathcal{B}$}\\
	&\subseteq \{ (x,y) \mid (f_x, f_y) \in \lin[(\mon \times \mon) \setminus \hat{Q}_i] \} \tag{Eq.\ \eqref{wrongmonomialpairs}}\\
	&= \{ (x,y) \mid \exists \omega \in \Gamma_\lin(f_x, f_y). \, \supp(\omega) \cap \hat{Q}_i = \emptyset \} \\
	&\subseteq (\X \times \X) \setminus \hat{R}_{i+1}  \,.
\end{align*}

Then,
\begin{align*}
\mathtt{gfp}(\mathcal{B}_C) &= \textstyle \inf_{i \in \mathbb{N}} \mathcal{B}_C^{i}(\X\times \X) \tag{Kleene fixed-point thm.}\\
& \subseteq \textstyle \inf_{i \in \mathbb{N}} \big( (\X \times \X) \setminus \hat{R}_i \big) \tag{Eq.\ \eqref{wrongspecies} and Eq.~\eqref{eq:BC}} \\
&=  (\X \times \X) \setminus \textstyle \sup_{i \in \mathbb{N}} \hat{R}_i \,.
\end{align*}
Therefore, the value of $\hat{R}$ at the end of the while-loop satisfies $\mathtt{gfp}(\mathcal{B}_C) \cap \hat{R} = \emptyset$.
Since $R_0 = \id{Query} \setminus C$, $\hat{R}_0 = C$ and the chain $\{R_i \cup \hat{R}_i\}_{i \in \mathbb{N}}$ is increasing, at the end of the while-loop
any pair $(x,y) \in \id{Query}$ either belongs to $R$ or $\hat{R}$. Since $R$ is a $C$-constrained BDB we have
that $(x,y) \in R$ implies $(x,y) \in \mathtt{gfp}(\mathcal{B}_C)$. Conversely, if $(x,y) \not\in R$, then $(x,y) \in \hat{R}$, which in turn implies that $(x,y) \not\in \mathtt{gfp}(\mathcal{B}_C)$.
\end{proof}

\begin{proof}[Proof of Theorem~\ref{thm:alg:otf:comp}]
Assume that each polynomial $f_{x_i}$ in $f$ is of the form $f_{x_i} = \sum_{l = 1}^{L_i} \alpha_{i,l} m_{i,l}$ for each $x_i \in \X$. Let $k$ be the maximum number of monomials occurring in each polynomial, that is $k = \max_{i} L_i$; and let $h$ the the greatest number of variables occurring in each monomial $m_{i,l}$.
Using Orlin's algorithm~\cite{Olrin88} one can find a linear coupling $\omega$ satisfying the condition of line~\ref{alg:ifOmega} in time $\mathcal{O}(k + k \log k)$, by solving an uncapacitated minimum cost flow problem. Analogously, finding the monomial coupling satisfying the condition of line~\ref{alg:ifRho} takes $\mathcal{O}(h + h \log h)$.

We assume $R$ and $\hat{R}$ to be implemented as two $|\X| \times |\X|$ boolean matrices, while $Q$ and $\hat{Q}$ are assumed to be implemented as two $|\mon_f| \times |\mon_f|$ boolean matrices.
With this in place, executing lines~\ref{alg:init}--\ref{alg:initend} takes time $\mathcal{O}(|\X|^2)$.

A single execution of the for-loop in lines~\ref{alg:loopRbegin}--\ref{alg:loopRend} iterates at most $|\X|^2$ times. As said before executing line~\ref{alg:ifOmega} takes $\mathcal{O}(k + k \log k)$, while executing line~\ref{alg:up-to-Mg}  takes $\mathcal{O}(k^2)$ since $|\supp(\omega)| \leq k^2$; and executing line~\ref{alg:moveR} takes constant time. Overall, one execution of lines~\ref{alg:loopRbegin}--\ref{alg:loopRend} takes
$\mathcal{O}(|\X|^2 (k^2 + k + k \log k) ) = \mathcal{O}(|\X|^2 k^2)$.

Analogously, one can show that a single execution of the for-loop in lines~\ref{alg:loopQbegin}--\ref{alg:loopQend} takes $\mathcal{O}(|\mon_f|^2 (h^2 + h +  h \log h)) = \mathcal{O}(|\mon_f|^2 h^2)$.

The number of iterations of the while-loop (lines~\ref{alg:whilebegin}--\ref{alg:whileend}) is bounded by $2 (|\X|^2 + |\mon_f|^2)$ because, as discussed in the proof of Theorem~\ref{thm:onthefy},
$\{ R_i \cup \hat{R}_i \}_{i \in \mathbb{N}}$ and $\{ \hat{R}_i \}_{i \in \mathbb{N}}$ are increasing chains bounded by $\X \times \X$, while $\{ Q_i \cup \hat{Q}_i \}_{i \in \mathbb{N}}$ and $\{ \hat{Q}_i \}_{i \in \mathbb{N}}$ are increasing chains bounded by $\mon_f \times \mon_f$.

Taking into account that $|\X| \leq |\mon_f|$, the time-complexity of Algorithm~\ref{alg:otf} simplifies to follows
\end{proof}

\begin{proof}[Proof of Theorem~\ref{thm:findFDB}]
Similar arguments used for Theorem~\ref{thm:onthefy}
\end{proof}

\begin{algorithm}[pt]
\begin{codebox}
\Procname{$\jacdec\big(\text{Polynomial vector field } f = (f_{x_i})_{x_i \in \X} \text{ with } f_{x_i} = \sum_{l=1}^{L_i} \alpha_{i,l} m_{i,l}\big)$} \li $M \gets \emptyset$
\li \For \textbf{each} $x_i, x_j \in \X$ \Do 
\li         \For \textbf{each} $1 \leq l \leq L_i$ \Do
    \li         \If $m_{i,l}(x_j) > 0$ \Then
    \li             $n \gets m_{i,l}$
    \li             $n(x_j) \gets m_{i,l}(x_j) - 1$
    \li         \If $n \notin M$ \Then
    \li             $M \gets M \cup \{n\}$
    \li             $J_n \gets \emptyset$
                \End
    \li             $J_n(x_i,x_j) = J_n(x_i,x_j) + m_{i,l}(x_j) \cdot \alpha_{i,l}$
                \End
    \End
\End \label{alg:loopRend}
\li \Return $\{ J_n \mid n \in M \}$
\end{codebox}
\caption{Computation of the Jacobi matrix decomposition~(\ref{eq:jac:dec}).}\label{alg:jac:dec}
\end{algorithm}

\begin{proof}[Proof of Lemma~\ref{lem:jac:dec}]
Since $\sum_{x_i \in \X} L_i = |\mon_f|$, there are at most $|\mon_f|$ pairwise different monomials in the vector field. This and the fact that each monomial gives rise to at most $|\X|$ further monomials by means of partial differentiation, implies that the complexity of Algorithm~\ref{alg:jac:dec} is bounded by $\mathcal{O}(|\mon_f| |\X|)$ (provided that matrices are stored as sparse matrices, i.e., lists). With this, both estimations follow by noting that a partial differentiation of a monomial $m_{i,l}$ either gives rise to a new matrix $J_n$ with exactly one non-zero entry or updates (possibly a non-zero) entry of a previously created matrix $J_n$.
\end{proof}

\begin{proof}[Proof of Theorem~\ref{thm:complexityFindFDB}]
Assume $R$ and $\hat{R}$ to be implemented as two $|\X| \times |\X|$ boolean matrices.
With this in place, executing lines~\ref{lin:FDBinit} takes time $\mathcal{O}(|\X|^2)$.
By Lemma~\ref{lem:jac:dec}, executing line~\ref{lin:jacdec} takes $\mathcal{O}(|\X||\mon_f|)$.

Using Orlin's algorithm~\cite{Olrin88} one can find a linear coupling $\omega_f$ satisfying the condition of line~\ref{alg:omegafk} in time $\mathcal{O}(|\X| + |\X| \log |\X|)$ by solving an uncapacitated minimum cost flow problem, and by Lemma~\ref{lem:jac:dec}, $\kappa \leq |\X||\mon_f|$. Therefore, a single execution of line~\ref{alg:omegafk} takes $\mathcal{O}(|\X|^2 |\mon_f| \log |\X|)$.
Executing line~\ref{lin:kR} takes $\mathcal{O}(|\X|^3 |\mon_f|)$, while line~\ref{lin:kmove} takes constant time.
Therefore, an execution of the for-loop in lines~\ref{lin:kforbegin}--\ref{lin:kforend} takes
$\mathcal{O}(|\X|^5 |\mon_f|)$ because $|R| \leq |X|^2$.

The number of iterations of the while-loop (lines~\ref{lin:whilefdb}--\ref{lin:kwhileend}) is bounded by $2 |\X|^2$ because, $\{ R_i \cup \hat{R}_i \}_{i \in \mathbb{N}}$ and $\{ \hat{R}_i \}_{i \in \mathbb{N}}$ are increasing chains bounded by $\X \times \X$.
Therefore, Algorithm~\ref{alg:findFBD} runs in time $\mathcal{O}(|\X|^7 |\mon_f|)$.
\end{proof}

\begin{proof}[Proof of Theorem~\ref{thm:uptoAlg}]
Termination and \ref{itm:2:thm:uptoAlg} follow analogously to~Thm.\ref{thm:onthefy}. Here we focus on \ref{itm:1:thm:uptoAlg}. Let $R$ be the output of $\onthefly(f,\id{Query},C,g)$. In the last iteration of the while-loop neither $R$ nor $Q$ are changed w.r.t.\ the previous iteration.
Hence, during the last iteration of the while-loop, each iteration of the for-loop on $R$ (lines~\ref{alg:loopRbegin}--\ref{alg:loopRend}) as well as each iteration of the for-loop on $Q$ (lines~\ref{alg:loopQbegin}--\ref{alg:loopQend}) executes the first branch of the if statement.

Therefore, for all $(m,n) \in Q$ there exists $\rho \in \Gamma_\mon(m,n)$ such that
$R = R \cup ( \supp(\rho) \setminus g(R))$. The following implications hold
\begin{align*}
&R = R \cup ( \supp(\rho) \setminus g(R)) \\
&\implies \supp(\rho) \setminus g(R) \subseteq R \\
&\implies  \supp(\rho) \setminus (g(R) \setminus C) \subseteq R \tag{$\supp(\rho) \cap C = \emptyset$}\\
&\implies \supp(\rho) \setminus (g(R) \setminus C) \subseteq R \setminus C \tag{$R \cap C = \emptyset$} \\
&\implies \supp(\rho) \setminus (g(R) \setminus C) \subseteq (g(R) \setminus C) \tag{$R \subseteq g(R)$} \\
&\implies \supp(\rho) \subseteq g(R) \setminus C \,.
\end{align*}
Therefore $Q \subseteq \mon[g(R) \setminus C]$.

Analogously, for all $(x,y) \in R$ there exists $\omega \in \Gamma_\lin(f_x,f_y)$ such that $Q = Q \cup (\supp(\omega) \setminus \mon[g(R) \setminus C])$.
The following implications hold
\begin{align*}
&Q = Q \cup (\supp(\omega) \setminus \mon[g(R) \setminus C]) \\
&\implies \supp(\omega) \setminus \mon[g(R) \setminus C] \subseteq Q \\
&\implies \supp(\omega) \setminus \mon[g(R) \setminus C] \subseteq \mon[g(R) \setminus C] \\ 
&\implies \supp(\omega) \subseteq \mon[g(R) \setminus C] \,.
\end{align*}

Form this we have
\begin{align*}
R 
&\subseteq \{(x,y) \mid \exists \omega \in \Gamma_{\lin}(f_x,f_y) . \, \supp(\omega) \subseteq \mon[g(R) \setminus C] \} \\
&= \{ (x,y) \mid (f_x,f_y) \in \lin[\mon[g(R) \setminus C]] \} \\
&=  \mathcal{B}(g(R) \setminus C) \tag{def.\ $\mathcal{B}$}
\end{align*}
Since $R \subseteq \mathcal{B}(g(R) \setminus C)$ and $R \cap C = \emptyset$
we have $$\mathcal{B}(g(R) \setminus C) = \mathcal{B}(g(R) \setminus C) \setminus C = \mathcal{B}_C(g(R)) \,.$$ Therefore, $R$ is a $(\mathcal{B}_C \circ g)$-simulation.
\end{proof}

\begin{proof}[Proof of Lemma~\ref{lem:B-compatible}]
Note that an extensive function $g$ that satisfies
\begin{equation}
g \circ \mathcal{B} \circ g = \mathcal{B} \circ g, \label{eq:preservesg}
\end{equation}
is also $\mathcal{B}$-compatible, because
\begin{align*}
g \circ \mathcal{B} &\subseteq g \circ \mathcal{B} \circ g
\tag{$g \circ \mathcal{B}$ monotone, $g$ extensive} \\
&= \mathcal{B} \circ g \tag{by \eqref{eq:preservesg} }\,.
\end{align*}

Since $r$, $s$, $t$, and $e$ are extensive functions, we are only left to prove that each satisfy \eqref{eq:preservesg}. By definition of $\cal{B}$, this corresponds to checking that for $g \in \{r, s, t, e\}$ it holds
$\poly[g(\R)] = g(\poly[g(\R)])$. The property holds trivially for $r$ and $s$ by definition of linear and monomial couplings. $\poly[e(\R)] = e(\poly[e(\R)])$, follow by Theorem~\ref{thm:monStrassen} and Theorem~\ref{thm:linStrassen}, which can be respectively used to prove that the monomial lifting and the linear lifting of an equivalence relation are equivalence relations.

Consider now $t$. We start proving that the monomial lifting of an a transitive relation is transitive, i.e., $t(\mon[t(\R)]) = \mon[t(\R])]$. Assume $(m,n),(n,o) \in \mon[t(\R])]$, we will show that $(m,o) \in \mon[t(\R])]$. For this, it will be convenient to highlight that the existence of a monomial coupling $\rho_{mn} \in \Gamma_{\mon}(m,n)$ corresponds one-to-one to the existence of a feasible network flow on the bipartite directed graph $G(m,n) = (\X_m \uplus \X_n ,E(m,n))$ with
\begin{gather*}
\X_m = \{ x_m \mid x \in \X \} \, , \quad
\X_n = \{ y_n \mid y \in \X \} \,, \\
E(m,n) = \{(x_m, y_n) \mid (x,y) \in \supp(\rho_{mn})\},
\end{gather*}
where each node supply/demand is defined as $b(x_m) = m(x)$ for $x_m \in \X_m$, and $b(y_n) = - n(y)$ for $y_n \in \X_n$.

With this in place, if there exist $\rho_{mn} \in \Gamma_{\mon}(m,n)$ and $\rho_{no} \in \Gamma_{\mon}(n,o)$ such that $\supp(\rho_{mn}), \supp(\rho_{no}) \subseteq t(\R])$, one can compose the two networks arising from the couplings ensuring that there exists a feasible network flow for the directed graph $G(m,n,o) = (\X_m \uplus \X_n \uplus \X_o, E(m,n,o))$ where $E(m,n,o) = E(m,n) \cup E(n,o)$ and each node supply/demand is
defined as $b(x_m) = m(x)$ for $x_m \in \X_m$, $b(y_n) = 0$ for $y_n \in \X_n$, and $b(z_o) = -o(z)$ for $z_o \in \X_0$. Intuitively, the nodes $\X_n$ become ``transhipment" nodes, while the nodes in $\X_m$ and $\X_o$ will be respectively source and target nodes.

By removing all ``transhipment" nodes from the above network and connecting source nodes to reachable target nodes, we obtain a network flow with graph $G(m,o) = (\X_m  \uplus \X_o, E(m,o))$ where
$$E(m,o) {=} \{ (x_m, z_o) \mid (x_m, y_n) \in E(m,n) \land (y_n, z_o) \in E(n,o) \}$$
and with each node supply/demand defined as $b(x_m) = m(x)$ for $x_m \in \X_m$, and $b(z_o) = - o(z)$ for $z_o \in \X_o$. The existence of a feasible flow for the latter network is ensured by the existence of a feasible flow for the composite network. Therefore, $(m,o) \in \mon[t(\R)]$.
Analogously, we can prove that $\poly[t(\R)] = \lin[\mon[t(\R)]]$ is a transitive relation, leveraging on the fact that the know that $\mon[t(\R)]$ is a transitive relation.
\end{proof}

\begin{proof}[Proof of Lemma~\ref{lem:reflupto}]
As noted in Lemma~\ref{lem:B-compatible} it suffice to show $r \circ \mathcal{B}_{C} \circ r = \mathcal{B}_{C} \circ r$. We prove the two inclusions separately.
($\supseteq$) Holds true because $r$ is extensive.
($\subseteq$) Let $(x,y) \in r (\mathcal{B}_{C}(r(\R)))$. Since $r (\mathcal{B}_{C}(r(\R))) = \mathtt{id} \cup \mathcal{B}_{C}(r(\R))$, two cases are possible. If $(x,y) \in \mathcal{B}_{C}(r(\R))$, then we are done. Otherwise we have that $x = y$. Note that the identity relation $\mathtt{id}$ is an equivalence therefore, by Corollary~\ref{cor:couplingEquiv}, $(f_x,f_x) \in \poly[\mathtt{id}]$ for all $x \in \X$. With this in mind, the following chain of implications hold
\begin{align*}
(f_x,f_x) \in \poly[\mathtt{id}]
&\implies (f_x,f_x) \in \poly[\mathtt{id} \cup (\R \setminus C)] \tag{$\poly$ monotone} \\
&\implies (f_x,f_x) \in \poly[r(\R) \setminus C] \tag{$r(\R) = \mathtt{id} \cup \R$ and $C \cap \mathtt{id} = \emptyset$}
\end{align*}
Since $C \cap \mathtt{id} = \emptyset$, by definition of $\mathcal{B}_C$, we conclude that $(x,x) \in \mathcal{B}_C(r(\R))$.
\end{proof}

\begin{proof}[Proof of Lemma~\ref{lem:BsC-compatible}]
As noted in Lemma~\ref{lem:B-compatible} it suffice to show $s \circ \mathcal{B}_{s(C)} \circ s = \mathcal{B}_{s(C)} \circ s$. We prove the two inclusions separately.
($\supseteq$) Holds true because $s$ is extensive.

($\subseteq$) Let $(x,y) \in s (\mathcal{B}_{s(C)}(s(\R)))$. If $(x,y) \in \mathcal{B}_{s(C)}(s(\R))$, then we are done. Otherwise, we have $(y,x) \in \mathcal{B}_{s(C)}(s(\R))$, which implies
$s(\{ (x,y) \}) \cap s(C) = \emptyset$ and $(f_y, f_x) \in \poly[s(\R)\setminus s(C)]$.

Therefore, there exist $\omega_{yx} \in \Gamma_\lin(f_y, f_x)$ such that $\supp(\omega) \subseteq \mon[s(\R) \setminus s(C)]$, and for all $(n,m) \in \supp(\omega_{yx})$, there exists $\rho_{nm} \in \Gamma_{\mon}(n,m)$ such that $\supp(\rho_{nm}) \subseteq s(\R) \setminus s(C)$.

Define $\omega_{xy} \in \Gamma_\lin(f_x, f_y)$ as $\omega_{xy}(m,n) = \omega_{yx}(n,m)$ for all $n,m \in \mon$. Then, for $(m,n) \in \supp(\omega_{x,y})$, define $\rho_{mn} \in \Gamma_{\mon}(m,n)$ as $\rho_{mn}(x,y) = \rho_{nm}(y,x)$ for all $x,y \in \X$.
It is easy to see that if $(m,n) \in \supp(\omega_{xy})$, then $\supp(\rho_{mn}) \subseteq s(\R) \setminus s(C)$, i.e., $(m,n) \in \mon[s(\R) \setminus s(C)]$. Hence, $(f_x, f_y) \in \poly[s(\R) \setminus s(C)]$.
Since $(x,y) \not\in s(C)$, then $(x,y) \not\in C$. Therefore, $(x,y) \in s (\mathcal{B}_{s(C)}(s(\R)))$.
%
\end{proof}

\end{document}